\newtheorem{theorem}{Theorem}[section]
\newtheorem{lemma}[theorem]{Lemma}
\newtheorem{proposition}[theorem]{Proposition}
\newtheorem{corollary}[theorem]{Corollary}
\theoremstyle{remark}
\newtheorem{remark}{Remark}
\newtheorem{definition}{Definition}[section]
\newcommand\nc\newcommand
\nc\bfa{{\boldsymbol a}}\nc\bfA{{\boldsymbol A}}\nc\cA{{\EuScript A}}
\nc\bfb{{\boldsymbol b}}\nc\bfB{{\boldsymbol B}}\nc\cB{{\EuScript B}}
\nc\bfc{{\boldsymbol c}}\nc\bfC{{\boldsymbol C}}\nc\cC{{\mathscr C}}
\nc\bfd{{\boldsymbol d}}\nc\bfD{{\boldsymbol D}}\nc\cD{{\EuScript D}}
\nc\bfe{{\boldsymbol e}}\nc\bfE{{\boldsymbol E}}\nc\cE{{\EuScript E}}
\nc\bff{{\boldsymbol f}}\nc\bfF{{\boldsymbol F}}\nc\cF{{\mathscr F}}
\nc\bfg{{\boldsymbol g}}\nc\bfG{{\boldsymbol G}}\nc\cG{{\EuScript G}}
\nc\bfh{{\boldsymbol h}}\nc\bfH{{\boldsymbol H}}\nc\cH{{\mathcal H}}
\nc\bfi{{\boldsymbol i}}\nc\bfI{{\boldsymbol I}}\nc\cI{{\mathcal I}}
\nc\bfj{{\boldsymbol j}}\nc\bfJ{{\boldsymbol J}}\nc\cJ{{\EuScript J}}
\nc\bfk{{\boldsymbol k}}\nc\bfK{{\boldsymbol K}}\nc\cK{{\EuScript K}}
\nc\bfl{{\boldsymbol l}}\nc\bfL{{\boldsymbol L}}\nc\cL{{\EuScript L}}
\nc\bfm{{\boldsymbol m}}\nc\bfM{{\boldsymbol M}}\nc\cM{{\EuScript M}}
\nc\bfn{{\boldsymbol n}}\nc\bfN{{\boldsymbol N}}\nc\cN{{\EuScript N}}
\nc\bfo{{\boldsymbol o}}\nc\bfO{{\boldsymbol O}}\nc\cO{{\EuScript O}}
\nc\bfp{{\boldsymbol p}}\nc\bfP{{\boldsymbol P}}\nc\cP{{\EuScript P}}
\nc\bfq{{\boldsymbol q}}\nc\bfQ{{\boldsymbol Q}}\nc\cQ{{\EuScript Q}}
\nc\bfr{{\boldsymbol r}}\nc\bfR{{\boldsymbol R}}\nc\cR{{\EuScript R}}
\nc\bfs{{\boldsymbol s}}\nc\bfS{{\boldsymbol S}}\nc\cS{{\EuScript S}}
\nc\bft{{\boldsymbol t}}\nc\bfT{{\boldsymbol T}}\nc\cT{{\EuScript T}}
\nc\bfu{{\boldsymbol u}}\nc\bfU{{\boldsymbol U}}\nc\cU{{\EuScript U}}
\nc\bfv{{\boldsymbol v}}\nc\bfV{{\boldsymbol V}}\nc\cV{{\mathscr V}}
\nc\bfw{{\boldsymbol w}}\nc\bfW{{\boldsymbol W}}\nc\cW{{\mathscr W}}
\nc\bfx{{\boldsymbol x}}\nc\bfX{{\boldsymbol X}}\nc\cX{{\EuScript X}}
\nc\bfy{{\boldsymbol y}}\nc\bfY{{\boldsymbol Y}}\nc\cY{{\mathscr Y}}
\nc\bfz{{\boldsymbol z}}\nc\bfZ{{\boldsymbol Z}}\nc\cZ{{\EuScript Z}}
\nc\reals{{\mathbb R}}
\nc{\integers}{{\mathbb Z}}
\nc{\ff}{{\mathbb F}}
\nc{\remove}[1]{}
\DeclareSymbolFont{bbold}{U}{bbold}{m}{n}
\DeclareSymbolFontAlphabet{\mathbbold}{bbold}
\DeclareMathOperator{\Ent}{Ent}
\DeclareMathOperator{\supp}{supp}
\newcommandx{\rednote}[2][1=]{\todo[linecolor=red,backgroundcolor=red!25,bordercolor=red,#1]{#2}}
\newcommandx{\bluenote}[2][1=]{\todo[linecolor=blue,backgroundcolor=blue!25,bordercolor=blue,#1]{#2}}
\newcommandx{\yellownote}[2][1=]{\todo[linecolor=yellow,backgroundcolor=yellow!25,bordercolor=yellow,#1]{#2}}
\newcommandx{\greennote}[2][1=]{\todo[inline,linecolor=olive,backgroundcolor=green!25,bordercolor=olive,#1]{#2}}
\DeclareMathOperator*{\argmax}{\arg\!\max}
\title{Smoothing of binary codes, uniform distributions, and applications}
\author[]{Madhura Pathegama}\thanks{The authors are with the Department of ECE and Institute for Systems Research, University of Maryland, College Park, MD 20817. Emails: \{madhura,abarg\}@umd.edu. Research supported by NSF grants CCF-2110113 (NSF-BSF) and CCF-2104489.}
\author[]{Alexander Barg}
\begin{document}
\maketitle

\begin{abstract}

The action of a noise operator on a code transforms it into a distribution on the respective space. Some common examples from information theory include Bernoulli noise acting on a code in the Hamming space and Gaussian noise acting on a lattice in the Euclidean space. We aim to characterize the cases when the output distribution is close to the uniform distribution on the space, as measured by R{\'e}nyi divergence of order $\alpha \in (1,\infty]$.  A version of this question is known as the channel resolvability problem in information theory, and it has implications for security guarantees in wiretap channels, error correction, discrepancy, worst-to-average case complexity reductions, and many other problems. 

Our work quantifies the requirements for asymptotic uniformity (perfect smoothing) and identifies explicit code families that achieve it under the action of the Bernoulli and ball noise operators on the code. We derive expressions for the minimum rate of codes required to attain
asymptotically perfect smoothing. In proving our results, we leverage recent results from harmonic analysis of functions on the Hamming space. Another result pertains to the use of code families in Wyner's transmission scheme on the binary wiretap channel. We identify explicit families that guarantee strong secrecy when applied in this scheme, showing that nested Reed-Muller codes can transmit messages reliably and securely over a binary symmetric wiretap channel with a positive rate. Finally, we establish a connection between smoothing and error correction in the binary symmetric channel.
\end{abstract}

\section{Introduction}

Many problems of information theory involve the action of a noise operator on a code distribution, transforming it into some other distribution. For instance, one can think of Bernoulli noise acting on a code in the Hamming space or Gaussian noise acting on a lattice in the Euclidean space. We are interested in characterizing the cases when the output distribution is close to the uniform distribution on the space. Versions of this problem have been considered under different names, including resolvability \cite{han1993approximation,hayashi2006general,yu2018renyi}, smoothing \cite{debris2023smoothing,micciancio2007worst}, discrepancy \cite{Skriganov2001,Chen2002}, and entropy of noisy functions \cite{samorodnitsky2016entropy,samorodnitsky2019upper,samorodnitsky2020improved}. 
Direct applications of smoothing include secrecy guarantees in both the binary symmetric wiretap channel \cite{bloch2013strong,hayashi2006general,yu2018renyi} and the Gaussian wiretap channel \cite{belfiore2010secrecy,luzzi2023optimal}, error correction in the binary symmetric channel (BSC) \cite{hkazla2021codes,rao2022criterion}, converse coding theorems of information theory \cite{arimoto1973converse,han1993approximation,polyanskiy2010arimoto,polyanskiy2013empirical}, strong coordination \cite{cover2007capacity,CuffPermuterCover2010, bloch2013strong, Cuff2013, ChouBlochKliewer2018}, secret key generation \cite{chou2015polar, luzzi2023optimal}, and worst-to-average case reductions in cryptography \cite{micciancio2007worst,brakerski2019worst}. Some aspects of this problem also touch upon approximation problems in statistics and machine learning \cite{nietert2021smooth,goldfeld2022limit,goldfeld2022statistical}.

Our main results are formulated for the smoothing in the binary Hamming space $\cH_n$. For  $r:\cH_n\to \reals_0^+ $, and $f:\cH_n\to \reals $ define
  $$
     T_rf(x)=(r\ast f)(x):=\sum_{z\in \cH_n} r(z)f(x-z)
  $$
be the action of $r$ on the functions on the space. We set $r$ to be a probability mass function (pmf) and call the function $T_rf$ the \emph{noisy version} of $f$ with respect to $r$, and refer to $r$ and $T_r$ as a \emph{noise kernel} and a \emph{noise operator} respectively. By \emph{smoothing} $f$ with respect to $r$ we mean applying the noise kernel $r$ to $f$. We often assume that $r(x)$ is a radial kernel, i.e., its value on the argument $x \in \cH_n$ depends only on the Hamming weight of $x$. 

There are several ways to view the smoothing operation. Interpreting it as a shift-invariant linear operator, we note that, from Young's inequality, $\|T_rf\|_\alpha =  \|f*r\|_{\alpha}  \leq \|f\|_{\alpha}, 1\le\alpha\le\infty,$ so smoothing contracts the $\alpha$-norm. Upon applying $T_r$, the noisy version of $f$ becomes ``flatter'', hence the designation ``smoothing''.
Note that if $f$ is a pmf, then $T_rf$ is also a pmf, and so this view allows us to model the effect of communication channels with additive noise.

The class of functions that we consider are (normalized) indicators of subsets (codes) in $\cH_n$. A code $\cC\subset \cH_n$ defines a pmf 
$f_{\cC}=\frac{\1_{\cC}}{|\cC|},$ and thus $T_rf_{\cC}$ can be viewed as a noisy version of the code (we also sometimes call it a 
\emph{noisy distribution}) with respect to the kernel $r$. The main question of interest for us is the proximity of this distribution to $U_n,$ or the ``smoothness'' of the noisy code distributions. To quantify closeness to $U_n$, we use the Kullback-Leibler (KL) and  R{\'e}nyi divergences (equivalently, $L_{\alpha}$ norms), and the smoothness measured in $D_{\alpha}(\cdot\|\cdot)$ is termed as $D_{\alpha}$-smoothness ($L_{\alpha}$-smoothness).

We say that a code is {\em perfectly smoothable} with respect to the noise kernel $r$ if the resultant noisy distribution becomes uniform.
Our main emphasis is on the asymptotic version of perfect smoothing and its implications for some of the 
basic information-theoretic problems. A sequence of codes $(\cC_n)_n$ is asymptotically smoothed by the 
kernel sequence $r_n$ if the distance between $(T_{r_n}f_{\cC_n})$ and $U_n$ approaches 0 as $n$ 
increases. This property is closely related to the more general problem of {\em channel resolvability} 
introduced by Han and Verd{\'u} in \cite{han1993approximation}. Given a discrete memoryless channel 
$\cW(Y|X)$ and a distribution $P_X$, we observe a distribution $P_Y$ on the output of the channel. The 
task of channel resolvability is to find $P_X$ supported on a subset $\cC\subset \cH_n$ that approximates 
$P_Y$ with respect to KL divergence. As shown in \cite{han1993approximation}, there exists a threshold 
value of the rate such that it is impossible to approximate $P_Y$ using codes of lower rate, 
while any output process can be approximated by a well-chosen code of rate larger than the threshold.
Other proximity measures between distributions were considered for this problem in \cite{steinberg1996simulation,liu2016e_,yu2018renyi}. 
Following the setting in \cite{yu2018renyi}, we consider R{\'e}nyi divergences for measuring closeness to uniformity. We call the minimum rate required to achieve perfect asymptotic smoothing the \emph{ $D_{\alpha}$-smoothing capacity} of noise kernels $(r_n)_n$, where the proximity to uniformity is measured by $\alpha$-R{\'e}nyi divergence. In this work, we characterize the $D_{\alpha}$-smoothing capacity of the sequence $(r_n)_n$ using its R{\'e}nyi entropy rate.

\vspace*{.05in}
\emph{Asymptotic smoothing.}
We will limit ourselves to studying smoothing bounds under the action of Bernoulli noise or the ball noise kernels. 
A common approach to deriving bounds on the norm of a noisy function is through hypercontractivity inequalities \cite{ordentlich2018entropy,polyanskiy2019hypercontractivity,Yu2022}. In its basic version, given a code $\cC$ of size $M$, it yields the estimate
\begin{align*}
    {\|T_{\delta}f_{\cC}\|_{\alpha}  \leq \|f_{\cC}\|_{\alpha^\prime} =  M^\frac{1-\alpha^\prime}{\alpha^\prime}2^{-\frac{n}{\alpha^\prime}},}
\end{align*}
where $\beta_\delta$ is the Bernoulli kernel (see Section~\ref{sec: Prelimnaries} for formal definitions) and $\alpha^\prime = 1+(1-2\delta)^2(\alpha -1)$.
This upper bound does not differentiate codes yielding higher or lower smoothness, which in many situations may not be sufficiently informative. Note that other tools such as ``Mrs. Gerber's lemma'' \cite{wyner1973theorem,ordentlich2018entropy} or strong data processing inequalities also suffer from the same limitation.

A new perspective of the bounds for smoothing has been recently introduced in the works of Samorodnitsky 
\cite{samorodnitsky2016entropy,samorodnitsky2019upper,samorodnitsky2020improved}.
Essentially,  his results imply that codes satisfying certain regularity conditions have good smoothing properties. Their efficiency is highlighted in recent papers  \cite{hkazla2021codes,hkazla2022optimal}, which leveraged results for code performance on the binary erasure channel (BEC) to prove strong claims about the error correction capabilities of the codes when used on the BSC. 
Using Samorodnitsky's inequalities, we show that duals of some BEC capacity-achieving codes achieve $D_{\alpha}$-smoothing capacity for $\alpha \in \{2,3,\dots,\infty\}$ with respect to Bernoulli noise. This includes duals of polar codes and doubly transitive codes such as Reed-Muller (RM) codes.

\emph{Smoothing and the wiretap channel.}
Wyner's wiretap channel \cite{wyner1975wire} models communication in the presence of an eavesdropper. 
Code design for this channel pursues reliable communication between the legitimate parties, while at the same time 
leaking as little information as possible about the transmitted messages to the eavesdropper. 
The connection between secrecy in wiretap channels and resolvability was first mentioned by Csisz{\'a}r \cite{csiszar1996almost} and later developed by Hayashi \cite{hayashi2006general}. It rests on the observation that to achieve secrecy it suffices to make the distribution of eavesdropper's observations conditioned on the transmitted message nearly independent of the message. The idea of characterizing secrecy based on smoothness works irrespective of the measure of secrecy \cite{hayashi2006general,bloch2013strong,yu2018renyi}, and it was also employed for nested lattice codes used over the Gaussian wiretap channel in \cite{belfiore2010secrecy}.

Secrecy on the wiretap channel can be defined in two ways, measured by the information gained by the eavesdropper, and it depends on whether this quantity is normalized to the number of channel uses (weak secrecy) or not (strong secrecy). This distinction
was first highlighted by Maurer \cite{maurer1993secret}, and it has been adopted widely in recent literature. 
Early papers devoted to code design for the wiretap channel relied on random codes, but (for simple channel models such as BSC or 
BEC) this has changed with the advent of explicit capacity-approaching code families. Weak secrecy results based on LDPC codes were 
presented in \cite{thangaraj2007applications}, but initial attempts to attain strong secrecy encountered some obstacles. 
To circumvent it, first works on code construction \cite{Subramanian2010,mahdavifar2011achieving} had to assume that the main channel
is noiseless. The problem of combining strong secrecy and reliability for general wiretap channels was resolved in 
\cite{gulcu2016achieving}, but that work had to assume that the two communicating parties share a small number of random bits unavailable to the eavesdropper. Apart from the polar coding scheme of \cite{gulcu2016achieving}, explicit code families that 
support reliable communication with positive rate and strong secrecy have not previously appeared in the literature. In this work, we show that nested RM codes perform well in binary symmetric wiretap channels based on their smoothing 
properties. While our work falls short of proving that nested RM codes achieve capacity, we show 
that they can transmit messages reliably and secretly at rates close to capacity.  

\vspace*{.05in}\emph{Ball noise and decoding error.} Ball-noise smoothing provides a tool for estimating the error probability of decoding on the BSC. We derive impossibility and achievability bounds for $D_{\alpha}$-smoothness of noisy distributions with respect to the ball noise. Smoothing of a code with respect to the $L_2$ norm plays a special role because in this case the second norm (the variance) of the resulting distribution can be expressed via pairwise distance between codewords, enabling one to rely on tools from 
Fourier analysis. The recent paper by Debris-Alazard \textit{et al.} \cite{debris2023smoothing} established universal bounds for smoothing of codes or lattices, with cryptographic reductions in mind. The paper by Sprumont and Rao 
\cite{rao2022criterion} addressed bounds for error probability of list decoding at rates above BSC capacity. A paper by one of the present authors \cite{barg2021stolarsky} studied the variance of the number of codewords in balls of different radii (a quantity known as quadratic discrepancy \cite{Bilyk2018,Skriganov2019}).

\vspace*{.05in} The main contributions of this paper are the following:
\begin{enumerate}
    \item Characterizing $D_{\alpha}$-smoothing capacities of radial noise operators on the Hamming space for {$\alpha \in (1,\infty]$.}
    \item Identifying some explicit code families that attain smoothing capacity of Bernoulli noise for $\alpha \in \{2,3,\dots,\infty\}$; 
    \item Obtaining rate estimates for RM codes used on the BSC wiretap channel under the strong secrecy condition;
    \item Showing that codes possessing sufficiently good smoothing properties are suitable for error correction.
\end{enumerate}

In Section \ref{sec: Prelimnaries}, we set up the notation and introduce the relevant basic concepts. Then, in Section \ref{sec: Perfect Smoothing Asymptotic}, we derive expressions for $D_{\alpha}$-smoothing capacities for $\alpha \in (1,\infty]$, and in Section \ref{sec: Bernoulli} we use these results to analyze smoothing of code families under the action of Bernoulli noise. 
Section \ref{sec: binary symmetric wiretap  channel} is devoted to the application of these results for the binary symmetric wiretap channel. In particular, we show that RM codes can achieve rates close to the capacity of the BSC wiretap channel, while at the same time guaranteeing strong secrecy. 
In Section~\ref{sec: ball noise} we establish threshold rates for smoothing under ball noise, and derive bounds for the error probability of decoding on the BSC, including the list case, based on the distance distribution. Concluding the paper, Section \ref{sec: Perfect smoothing -finite} briefly points out that the well-known class of uniformly packed codes are perfectly smoothable with respect to ``small'' noise kernels.

\section{Preliminaries}\label{sec: Prelimnaries}

\subsection{Notation}
Throughout this paper, $\cH_n$ is the binary $n$-dimensional Hamming space

{\em Balls and spheres.}
Denote by $B(x,t):= \{y\in \cH_n: |y-x| \leq t\}$ the metric ball of radius $t$ in $\cH_n$ with center at $x$, and denote by $S(x,t):=\{y\in \cH_n: |y-x| = t\}$ the sphere of radius $t$.
Let $V_t=|B(x,t)|$ be the volume of the ball, and let $\mu_t(i) $ be the 
intersection volume of two balls of radius $t$ whose centers are distance $i$ apart:
\begin{align}\label{eq:mu_t}
    \mu_t(i)=  |B(0,t)\cap B(x,t)|, \quad \text{where } |x| = i.
\end{align}

{\em Codes and distributions.} A code $\cC$ is a subset in $\cH_n$. The rate and distance of the code are denoted by $R(\cC):=\log|\cC|/n$ and $d(\cC)$, respectively. Let 
\begin{equation}\label{eq:dd}
    A_i=\frac{1}{|\cC|}|\{(x,y)\in\cC^2: d(x,y)=i\}|
\end{equation}    
and let $(A_i,i=0,\dots,n)$ be the distance distribution of the code. If the code $\cC$ forms an $\ff_2$-linear
subspace in $\cH_n,$ we denote by $\cC^\bot:=\{y\in \cH_n: \sum_i x_iy_i=0 \text{ for all }x\in \cC\}$ its dual code. 

The function $\1_{\cC}$ denotes the indicator of a subset $\cC\subset \cH_n,$ and $f_{\cC} =\frac{\1_{\cC}}{|\cC|}$ is corresponding pmf denoting the uniform distribution over the set, calling it a {\em code distribution}. We use a special notation $s_t$ for this pmf when $\cC=S(0,t)$ and similarly write $b_t$ for it when 
$\cC=B(0,t).$ Finally, $\beta_\delta$ is the binomial distribution on $\cH_n,$ given by
  \begin{equation}\label{eq:beta}
    {\beta}_{\delta}(x) = {\beta}_{\delta}^{(n)}(x) = \delta^{|x|}(1-\delta)^{n-|x|},
  \end{equation}
and $U_n$ is the uniform distribution, given by $U_n(x)=2^{-n}$ for all $x.$

\emph{Entropies and norms.}
For a function $f:\cH_n \rightarrow \mathbbm{R}$, we define its $\alpha${-norm} as follows. 
\begin{align*}
    \|f\|_{\alpha}  &= \Big(\frac{1}{2^n}\sum_{x \in \cH_n}{|f(x)|}^{\alpha} \Big)^{1/{\alpha} } \text{ for } \alpha \in {(0,\infty)}\\
    \|f\|_{\infty} &= \max_{x\in \cH_n}{|f(x)|}.
\end{align*}

Given a pmf $P$, let 
 \begin{align}
   H(P)&=-\sum_i P_i\log P_i \label{eq:Shannon},\\
   H_{\alpha} (P)&=\frac1{1-\alpha }\log\Big(\sum_i P_i^{\alpha} \Big) \label{eq:Renyi}
 \end{align}  
denote its Shannon entropy and R{\'e}nyi entropy of order $\alpha,$ respectively.  If $P$ is supported on two points, we write $h(P)$ and $h_{\alpha} (P)$ instead (all logarithms are to the base 2). The limiting cases of $\alpha=0,1,\infty$ are well-defined; in particular, for $\alpha=1$, $H_\alpha(P)$ reduces to $H(P).$

For two discrete probability distributions $P$ and $\alpha$, the $\alpha$-{\em R{\'e}nyi divergence} (or simply $\alpha$-divergence) is defined as follows:
\begin{equation}
\label{eq: Renyi div}
    D_{\alpha}(P\|Q) = 
\begin{cases}
            - \log Q(\{i:P_i>0\})  & \mbox{if } \alpha = 0 \\[.1in]
		\frac{1}{\alpha -1} \log \sum_i P_i^{\alpha} Q_i^{-(\alpha -1)}  & \mbox{if } \alpha \in (0,1) \cup (1,\infty) \\[.1in]
		\sum_i P_i\log \frac{P_i}{Q_i} & \mbox{if } \alpha =  1\\[.1in]
            \max_i \log \frac{P_i}{Q_i} & \mbox{if } \alpha =  \infty
	\end{cases}.
\end{equation}
The divergence $D_{\alpha}(P\|Q)$ is a continuous function of $\alpha$ for $\alpha \in [0,\infty]$.
For a pmf $f$ on $\cH_n$ 
\begin{align}\label{eq: Renyi smoothness}
    D_{\alpha}(f\|U_n) &=  \frac{\alpha}{\alpha -1}\log \|2^nf\|_{\alpha} , \quad \alpha \in (0,1) \cap (1,\infty)\\
    D_{\infty}(f\|U_n) &= \log\|2^nf\|_{\infty}.\label{eq: Renyi smoothness-inf}
\end{align}

\emph{Channels.} In this paper, a channel is a conditional probability distribution $\cW:\{0,1\}\to \cY,$ where $\cY$ is a finite set, so that $\cW(y|x)$ is the conditional probability of the output $y$ for the input $x$. We frequently consider the binary symmetric channel with crossover probability $\delta$ and the binary erasure channel with erasure  probability $\lambda,$
abbreviating them as ${\text{\rm BSC}}(\delta)$ and ${\text{\rm BEC}}(\lambda),$ respectively. We are often interested in the $n$-fold channel $\cW^{(n)},$ i.e., the conditional probability distribution corresponding to $n$-uses of the channel. For the input $X$, let
$Y_{(X,\cW)}$ be the random output of the channel $\cW^{(n)}$. If the input sequences are chosen from a uniform distribution on a code $\cC$, we denote the input by $X_{\cC}.$ 
Since the number of uses of the channel is usually clear from the context, we suppress the dependency on $n$ from the notation for channels and sequences.

Let $\cC$ be a code of length $n$. For a channel $\cW$ and input $X_{\cC}$, the block-MAP 
decoder is defined as 
\begin{align*}
    \hat{x}(y) = \argmax_{x \in \cC} \Pr(x|y).
\end{align*}
For a given code and channel, denote the error probability of block-MAP decoding by
\begin{align*}
    P_B(\cW, \cC) = \Pr(X_\cC \neq \hat{X}(Y_{(X_{\cC},\cW)}).
\end{align*}

\subsection{\texorpdfstring{$D_{\alpha}$}{}- and \texorpdfstring{$L_{\alpha}$}{}-smoothness}\label{sec:lemmas}

Recall that in the introduction, we expressed the smoothness of a distribution as its proximity to uniformity. Here we formalize this notion based on two (equivalent) proximity measures.

Let $g$ be a pmf on $\cH_n$. A natural measure of the uniformity of $g$ is $D_{\alpha}(g\|U_n)$ ($\alpha \in [0,\infty]$). We call this the $D_{\alpha}$-smoothness of $g$.
Observe that
\begin{align}\label{eq:equiv}
    \|2^ng\|_\alpha &=  \frac{\|g\|_{\alpha} }{\|g\|_1 }\geq 1 \quad \text{ for }  \alpha \in (1,\infty], \text{ and } \\
   \|2^ng\|_\alpha &=  \frac{\|g\|_{\alpha} }{\|g\|_1 }\leq 1 \quad \text{ for }  \alpha \in (0,1)
\end{align}
with equality iff $g= U_n$. Thus, the better the pmf $g$ approximates uniformity, the closer is $\|2^ng\|_{\alpha} $ to 1 (The denominator is simply a normalization quantity that allows dimension-agnostic analysis). Therefore, $\|2^ng\|_{\alpha}$ ($\alpha \in (0,1) \cup (1,\infty]$) can be considered as another measure of proximity. We call $\|2^ng\|_{\alpha}$ the $L_{\alpha}$-smoothness of $g$. From \eqref{eq: Renyi smoothness} and \eqref{eq: Renyi smoothness-inf}, it follows that $D_{\alpha}$-smoothness and $L_\alpha$-smoothness are equivalent. 

\begin{remark}\label{rm: D_alpha mono}
    It is easily seen that $D_{\alpha}(g\|U_n) = n-H_{\alpha}(g)$, hence $D_{\alpha}(g\|U_n)$ is an increasing function of $\alpha$.
\end{remark}

Recall that for a given code $\cC$, and a noise kernel $r$, $T_rf_\cC = r \ast f_\cC$ is the noisy distribution of code $\cC$ with respect to $r$. We intend to study smoothing properties of such noisy distributions of codes. In particular, we characterize the necessary conditions for $D_{\alpha}(T_rf_\cC\|U_n) $ to be close to zero (equivalently, for $\|2^n T_rf_\cC\|_{\alpha}$ close to one). In Section \ref{sec: Perfect Smoothing Asymptotic}, we quantify these requirements in the asymptotic setting.

\subsection{Resolvability}
The problem of channel resolvability was introduced by Han and Verd{\'u}  \cite{han1993approximation} under the name of approximating output statistics of the channel. 
The objective of channel resolvability is to approximate the output distribution of a given input  by the output distribution of a code with a smaller support size. 
In this work, we are interested in code families whose noisy distributions approximate uniformity. Resolvability characterizes the necessary conditions for this to happen in terms of the rate of the code. 

Let $\cW$ be a (discrete memoryless) channel whose input alphabet is $\cX$ and output alphabet is $\cY$. Let $\bfX=\{X_n\}_{n=1}^{\infty}$ be a discrete-time random process where the RVs $X_n$ take
values in $\cX$. Denote by $Y_n$ the random output of $\cW$ with input $X_n$ and
let $\bfY= \{Y_n\}_{n=1}^{\infty}.$ 
Denote by $P_\bfY$ the distribution of $\bfY$ and let $P_{Y^{(n)}}$ be
the pmf of the $n$-tuple $Y^{(n)}:= \{Y_1,Y_2,\dots,Y_n\}$.

For a legitimate (realizable) output process $\bfY$, define 
\begin{align}\label{eq:JDelta}
    J^{(\Delta)}(\cW,P_{\bfY}) = \inf_{\cC_n \subset \cX^n} \{\liminf_{n} R(\cC_n): \Delta(f_{\cC_n}, P_{Y^{(n)}}) \rightarrow 0 \},
\end{align}
where $\Delta$ is a measure of closeness of a pair of probability distributions. 
In words, we look for sequences of distributions $(f_{\cC_n})_n$ of the smallest possible rate that approximates $P_{\bfY}$ on the output of $\cW.$ 

The original problem as formulated by Han and Verd{\'u} in \cite{han1993approximation} seeks to find {\em resolvability} of the channel, defined as
\begin{align}\label{eq:resolvability}
    C_r^{(\Delta)}(\cW) = \inf_{P_{\bfY}}\{J^{(\Delta)}(\cW,P_{\bfY}): {\bfY} \text{ is an output process over } \cW \}.
\end{align}
where $\Delta$ is either the variational distance or normalized KL divergence $\frac1nD(\cdot\|\cdot).$ Hayashi \cite{hayashi2006general} considered the same problem where the proximity was measured by unnormalized KL divergence. In each case, resolvability equals Shannon capacity of the channel $\cW.$

\begin{theorem}[\cite{han1993approximation,hayashi2006general}]\label{thm: Resolvability thm}
Let $\cW$ be a discrete memoryless channel. 
Suppose that $\Delta$ is either the KL divergence (normalized or not) or the variational distance, then resolvability is given by    
    \begin{align*}
        C_r^{(\Delta)}(\cW) = C(\cW).
    \end{align*}
\end{theorem}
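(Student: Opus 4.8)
The plan is to prove the two matching inequalities $C_r^{(\Delta)}(\cW)\le C(\cW)$ and $C_r^{(\Delta)}(\cW)\ge C(\cW)$, following \cite{han1993approximation,hayashi2006general}. It is enough to prove the upper bound (the direct part) for the strongest of the three metrics, the unnormalized KL divergence, since it then follows for the normalized KL divergence and, via Pinsker's inequality, for the variational distance; the lower bound (the converse) I would handle separately for each $\Delta$, with the variational-distance case as the representative one.

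For the direct part I would run a soft-covering (random-coding) argument. Given a realizable output process $P_\bfY$, pick an input process that induces it through $\cW$; since $\cW$ is memoryless we may take it i.i.d.\ with some single-letter law $P_X$, so that $\tfrac1n I(X^{(n)};Y^{(n)})=I(X;Y)\le C(\cW)$. For a rate $R>I(X;Y)$ put $M=\lceil 2^{nR}\rceil$ and draw $M$ codewords i.i.d.\ from $P_X^{(n)}$ to form a random code $\cC_n$, then bound $\mathbb{E}_{\cC_n}\,D(\cW^{(n)}\!\circ f_{\cC_n}\,\|\,P_{Y^{(n)}})$ by the resolvability (soft-covering) lemma: splitting the output space according to whether the information density $\log\tfrac{\cW^{(n)}(y|x)}{P_{Y^{(n)}}(y)}$ exceeds $n(R-\epsilon)$, the ``typical'' part contributes $O(2^{-n\epsilon/2})$ and the ``atypical'' part vanishes by the law of large numbers (exponentially fast, by a Chernoff bound, once $R$ is bounded away from $I(X;Y)$). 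This exponential decay is precisely what makes the \emph{unnormalized} KL divergence vanish as well, which is Hayashi's refinement \cite{hayashi2006general}. The probabilistic method then extracts a deterministic sequence $(\cC_n)$ with $R(\cC_n)\to R$ and vanishing divergence; since $R>I(X;Y)$ was arbitrary, $J^{(\Delta)}(\cW,P_\bfY)\le I(X;Y)\le C(\cW)$, hence $C_r^{(\Delta)}(\cW)\le C(\cW)$.

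For the converse I would test the problem on a worst-case output process: let $P_X^\ast$ achieve $I(X;Y)=C(\cW)$ and let $P_\bfY$ be the i.i.d.\ output of $P_X^\ast$. If a code sequence $(\cC_n)$ resolves this $P_\bfY$, then, since the code output is supported on at most $2^{nR(\cC_n)}$ input sequences, it can only cover a process whose normalized information density does not exceed $\liminf_n R(\cC_n)$ in the appropriate limit-in-probability sense; this is the information-spectrum converse of \cite{han1993approximation}, and for the i.i.d.\ capacity-achieving input the relevant spectral quantity equals $C(\cW)=\max_{P_X}I(X;Y)$. Hence $\liminf_n R(\cC_n)\ge C(\cW)$, i.e.\ $J^{(\Delta)}(\cW,P_\bfY)\ge C(\cW)$, which yields $C_r^{(\Delta)}(\cW)\ge C(\cW)$; combined with the direct part this proves the theorem.

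I expect the main obstacle to be the converse under the variational distance. On the exponentially large alphabet $\cY^n$ an $L_1$-perturbation of size $o(1)$ can change the entropy by $\Theta(n)$, so a naive ``continuity of entropy'' estimate is useless and one is forced into the information-spectrum machinery (the $\limsup$-in-probability of the normalized information density); checking that this spectral quantity equals $C(\cW)$ for a DMC, uniformly over the input processes that induce the chosen output, is the technical heart of the argument. A secondary subtlety is upgrading the direct part from normalized to unnormalized KL, which is exactly the point where one must spend a little extra rate above $C(\cW)$ so that the $o(1)$ bound from the soft-covering lemma becomes exponentially small.
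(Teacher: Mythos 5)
The paper does not prove this theorem at all: it is imported verbatim from \cite{han1993approximation,hayashi2006general}, with only the remark that Han--Verd\'u needed a strong-converse hypothesis that Hayashi later removed. So there is no in-paper argument to compare against; your sketch is an attempt to reconstruct the proof of the cited references, and in outline it follows their standard route (soft-covering for the direct part, information-spectrum converse). Two points, one presentational and one substantive.

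Presentational: the paper's display \eqref{eq:resolvability} writes $C_r^{(\Delta)}$ as an \emph{infimum} over output processes, under which your converse (exhibiting a single hard output process) proves nothing and the theorem is in fact false (a constant output is resolved at rate $0$). Your argument implicitly uses the standard Han--Verd\'u convention, a \emph{supremum} over realizable output processes; under that reading your converse logic is sound, and your identification of the variational-distance converse as the hard, representative case (since TV is the weakest metric, its converse implies the KL ones) is correct.

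Substantive gap: in the direct part you assert that, because $\cW$ is memoryless, an arbitrary realizable output process $P_{\bfY}$ ``may be taken'' to come from an i.i.d.\ input $P_X^{(n)}$. This is false: the sup in the definition ranges over \emph{all} realizable output processes, including outputs of mixtures of i.i.d.\ inputs, Markov inputs, and non-stationary inputs, none of which are outputs of any i.i.d.\ law. Your single-letter soft-covering argument therefore only bounds $J^{(\Delta)}$ for i.i.d.-induced outputs, which suffices for the lower bound's test process but not for the upper bound $C_r^{(\Delta)}(\cW)\le C(\cW)$. The general direct part needs the information-spectrum soft-covering lemma, which bounds the resolution rate of an arbitrary $\bfY$ by the sup-information rate $\overline{I}(\bfX;\bfY)$ of any inducing input, followed by the identity $\sup_{\bfX}\overline{I}(\bfX;\bfY)=C(\cW)$; that identity is exactly where the strong converse property enters (automatic for DMCs by Wolfowitz, and dispensable by Hayashi's argument for more general channels). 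This missing step is precisely the content of the paper's remark following the theorem, so it should not be elided.
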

The authors of \cite{han1993approximation} proved this result under the additional assumption that the channel $\cW$ satisfies strong converse, and Hayashi \cite{hayashi2006general} later showed that this assumption is unessential.

In addition to the proximity measures considered in Theorem~\ref{thm: Resolvability thm}, papers  \cite{steinberg1996simulation, liu2016e_, yu2018renyi} considered other possibilities. In particular, Yu and Tan \cite{yu2018renyi} studied the resolvability problem for a specific target distribution $P_{\bfY}$ and for the R{\'e}nyi divergence $\Delta=D_{\alpha}$ \eqref{eq: Renyi div}. 
Their main result is as follows. 

\begin{theorem}[\cite{yu2018renyi}, Theorem~2]\label{thm: Renyi resolvability}
Let $\cW$ be a channel and $P_{\bfY}$ be an output distribution. then
\begin{align*}
    J^{(D_\alpha)} (\cW,P_{\bfY}) = 
\left\{
	\begin{array}{lll}
		\min_{P_X \in \cP (\cW,P_{\bfY})}\sum_xP_X(x)D_{\alpha}(\cW(\cdot|x)\|P_{\bfY}) & \mbox{if } \alpha \in (1,2] \cup \{\infty\} \\[.05in]
		\min_{P_X \in \cP (\cW,P_{\bfY})}D(\cW\|P_{\bfY}|P_X) & \mbox{if } \alpha \in (0,1]\\[.05in]
            0 & \mbox{if } \alpha =  0,
	\end{array}
\right.
\end{align*}  
where $\cP (\cW,P_{\bfY})$ is the set of distributions $P_{\bfX}$ consistent with the output $P_{\bfY}$.
\end{theorem}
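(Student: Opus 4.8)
The plan is to establish the stated single-letter formula by proving matching achievability and converse bounds on $J^{(D_\alpha)}(\cW,P_{\bfY})$ from \eqref{eq:JDelta}, handling the three ranges of $\alpha$ separately. Throughout I work in the stationary memoryless setting, so that $P_{\bfY}$ has a single-letter marginal $P_Y$ with $P_{Y^{(n)}}=P_Y^{\otimes n}$ and $\cP(\cW,P_{\bfY})$ consists of the single-letter distributions $P_X$ with $P_X\cW=P_Y$. The case $\alpha=0$ is immediate from \eqref{eq: Renyi div}, since $D_0(Q_n\|P_{Y^{(n)}})=-\log P_{Y^{(n)}}(\supp Q_n)$ can be driven to $0$ by a bounded-size codebook, so $J^{(D_0)}=0$. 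For $\alpha\in(0,1]$ the target value equals $\min_{P_X}D(\cW\|P_{\bfY}|P_X)$, the KL-resolvability quantity behind Theorem~\ref{thm: Resolvability thm}: since $D_\alpha$ is nondecreasing in $\alpha$ (Remark~\ref{rm: D_alpha mono}), the Han--Verd{\'u}/Hayashi soft-covering codebooks that drive the unnormalized KL divergence to $0$ at this rate also drive $D_\alpha$ to $0$, yielding achievability, while the matching converse is the classical method-of-types lower bound, which in fact lower-bounds $D_\alpha$ for every $\alpha>0$.

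The heart of the theorem is the range $\alpha\in(1,2]$ together with $\alpha=\infty$. For achievability I would use random coding: fix $P_X\in\cP(\cW,P_{\bfY})$ attaining $R^{*}:=\min_{P_X}\sum_x P_X(x)D_\alpha(\cW(\cdot|x)\|P_Y)$, take $M=2^{nR}$ codewords with $R>R^{*}$, form the induced output pmf $Q_n=\frac1M\sum_{i=1}^{M}\cW^{\otimes n}(\cdot\mid X^{(i)})$, and bound $\mathbb{E}_{\cC}[D_\alpha(Q_n\|P_Y^{\otimes n})]$. Moving the codebook expectation inside the logarithm in \eqref{eq: Renyi div} by Jensen, it suffices to show $\mathbb{E}_{\cC}\big[\sum_{y^n}Q_n(y^n)^\alpha (P_Y^{\otimes n}(y^n))^{1-\alpha}\big]\to1$; writing $Q_n(y^n)$ as an average of $M$ i.i.d.\ terms, the ``mean'' part contributes $1$, while a von Bahr--Esseen-type moment inequality --- available because $\alpha\le2$ --- bounds the ``fluctuation'' part by $M^{1-\alpha}$ times a single-letterized factor whose exponent is controlled by $\sum_xP_X(x)D_\alpha(\cW(\cdot|x)\|P_Y)$, vanishing once $R>R^{*}$. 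Conditional-typicality truncation handles atypical $y^n$, and derandomization produces a deterministic code. The delicate point --- the first place I expect genuine work --- is obtaining the \emph{exact} single-letter exponent $R^{*}$, i.e.\ the arithmetic average $\sum_xP_X(x)D_\alpha(\cdot)$ rather than a larger R{\'e}nyi-type average of the per-letter divergences; this dictates the codeword law (drawing codewords from a fixed composition class fixes the empirical type and yields the arithmetic average, at the price of a $\mathrm{poly}(n)$ correction in the mean part that must then be absorbed). The case $\alpha=\infty$ is treated by a soft-covering/union-bound estimate for $D_\infty$: a Chernoff bound on $\sum_i\1\{\cW^{\otimes n}(y^n|X^{(i)})\ge\cdots\}$ combined with the law of large numbers across the $n$ coordinates of each codeword drives to $0$ the probability that some $y^n$ is under-covered, and this averaging is precisely what turns the per-letter $D_\infty$ into its $P_X$-average.

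For the converse in $\alpha\in(1,2]\cup\{\infty\}$, let $\cC_n$ satisfy $|\cC_n|=2^{nR}$ and $D_\alpha(f_{\cC_n}\|P_Y^{\otimes n})\to0$. The naive bound uses $Q_n(y^n)\ge\frac1M\cW^{\otimes n}(y^n|x)$ for one codeword $x$ to get $\sum_{y^n}Q_n(y^n)^\alpha(P_Y^{\otimes n})^{1-\alpha}\ge M^{-\alpha}2^{(\alpha-1)D_\alpha(\cW^{\otimes n}(\cdot|x)\|P_Y^{\otimes n})}$, hence only $R\ge\frac{\alpha-1}{\alpha}\big(\frac1n D_\alpha(\cW^{\otimes n}(\cdot|x)\|P_Y^{\otimes n})-o(1)\big)$ --- off by the factor $\tfrac{\alpha-1}{\alpha}$ and involving just one codeword's type rather than the code's type profile. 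The correct argument is a method-of-types refinement of the Han--Verd{\'u} converse: partition $\cC_n$ by the empirical type of its codewords, observe that the codewords of a type $P$ deposit into $Q_n$ a near-flat plateau of height $\approx\frac1M2^{-nH(\cW|P)}$ on the associated conditionally typical shell, lower-bound $\sum_{y^n}Q_n(y^n)^\alpha(P_Y^{\otimes n})^{1-\alpha}$ by the plateau contribution of the dominant type, and require it to be subexponential; this forces $R\ge\min_{P}\sum_xP(x)D_\alpha(\cW(\cdot|x)\|P_Y)$, with the minimum in the limit ranging over $P$ satisfying $P\cW=P_Y$ (consistency being forced because $D_\alpha\to0$ pins the marginal type of $Y^n$ under $Q_n$ to $P_Y$). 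Making this plateau picture rigorous --- the concentration estimates behind it again exploit $\alpha\le2$, which is also why the range $\alpha\in(2,\infty)$ is left open in \cite{yu2018renyi} --- and checking that it is lossless at the level of exponents is the technical crux and the step I expect to be the main obstacle.
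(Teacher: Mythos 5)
This theorem is not proved in the paper at all: it is imported verbatim from \cite{yu2018renyi} (their Theorem~2), so there is no in-paper proof to compare your attempt against. What you have written is a reconstruction of the Yu--Tan argument, and at the level of architecture it is a faithful one: soft covering plus monotonicity for $\alpha\le 1$, a second-moment/von Bahr--Esseen random-coding bound for $\alpha\in(1,2]$, a Chernoff--union argument for $\alpha=\infty$ (this is essentially the same device the present paper uses in Lemma~\ref{lemma: D_inf(T_r|U) ach.}), and a types-based converse. You also correctly flag the one genuinely delicate point on the achievability side, namely that i.i.d.\ codewords naturally produce the Gallager-type (log-of-average-of-exponentials) conditional R\'enyi divergence rather than the arithmetic average $\sum_x P_X(x)D_\alpha(\cW(\cdot|x)\|P_{\bfY})$ appearing in the statement, and that a constant-composition or typicality-truncation device is needed to close that gap.

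As a standalone proof, however, there are genuine holes. First, for $\alpha\in(0,1)$ your converse cannot be inherited from the $\alpha=1$ case: monotonicity gives $D_\alpha\le D_1$, so $D_\alpha\to0$ is a \emph{weaker} requirement than $D_1\to0$, and the impossibility claim for $\alpha<1$ must rule out strictly more codes than the KL converse does. Your sentence asserting that the method-of-types bound ``in fact lower-bounds $D_\alpha$ for every $\alpha>0$'' is exactly the statement that needs proof here (in effect a total-variation-type resolvability converse), and nothing in your sketch supplies it. Second, the converse for $\alpha\in(1,2]\cup\{\infty\}$ rests entirely on the ``plateau'' heuristic, including the claim that vanishing $D_\alpha$ forces the codeword type profile to be output-consistent; you acknowledge this is the crux, but as written it is a picture, not an argument, and the naive single-codeword bound you correctly dismiss shows that the factor $\tfrac{\alpha-1}{\alpha}$ and the per-type accounting cannot be waved away. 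Third, the $\alpha=0$ case is not automatic: you need $P_{Y^{(n)}}(\supp Q_n)\to1$ with a subexponential codebook for an arbitrary channel with zeros in its transition matrix, which requires a (short) covering argument rather than the one-line dismissal given. None of these gaps indicates a wrong approach, but each is a place where the sketch stops exactly where the real work begins.
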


A direct corollary of Theorem \ref{thm: Renyi resolvability} is the following.

\begin{corollary}[\cite{yu2018renyi},  Eq.~(55)]\label{cor: Renyi resolvability in BSC}
Let $\bfY^{\ast}$ be the output process where for each n, ${Y}_{n}^{\ast}\sim {\text{\rm Ber}}(1/2)$. Then
    \begin{align*}
        J^{(D_\alpha)}  ({\text{\rm BSC}}(\delta),P_{\bfY^{\ast}}) =
        \left\{
	\begin{array}{lll}
		1-h_{\alpha} (\delta) & \mbox{if } \alpha \in (1,2] \cup \{\infty\} \\
		1-h(\delta) & \mbox{if } \alpha \in (0,1]\\
            0 & \mbox{if } \alpha =  0.
	\end{array}
\right.
    \end{align*}
\end{corollary}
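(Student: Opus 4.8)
The plan is to derive Corollary~\ref{cor: Renyi resolvability in BSC} as a direct specialization of Theorem~\ref{thm: Renyi resolvability} to the channel $\cW = {\text{\rm BSC}}(\delta)$ and the target output process $\bfY^\ast$ with each $Y_n^\ast \sim {\text{\rm Ber}}(1/2)$. The first step is to identify the consistency set $\cP({\text{\rm BSC}}(\delta), P_{\bfY^\ast})$: since ${\text{\rm BSC}}(\delta)$ maps the uniform input distribution on $\{0,1\}$ to the uniform output distribution, and (for $\delta \neq 1/2$) the channel transition matrix is invertible, the unique single-letter input distribution consistent with ${\text{\rm Ber}}(1/2)$ on the output is $P_X = {\text{\rm Ber}}(1/2)$; by the memoryless product structure the unique consistent process is the i.i.d.\ uniform process. (The boundary case $\delta = 1/2$ makes the output uniform for every input, but then $h_\alpha(\delta) = h(\delta) = 1$ and all three branches give $0$, so the formula still holds; one should note this.)

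The second step handles the range $\alpha \in (1,2] \cup \{\infty\}$. Plugging the unique $P_X$ into the first branch of Theorem~\ref{thm: Renyi resolvability} gives
\begin{align*}
    J^{(D_\alpha)}({\text{\rm BSC}}(\delta), P_{\bfY^\ast}) = \sum_x P_X(x)\, D_\alpha\big({\text{\rm BSC}}(\delta)(\cdot|x) \,\big\|\, {\text{\rm Ber}}(1/2)\big).
\end{align*}
For either input letter $x$, the conditional distribution ${\text{\rm BSC}}(\delta)(\cdot|x)$ is a Bernoulli distribution with parameter $\delta$ (relative to $x$), so $D_\alpha(\cdot\|{\text{\rm Ber}}(1/2))$ is the same for $x=0$ and $x=1$. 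It then remains to compute, for a two-point distribution $Q = (\delta, 1-\delta)$ and uniform reference $(1/2,1/2)$, the quantity $D_\alpha(Q\|U)$. Using \eqref{eq: Renyi div}: for $\alpha \in (0,1)\cup(1,\infty)$ one gets $\frac{1}{\alpha-1}\log\big((\delta^\alpha + (1-\delta)^\alpha)(1/2)^{-(\alpha-1)}\big) = 1 + \frac{1}{\alpha-1}\log(\delta^\alpha + (1-\delta)^\alpha) = 1 - h_\alpha(\delta)$ by definition \eqref{eq:Renyi}; the $\alpha=\infty$ case gives $\log\big(\max(\delta,1-\delta)/(1/2)\big) = 1 + \log\max(\delta,1-\delta) = 1 - h_\infty(\delta)$, matching the continuous extension. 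Averaging over $x$ (both terms equal) yields $1 - h_\alpha(\delta)$.

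The third step handles $\alpha \in (0,1]$: the second branch of Theorem~\ref{thm: Renyi resolvability} is $\min_{P_X} D({\text{\rm BSC}}(\delta)\|P_{\bfY^\ast}\mid P_X)$, which with the unique consistent $P_X$ equals $\sum_x P_X(x) D({\text{\rm BSC}}(\delta)(\cdot|x)\|{\text{\rm Ber}}(1/2)) = D((\delta,1-\delta)\|(1/2,1/2)) = 1 - h(\delta)$ by the $\alpha=1$ case of \eqref{eq: Renyi div} and definition \eqref{eq:Shannon}; this is also the $\alpha \to 1$ limit of the previous computation, consistent with the stated continuity of $D_\alpha$ in $\alpha$. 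The $\alpha = 0$ case is immediate from the last branch of Theorem~\ref{thm: Renyi resolvability}. There is essentially no obstacle here: the only mild subtlety is verifying uniqueness (or at least sufficiency) of the consistent input distribution so that the outer minimization collapses, and noting that the conditional divergence is input-symmetric so the average over $P_X$ is trivial; everything else is arithmetic with the definitions \eqref{eq:Renyi}, \eqref{eq:Shannon}, \eqref{eq: Renyi div}.
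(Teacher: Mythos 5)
Your proposal is correct and follows the same route the paper takes: the paper presents this as a direct specialization of Theorem~\ref{thm: Renyi resolvability} (Yu--Tan) to ${\text{\rm BSC}}(\delta)$ with uniform target output, and your identification of the unique consistent (uniform) input, the input-symmetry of the conditional divergence, and the arithmetic giving $D_\alpha((\delta,1-\delta)\|(1/2,1/2))=1-h_\alpha(\delta)$ are exactly the computations being implicitly invoked. No gaps.
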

This corollary gives necessary conditions for the rate of codes that can approximate the 
uniform distribution via smoothing. We will connect this result to the problem of finding smoothing thresholds in {Section \ref{sec: Bernoulli}}.

\section{Perfect Smoothing -- The Asymptotic case}\label{sec: Perfect Smoothing Asymptotic}
For a given family of noise kernels $(T_{r_n})_n$, there exists a threshold rate such that it is impossible to approximate uniformity with codes of rate below the threshold irrespective of the chosen code, while at the same time, there exist families of codes 
with rate above the threshold that allows perfect approximation in the limit of infinite length. 
For instance, for the Bernoulli$(\delta)$ noise applied to a code $\cC$, the smoothed distribution is nonuniform unless $\cC=\cH_n$ or $\delta=1/2$. At the same time, it is possible to approach the uniform distribution asymptotically for large $n$ once the code sequence satisfies certain conditions. Intuitively it is clear that, for a fixed noise kernel, it is easier to approximate
uniformity if the code rate is sufficiently high. In this section, we characterize the threshold rate for (asymptotically) perfect
smoothing.  Of course, the threshold also depends on the proximity measure $\Delta$ that we are using. In this section, we use perfect smoothing to mean ``asymptotically perfect''. If the proximity measure $\Delta$ for smoothing is not specified, this means that we are using KL divergence.  We obtain threshold rates for perfect smoothing measured with respect to $\alpha$-divergence for several values of $\alpha$. In the subsequent sections, we work out the details for the Bernoulli and ball noise operators, which also have some implications for
communication problems.

\begin{definition}\label{def: asymptotic perfect uniformity-divergence }
Let $(\cC_n)_n$ be a sequence of codes of increasing length $n$ and let $0\le\alpha\le\infty.$
We say that the sequence $\cC_n$ is asymptotically perfectly $D_\alpha$-smoothable with respect to the noise kernels $r_n$  if  $$\lim_{n\to\infty}D_{\alpha}(T_{r_n}f_{\cC_n}\|U_n)= 0,$$ or equivalently \eqref{eq: Renyi smoothness}-\eqref{eq: Renyi smoothness-inf}
if $$\lim_{n\to\infty} \|2^nT_{r_n}f_{\cC_n}\|_{\alpha}= 1 \quad (\alpha\ne 0,1).$$
\end{definition}

One can also define a dimensionless measure for perfect asymptotic smoothing by considering the limiting process
    \begin{equation}\label{eq:pas}
\frac{\|T_{r_n}f_{\cC_n}-U_n\|_{\alpha} }{\|T_{r_n}f_{\cC_n}\|_1 } \rightarrow 0.
  \end{equation}

 \begin{proposition}\label{prop: pas equivalence} Convergence in \eqref{eq:pas} implies perfect smoothing for all $1<\alpha\le\infty$ 
and is equivalent to it for $\alpha\ne \infty.$
\end{proposition}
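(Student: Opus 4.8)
The plan is to translate everything into statements about the normalized function $g_n := 2^n T_{r_n} f_{\cC_n}$, which is a nonnegative function with $\|g_n\|_1 = 1$ (since $T_{r_n}f_{\cC_n}$ is a pmf). Then $\|2^n T_{r_n}f_{\cC_n}\|_\alpha = \|g_n\|_\alpha$, and the denominator in \eqref{eq:pas} equals $\|T_{r_n}f_{\cC_n}\|_1 = 2^{-n}$, so condition \eqref{eq:pas} reads $2^n \|T_{r_n}f_{\cC_n} - U_n\|_\alpha = \|g_n - 1\|_\alpha \to 0$, where $1$ denotes the constant function. Thus the proposition is equivalent to the purely functional statement: for nonnegative $g_n$ with $\|g_n\|_1 = 1$, convergence $\|g_n - 1\|_\alpha \to 0$ implies $\|g_n\|_\alpha \to 1$ for all $1 < \alpha \le \infty$, and the two are equivalent when $\alpha \ne \infty$.

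For the implication (all $1 < \alpha \le \infty$): this is just the (reverse) triangle inequality for the $\alpha$-norm. Indeed $\big|\,\|g_n\|_\alpha - \|1\|_\alpha\,\big| \le \|g_n - 1\|_\alpha$, and $\|1\|_\alpha = 1$ for every $\alpha$, so $\|g_n - 1\|_\alpha \to 0$ forces $\|g_n\|_\alpha \to 1$. No positivity or $L_1$-normalization is even needed here. For the converse when $\alpha < \infty$: assume $\|g_n\|_\alpha \to 1$. Because $g_n \ge 0$ and $\|g_n\|_1 = 1$, the function $g_n$ is a probability density with respect to the uniform measure on $\cH_n$, and $\|g_n\|_\alpha^\alpha = \mathbb{E}[g_n^\alpha] \to 1 = \mathbb{E}[g_n]^\alpha$. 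I would invoke the equality case / stability of Jensen's inequality for the strictly convex function $t \mapsto t^\alpha$ (with $\alpha > 1$): convergence of the $\alpha$-th moment to the $\alpha$-th power of the mean forces $g_n \to 1$ in the appropriate sense. Concretely, expand $\mathbb{E}[(g_n - 1)^\alpha]$ when $\alpha \ge 2$ directly, or for general $\alpha \in (1,2)$ use the pointwise inequality $|t-1|^\alpha \le c_\alpha\,(t^\alpha - 1 - \alpha(t-1))$ valid for $t \ge 0$ with a finite constant $c_\alpha$; taking expectations and using $\mathbb{E}[g_n] = 1$ gives $\|g_n - 1\|_\alpha^\alpha \le c_\alpha(\|g_n\|_\alpha^\alpha - 1) \to 0$. (For $\alpha \ge 2$ one can use the even cleaner identity-type bound $|t-1|^\alpha \le t^\alpha - 1 - \alpha(t-1)$ directly, or an Efron–Stein/variance-type argument for $\alpha = 2$.)

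The main obstacle is pinning down the elementary pointwise inequality $|t-1|^\alpha \le c_\alpha(t^\alpha - 1 - \alpha(t-1))$ for all $t \ge 0$ and all fixed $\alpha \in (1,\infty)$, including the behavior as $t \to \infty$ (ratio tends to $1$, so no problem) and near $t = 1$ (both sides vanish to order $|t-1|^{\min(\alpha,2)}$, which must be checked — this is where $\alpha < 2$ versus $\alpha \ge 2$ splits the argument, and where the constant $c_\alpha$ blows up as $\alpha \downarrow 1$, consistent with the claim failing at $\alpha = 1$). Everything else — the reduction to $g_n$, the triangle inequality direction, and passing to the $\alpha$-th power — is routine. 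I would also remark why $\alpha = \infty$ is genuinely excluded from the converse: $\|g_n\|_\infty \to 1$ with $g_n \ge 0$, $\mathbb{E} g_n = 1$ does not force $\|g_n - 1\|_\infty \to 0$, since $g_n$ could dip close to $0$ on a set of vanishing measure while staying bounded by $1 + o(1)$ elsewhere; a one-line example on $\cH_n$ (e.g. $g_n$ equal to a small constant on one point and slightly above $1$ elsewhere) makes this concrete.
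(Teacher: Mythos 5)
Your first direction (triangle inequality, $\|g_n\|_\alpha-1\le\|g_n-1\|_\alpha$) is exactly the paper's argument, and your reduction to $g_n=2^nT_{r_n}f_{\cC_n}$ with $\|g_n\|_1=1$ is correct. For the converse with $\alpha\ge 2$ your route also works: the pointwise bound $|t-1|^\alpha\le t^\alpha-1-\alpha(t-1)$ for $t\ge0$ holds there (with equality at $\alpha=2$), and taking expectations against $\mathbbm{E}[g_n]=1$ gives $\|g_n-1\|_\alpha^\alpha\le\|g_n\|_\alpha^\alpha-1\to0$. This is a pointwise stand-in for the first Clarkson inequality that the paper invokes, so for $\alpha\ge2$ your proof is a legitimate, arguably more elementary, variant.

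The genuine gap is the range $1<\alpha<2$. The inequality you propose there, $|t-1|^\alpha\le c_\alpha\,(t^\alpha-1-\alpha(t-1))$ for all $t\ge0$ with a finite $c_\alpha$, is false for every fixed $\alpha\in(1,2)$, not merely degenerate as $\alpha\downarrow1$: as $t\to1$ the right-hand side is $\tfrac{\alpha(\alpha-1)}{2}(t-1)^2+O(|t-1|^3)$ while the left-hand side is $|t-1|^\alpha$, and $|t-1|^{\alpha-2}\to\infty$, so the ratio blows up near $t=1$ and no constant works. This is not a technicality you can "pin down" later; the second-order (quadratic) behavior of the Bregman divergence at $t=1$ is fundamentally too weak to dominate $|t-1|^\alpha$ pointwise when $\alpha<2$. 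The standard fix — and what the paper does — is to use the second Clarkson inequality (equivalently, the uniform convexity of $L_\alpha$ with modulus of power type $2$ for $1<\alpha\le2$), which bounds $\|g_n-1\|_\alpha^{\alpha'}$ (with $\alpha'=\alpha/(\alpha-1)$) rather than $\|g_n-1\|_\alpha^{\alpha}$ by an expression in $\|g_n\|_\alpha^\alpha$; this is an integrated, not pointwise, estimate. Alternatively one can interpolate (e.g.\ bound $|t-1|^\alpha$ by $\phi(t)^{\alpha/2}(1+t)^{\alpha(2-\alpha)/2}$ and apply H\"older), but some such device is required. As written, your proposal proves the converse only for $\alpha\ge2$.
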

\begin{proof} 
Since by the triangle inequality, 
   $$
   2^n\|T_rf_\cC\|_\alpha -1 \leq 2^n\|T_rf_\cC-U_n\|_{\alpha} ,
   $$
\eqref{eq:pas} is not weaker than the mode of convergence in
Definition~\ref{def: asymptotic perfect uniformity-divergence } for all $\alpha \in [1,\infty]$.
For $\alpha\ne 1,\infty$ we use Clarkson's inequalities \cite[p.~388]{Simon2015}. Their form
depends on $\alpha$; namely, for 
$2\le\alpha<\infty$ we have
\begin{align*}
    1 + \left\|\frac{2^nT_rf_\cC -1}{2}\right\|_{\alpha}^{\alpha}  &\leq \left\|\frac{2^nT_rf_\cC +1}{2}\right\|_{\alpha}^{\alpha} + \left\|\frac{2^nT_rf_\cC -1}{2}\right\|_{\alpha}^{\alpha}\\
    &\leq \frac{1}{2}(\|2^nT_rf_\cC\|_{\alpha}^{\alpha} + 1).
\end{align*}
For $1<\alpha<2$ the inequality has the form
\begin{align*}
    1 + \left\|\frac{2^nT_rf_\cC -1}{2}\right\|_{\alpha}^{\alpha'}  &\leq \left\|\frac{2^nT_rf_\cC +1}{2}\right\|_{\alpha}^{\alpha'} + \left\|\frac{2^nT_rf_\cC -1}{2}\right\|_{\alpha}^{\alpha'} \\
    &\leq \Big(\frac{1}{2}(\|2^nT_rf_\cC\|_{\alpha}^{\alpha} + 1)\Big)^{\alpha'/\alpha},
\end{align*}
where $\alpha' = \frac{\alpha}{\alpha-1}$ is the H{\"o}lder conjugate. These equations show that for $\alpha \in (1,\infty)$, $ \|2^nT_{r_n}f_{\cC_n}\|_{\alpha} \to 1$ implies $\|2^nT_{r_n}f_{\cC_n}-1\|_{\alpha} \to 0,$ establishing the claimed equivalence.
\end{proof}

\begin{definition}
Let $(r_n)_n$ be a sequence of noise kernels. We say that rate $R$ is achievable for perfect 
$D_\alpha$-smoothing if there exists a sequence of codes 
$(\cC_n)_n$ such that $R(\cC_n)\rightarrow R$ as $n \rightarrow \infty$ and $(\cC_n)_n$ is perfectly $D_\alpha$-smoothable.
\end{definition}

Note that if $R_1$ is achievable, then any rate $1\ge R_2 > R_1$ is also achievable. Indeed, consider a (linear) code $\cC_1$ of rate $R_1$ that has good smoothing properties. 
Construct $\cC_2$ by taking the union of $2^{n(R_2-R_1)}$ non-overlapping shifts of $\cC_1$. Then the rate of $\cC_2$ is $R_2$ and since each shift has good smoothing properties, the same is true for $\cC_2$.
Therefore, let us define the main concept of this section.
\begin{definition}\label{def:capacity}
Given a sequence of kernels $r=(r_n)_n$, define the {\em $D_{\alpha}$-smoothing capacity} as
   \begin{equation}\label{eq:smooth-p}
   S_{\alpha}^{r}:=\inf_{(\cC_n)_n}\{\liminf_{n \to \infty }R(\cC_n): \lim_{n\to\infty}D_{\alpha}(T_{r_n}f_{\cC_n}\|U_n) =0\}.
   \end{equation}
\end{definition}
Note that this quantity is closely related to resolvability: if, rather than optimizing on the output process in \eqref{eq:resolvability}, we set the output distribution to uniform and take $\Delta=D_\alpha,$ then $S^r_\alpha$ equals
$J^{(D_\alpha)}(\cW,P_{\bfy})$ for the channel $\cW$ given by the noise kernel $r.$ To avoid future confusion, we refer to the capacity of reliable transmission as Shannon's capacity. 




The following lemma provides a lower bound for $D_{\alpha}$-smoothness.

\begin{lemma}\label{lemma: D_q(T_r|U) impos}
    Let $\cC\subset\cH_n$ be a code of size $M=2^{nR}$ and let $r$ be a noise kernel. Then for $\alpha \in [0,\infty]$
    \begin{equation*}
        D_{\alpha}(T_rf_\cC\|U_n) \geq  n(1-R)-H_{\alpha}(r).
    \end{equation*}
\end{lemma}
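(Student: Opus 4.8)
The goal is a lower bound on $D_\alpha(T_rf_\cC\|U_n)$, and by Remark~\ref{rm: D_alpha mono} this is equivalent to an \emph{upper} bound on $H_\alpha(T_rf_\cC)$, namely $H_\alpha(T_rf_\cC)\le nR+H_\alpha(r)$. So the plan is to show that convolving $f_\cC$ with $r$ cannot increase the R\'enyi entropy beyond the sum of the R\'enyi entropy of $r$ and the (max-)entropy $\log M=nR$ of the code distribution. Since $f_\cC$ is uniform on $M$ points, $H_\alpha(f_\cC)=\log M=nR$ for \emph{every} $\alpha$, so the statement is really the subadditivity-type inequality $H_\alpha(r\ast f_\cC)\le H_\alpha(r)+H_\alpha(f_\cC)$ specialized to the case where one of the two pmfs is flat.

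The cleanest route is via the $L_\alpha$-norm reformulation in \eqref{eq: Renyi smoothness}--\eqref{eq: Renyi smoothness-inf} together with Young's convolution inequality, which is already invoked in the introduction. Writing $g=T_rf_\cC=r\ast f_\cC$, we have $D_\alpha(g\|U_n)=\frac{\alpha}{\alpha-1}\log\|2^ng\|_\alpha$ for $\alpha\in(1,\infty)$ and $D_\infty(g\|U_n)=\log\|2^ng\|_\infty$. Young's inequality in the normalized-norm form used here gives, for exponents with $1+\tfrac1\alpha=\tfrac1p+\tfrac1q$,
\[
  \|r\ast f_\cC\|_\alpha \;\le\; 2^{n(1-\frac1p-\frac1q+\frac1\alpha)}\,\|r\|_p\,\|f_\cC\|_q \;=\;\|r\|_p\,\|f_\cC\|_q .
\]
Now choose $q=1$ (so $p=\alpha$): then $\|f_\cC\|_1=M/2^n$, and $\|r\|_\alpha$ relates to $H_\alpha(r)$. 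Concretely $\|2^n r\|_\alpha = 2^{n}\big(2^{-n}\sum r(x)^\alpha\big)^{1/\alpha} = 2^{n(1-1/\alpha)}\big(\sum r(x)^\alpha\big)^{1/\alpha} = 2^{n(1-1/\alpha)}2^{-(1-1/\alpha)H_\alpha(r)/1}\cdot(\text{check normalization})$; more carefully, $\sum_x r(x)^\alpha = 2^{(1-\alpha)H_\alpha(r)}$, so $\|2^n r\|_\alpha = 2^{n(1-1/\alpha)} 2^{-\frac{\alpha-1}{\alpha}H_\alpha(r)}= 2^{\frac{\alpha-1}{\alpha}(n-H_\alpha(r))}$. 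Combining, $\|2^n g\|_\alpha \le 2^{-n}\|2^nr\|_\alpha\|2^nf_\cC\|_\alpha\cdot(\cdots)$ — the bookkeeping should be done directly: $\|2^ng\|_\alpha = 2^n\|r\ast f_\cC\|_\alpha \le 2^n\|r\|_\alpha\|f_\cC\|_1 = 2^n\cdot 2^{-n}\|2^nr\|_\alpha\cdot 2^{-n}M = 2^{-n}M\,2^{\frac{\alpha-1}{\alpha}(n-H_\alpha(r))}$. Taking $\frac{\alpha}{\alpha-1}\log$ of both sides yields $D_\alpha(g\|U_n)\le \frac{\alpha}{\alpha-1}(nR-n) + n - H_\alpha(r)$, which is the \emph{wrong} direction and the wrong constant — so a naive Young application with $q=1$ overshoots.

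The fix, and the step I expect to be the real obstacle, is choosing the Young exponents correctly: one should instead use the pair that makes $f_\cC$ contribute exactly its size. Taking $p=1$ (noise kernel in $L_1$, which is just $\|r\|_1=2^{-n}$ after normalization) and $q=\alpha$ gives $\|2^ng\|_\alpha\le \|2^nr\|_1\|2^nf_\cC\|_\alpha = 1\cdot M^{1-1/\alpha}\cdot 2^{\,?}$, which loses the $H_\alpha(r)$ term entirely. So plain Young cannot be the whole story; the inequality $D_\alpha(g\|U_n)\ge n(1-R)-H_\alpha(r)$ must instead be proved by a direct argument on $H_\alpha$. The clean way: use that $r\ast f_\cC$ is an average of at most $M$ shifted copies of... no — rather, observe $g(x) = \frac1M\sum_{c\in\cC} r(x-c)$, i.e. $g$ is a \emph{mixture} of $M$ translates of $r$, each with weight $1/M$. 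R\'enyi entropy is concave for $\alpha\le 1$ but the relevant bound here is the "mixture" bound: for $\alpha>1$, $H_\alpha\big(\sum_i w_i \mu_i\big)\ge$ nothing useful directly, but for the \emph{max-entropy / grouping} direction, $\|g\|_\alpha^\alpha = \sum_x\big(\frac1M\sum_c r(x-c)\big)^\alpha$. Apply the power-mean / convexity inequality $(\frac1M\sum_c a_c)^\alpha \le \frac1M\sum_c a_c^\alpha$ for $\alpha\ge1$ to get $\|g\|_\alpha^\alpha \le \frac1M\sum_c\sum_x r(x-c)^\alpha = \frac1M\cdot M\cdot\|r\|_\alpha^\alpha\cdot$(translation invariance)$=\|r\|_\alpha^\alpha$ — again losing $M$, which now gives $D_\alpha(g\|U_n)\ge n-H_\alpha(r)$ only when $R=0$. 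Hmm: evidently the correct bound genuinely needs $R$, so the right inequality is the \emph{reverse} convexity fact that $g$ has small max-value: $\|g\|_\infty = \max_x \frac1M\sum_c r(x-c) \le \frac1M\max_x\sum_c r(x-c)\le\frac1M$ is false in general. The actual mechanism: $\|g\|_\alpha\ge\|g\|_1\cdot 2^{(1-1/\alpha)\cdot 0}$...

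Given the messiness above, my concrete plan is: (i) reduce to proving $H_\alpha(T_rf_\cC)\le \log M + H_\alpha(r)$ via Remark~\ref{rm: D_alpha mono}; (ii) prove this entropy subadditivity by writing $T_rf_\cC$ as the pushforward under addition $(Z,C)\mapsto Z+C$ of the product distribution $r\otimes f_\cC$ on $\cH_n\times\cH_n$, noting pushforward does not increase R\'enyi entropy, and using that R\'enyi entropy of a product is the sum $H_\alpha(r\otimes f_\cC)=H_\alpha(r)+H_\alpha(f_\cC)=H_\alpha(r)+\log M$; (iii) conclude $D_\alpha(T_rf_\cC\|U_n)=n-H_\alpha(T_rf_\cC)\ge n-\log M-H_\alpha(r)=n(1-R)-H_\alpha(r)$. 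Steps (i) and (iii) are one-liners. The heart of the proof is the data-processing inequality for R\'enyi entropy under a deterministic map — $H_\alpha(f(X))\le H_\alpha(X)$ for $\alpha\ge 0$ — which is standard but should be cited or given a two-line proof (for $\alpha>1$ it is immediate from Jensen: grouping probabilities into fibers and using that $t\mapsto t^\alpha$ is superadditive on $\mathbb R_{\ge0}$, so $\sum_y(\sum_{x:f(x)=y}p_x)^\alpha\ge\sum_y\sum_{x:f(x)=y}p_x^\alpha=\sum_x p_x^\alpha$, and $\tfrac1{1-\alpha}\le 0$ flips it). I expect the main obstacle to be nothing deep — it is just making sure the product-entropy and data-processing facts are stated for the right range $\alpha\in[0,\infty]$ including the endpoints $0,1,\infty$, where the formulas degenerate and must be handled by continuity (for $\alpha=1$ it is ordinary entropy subadditivity; for $\alpha=\infty$ it is $\log\|g\|_\infty\le\log(M\cdot\max r)$, i.e. $H_\infty(g)\le\log M + H_\infty(r)$, which again follows from $g(x)\le\frac1M\sum_c r(x-c)\le\|r\|_\infty$ — wait that gives $H_\infty(g)\ge H_\infty(r)$, not the $\log M$ improvement; at $\alpha=\infty$ one instead bounds $g(x)=\frac1M\sum_c r(x-c)$ and uses only that each term is $\le\|r\|_\infty$ and there are $M$ of them: $g(x)\le \|r\|_\infty$ trivially, but the claim $D_\infty\ge n(1-R)-H_\infty(r)$ means $H_\infty(g)\le\log M+H_\infty(r)$, i.e. $\|g\|_\infty\le M\|r\|_\infty/2^n\cdot 2^n$... the pushforward argument handles this uniformly: $H_\infty(\text{pushforward})\le H_\infty(\text{source})=\log M+H_\infty(r)$, done). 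So the uniform treatment via deterministic data processing on the product space is the approach, and the endpoint cases are the only fiddly part.
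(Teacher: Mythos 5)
Your final plan (i)--(iii) is correct, and once the R\'enyi data-processing inequality for the addition map on the product space is unpacked it reduces to exactly the paper's computation: the superadditivity (for $\alpha>1$) or subadditivity (for $\alpha<1$) of $t\mapsto t^\alpha$ applied to the convolution sum $\sum_y r(y)f_\cC(x-y)$, together with $\sum_c f_\cC(c)^\alpha=M^{1-\alpha}$. The Young's-inequality detours in the middle of your write-up are dead ends (as you yourself conclude), and the only cosmetic difference is that the paper dispatches the endpoints $\alpha\in\{0,1,\infty\}$ by continuity of $D_\alpha$ and $H_\alpha$ rather than by separate direct arguments.
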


\begin{proof}
We will first prove that $ \|2^nT_{r}f_\cC\|_{\alpha}^{\alpha} \geq 2^{(\alpha-1)[n(1-R)-H_{\alpha}(r)]} $ for  $\alpha \in (1,\infty)$:
\begin{align*}
    \|2^nT_{r}f_\cC\|_{\alpha}^{\alpha} &= \frac{2^{n\alpha}}{2^n}\sum_{x\in \cH_n}T_{r}f_\cC(x)^{\alpha}\\
    &= 2^{n(\alpha-1)}\sum_{x\in \cH_n}[\sum_{y \in \cH_n}r(y)f_\cC(x-y)]^{\alpha}\\
    &\geq 2^{n(\alpha-1)}\sum_{x\in \cH_n}\sum_{y \in \cH_n}[r(y)f_\cC(x-y)]^{\alpha}\\
    & =  2^{n(\alpha-1)}\sum_{y \in \cH_n}r(y)^{\alpha}\sum_{x\in \cH_n}f_\cC(x-y)^{\alpha}\\
    & =  \frac{2^{n\alpha}}{|\cC|^{(\alpha-1)}}\|r\|_{\alpha}^{\alpha}\\
    & = 2^{(\alpha-1)[n(1-R)-H_{\alpha}(r)]}.
\end{align*}
Together with \eqref{eq: Renyi smoothness}, this implies that the claimed inequality holds for  $\alpha \in (1,\infty)$.

A similar calculation shows that for $\alpha \in (0,1)$, $\|2^nT_{r}f_\cC\|_{\alpha}^{\alpha} \leq 2^{(\alpha-1)[n(1-R)-H_{\alpha}(r)]},$ yielding the claim for $\alpha \in (0,1)$. 
The limiting cases $\alpha =0$, $\alpha =1$, and $\alpha =\infty$ follow by continuity of  $D_{\alpha}$ and $H_{\alpha}$ for all $\alpha\ge 0.$
\end{proof}
Define 
   \begin{equation}\label{eq:pi}
   \pi(\alpha) = \liminf_{n\to\infty}\frac{H_{\alpha}(r_n)}{n}
   \end{equation}
Lemma~\ref{lemma: D_q(T_r|U) impos} shows that it is impossible to achieve perfect $D_{\alpha}$-smoothing if $R < 1-\pi(\alpha).$ A question of interest is whether there exist sequences of codes of $R > 1-\pi(\alpha)$ that achieves perfect $D_{\alpha}$-smoothing. The next theorem shows that this is the case for $\alpha \in (1,\infty]$.

\begin{theorem}\label{thm: General smoothing capacities}

Let $r = (r_n)_n$ be a sequence of noise kernels and let $\alpha \in (1,\infty]$. Then 
\begin{align}\label{eq:sPi}
    S_{\alpha}^{r}= 1-\pi(\alpha).
\end{align} 
\end{theorem}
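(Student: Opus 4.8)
The plan is to prove the two inequalities $S_\alpha^r \geq 1-\pi(\alpha)$ and $S_\alpha^r \leq 1-\pi(\alpha)$ separately. The first is essentially immediate: Lemma~\ref{lemma: D_q(T_r|U) impos} gives $D_\alpha(T_{r_n}f_{\cC_n}\|U_n) \geq n(1-R(\cC_n)) - H_\alpha(r_n)$, so if $\liminf_n R(\cC_n) < 1-\pi(\alpha)$, then along a subsequence $n(1-R(\cC_n)) - H_\alpha(r_n)$ stays bounded away from $0$ (using the definition of $\pi(\alpha)$ as a $\liminf$ and passing to a common subsequence), which contradicts perfect $D_\alpha$-smoothing. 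Hence every code sequence achieving perfect $D_\alpha$-smoothing has $\liminf_n R(\cC_n) \geq 1-\pi(\alpha)$, giving $S_\alpha^r \geq 1-\pi(\alpha)$.

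The substance is the achievability direction: for every $R > 1-\pi(\alpha)$ (and $R \leq 1$), exhibit a code sequence of rate $\to R$ that is perfectly $D_\alpha$-smoothable; by the monotonicity/shift remark preceding Definition~\ref{def:capacity} and the infimum in the definition of $S_\alpha^r$, this suffices. Because $D_\alpha$ is monotone in $\alpha$ (Remark~\ref{rm: D_alpha mono}), it is enough to handle $\alpha = \infty$, i.e., to make $\|2^n T_{r_n}f_{\cC_n}\|_\infty \to 1$; this will automatically control all finite $\alpha \in (1,\infty)$. I would use a random coding argument: pick $\cC_n$ to be a uniformly random linear code of dimension $k = \lceil nR \rceil$ (or a random multiset of $M = 2^{nR}$ points), and estimate $\mathbb{E}\,\|2^n T_{r_n}f_{\cC_n}\|_\infty$ or rather bound the probability that $2^n T_{r_n}f_{\cC_n}(x)$ deviates from $1$ at some point $x$. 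Writing $T_{r_n}f_{\cC_n}(x) = \frac{1}{M}\sum_{c\in\cC_n} r_n(x-c)$, this is an average of $M$ nonnegative terms each at most $\|r_n\|_\infty$, with mean $2^{-n}$ (when the code points are uniform). The key quantity is the ratio $\|r_n\|_\infty / 2^{-n} = 2^n\|r_n\|_\infty = 2^{n - \min_z(-\log r_n(z))}$, and one checks $-\tfrac1n\log\|r_n\|_\infty \to \pi(\infty)$ along the relevant subsequence. A Chernoff/Bernstein bound for the sum of $M$ bounded independent terms shows the deviation probability at a fixed $x$ is at most $\exp(-c M 2^{-n}/\|r_n\|_\infty) = \exp(-c\, 2^{n(R - 1 + \pi(\infty)) + o(n)})$, and since $R - 1 + \pi(\infty) > 0$ this beats the union bound over the $2^n$ points $x$. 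Thus with high probability $\|2^n T_{r_n}f_{\cC_n}\|_\infty \to 1$, and a good code sequence exists.

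The main obstacle is the case $\pi(\infty) = \infty$ (or more precisely when $\|r_n\|_\infty$ decays faster than any fixed exponential, e.g.\ $r_n$ nearly uniform), where $1 - \pi(\infty)$ should be interpreted as $\le 0$ and the claim becomes that all rates down to $0$ are achievable — this needs a separate, easier argument (even a single point or a trivial code is smoothed). The more delicate genuine obstacle is making the random-coding estimate honestly uniform in $n$ when $H_\infty(r_n)/n$ only converges along a subsequence: one must be careful that the achievability construction targets the $\liminf$, which is exactly what $\pi(\alpha)$ encodes, so the Chernoff exponent $R - 1 + \pi(\infty)$ is positive eventually; handling the subsequential behavior and the possibility that $r_n$ is not radial or has irregular support requires stating the concentration bound in a form (e.g.\ Bennett's inequality) that depends only on $M$, the mean $2^{-n}$, and the sup $\|r_n\|_\infty$, and nothing else. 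A secondary technical point is transferring the conclusion from $\alpha=\infty$ to finite $\alpha$: this is immediate from Remark~\ref{rm: D_alpha mono} since $D_\alpha(\cdot\|U_n) \le D_\infty(\cdot\|U_n)$, so no extra work is needed there, but it should be explicitly invoked.
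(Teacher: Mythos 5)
Your converse direction and your treatment of $\alpha=\infty$ are both fine and match the paper (the paper's Lemma~\ref{lemma: D_inf(T_r|U) ach.} is exactly the Chernoff-plus-union-bound estimate you describe). The genuine gap is in the reduction of finite $\alpha$ to $\alpha=\infty$. Monotonicity of $D_\alpha$ in $\alpha$ does tell you that perfect $D_\infty$-smoothing implies perfect $D_\alpha$-smoothing, but the rate threshold you obtain this way is $1-\pi(\infty)$, not $1-\pi(\alpha)$. Since $\pi(\alpha)$ is decreasing in $\alpha$ (the paper's remark after Theorem~\ref{thm: General smoothing capacities}), we have $\pi(\infty)\le\pi(\alpha)$, hence $1-\pi(\infty)\ge 1-\pi(\alpha)$: your argument only yields $S_\alpha^r\le 1-\pi(\infty)$, which is strictly weaker than the claimed $S_\alpha^r=1-\pi(\alpha)$ whenever the inequality is strict. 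For Bernoulli noise this is exactly the interesting regime: $\pi(\alpha)=h_\alpha(\delta)$ is strictly decreasing in $\alpha$, so for $\alpha=2$ your route would only certify achievability above $1-h_\infty(\delta)$, whereas Theorem~\ref{thm: Bernoulli-smoothing capacities} needs achievability all the way down to $1-h_2(\delta)$.

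To close the gap one must estimate the $\alpha$-th moment of the random noisy distribution directly rather than its sup norm. The paper does this for rational $\alpha=1+p/q$ by defining $Q_n(\alpha)=\mathbbm{E}_\cC\|2^nT_rf_\cC\|_\alpha^\alpha$, expanding one factor of the $\alpha$-th power via the elementary inequality $(x+y)^{p/q}\le\sum_k\binom{p}{k}x^{k/q}y^{(p-k)/q}$ to separate the ``diagonal'' codeword from the rest, and obtaining the recursion
\begin{align*}
Q_n(\alpha)\le\sum_{k=0}^{p}\binom{p}{k}2^{\frac{nk}{q}\left(1-R-\frac{1}{n}H_{1+k/q}(r)\right)}Q_n\!\left(\tfrac{p-k}{q}\right),
\end{align*}
which is then killed by induction on $\lfloor\alpha\rfloor$: each $k>0$ term vanishes because $R>1-\pi(1+k/q)$ for all $k\le p$ (again by monotonicity of $\pi$), leaving only the $k=0$ term. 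Irrational $\alpha$ is then handled by proving continuity of $\pi$ and sandwiching with a nearby rational $\alpha'>\alpha$. This moment-recursion step is the core of the achievability proof and is absent from your proposal; without it (or an equivalent sharp bound on $\mathbbm{E}\|2^nT_rf_\cC\|_\alpha^\alpha$ whose exponent involves $H_\alpha(r)$ rather than $H_\infty(r)$), the theorem for finite $\alpha$ does not follow. As a minor side remark, your concern about $\pi(\infty)=\infty$ cannot arise: $H_\infty(r_n)\le\log|\supp(r_n)|\le n$, so $\pi(\infty)\in[0,1]$.
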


The proof relies on a random coding argument and is given in Appendix~\ref{appendixB}.
This result will be used below to characterize smoothing capacity of the Bernoulli and ball noise operators.
\begin{remark}
  Equality \eqref{eq:sPi} does not hold in the case $\alpha \in [0,1]$. From Theorem \ref{thm: Bernoulli-smoothing capacities} below, Bernoulli noise does not satisfy \eqref{eq:sPi} for $\alpha \in [0,1)$. To construct a counterexample for $\alpha =1$, consider  the noise kernel that is almost uniform except for one distinguished point, for instance,  $r_n(x) = 2^{-(n+1)}$ for $x\ne 0$ and
  $r_n(0)=\frac12+\frac1{2^{n+1}}.$ Performing the calculations, we then obtain
that $S_1^{r}=1$ while $\pi(1)=\frac12.$
\end{remark}

\begin{remark}
    It is worth noting that $\pi(\alpha)$ is a decreasing function of $\alpha$ for $0\le\alpha\le\infty.$
\end{remark}

\section{Bernoulli noise}\label{sec: Bernoulli}
In this section, we characterize the value $S_\alpha^{\beta_\delta}$ for a range of values of $\alpha$. Then we provide explicit code families that attain the $D_{\alpha}$-smoothing capacities.  

As already mentioned, resolvability for $\beta_\delta$ with respect to $\alpha$-divergence was considered by Yu and Tan 
\cite{yu2018renyi}. Their results, stated in Corollary~\ref{cor: Renyi resolvability in BSC}, yield an expression
for $S_{\alpha}^{\beta_\delta}$ for $\alpha \in [0,2]\cup\{\infty\}$. 
The next theorem summarizes the current knowledge about $S_\alpha^{\beta_\delta},$ 
where the claims for $2<\alpha<\infty$ form new results.

\begin{theorem}\label{thm: Bernoulli-smoothing capacities}
\begin{align}
    S_{\alpha}^{\beta_{\delta}}= 
\left\{
	\begin{array}{lll}
		0             & \mbox{if } \alpha = 0\\[.05in]
		1-h(\delta)   & \mbox{if } \alpha \in (0,1]\\[.05in]
            1-h_{\alpha} (\delta) & \mbox{if } \alpha \in (1,\infty].
	\end{array}
\right.
\end{align} 
\end{theorem}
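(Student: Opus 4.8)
The plan is to establish Theorem~\ref{thm: Bernoulli-smoothing capacities} by combining three ingredients: the general smoothing-capacity formula of Theorem~\ref{thm: General smoothing capacities}, an explicit computation of $\pi(\alpha)$ for the Bernoulli kernel, and the already-known resolvability results of Yu and Tan for the remaining values of $\alpha$. The cleanest route is to observe that $\beta_\delta^{(n)}$ is a product measure: $\beta_\delta^{(n)}(x)=\prod_{i=1}^n \delta^{x_i}(1-\delta)^{1-x_i}$, so its R\'enyi entropy tensorizes, $H_\alpha(\beta_\delta^{(n)}) = n\,h_\alpha(\delta)$ for every $\alpha\in[0,\infty]$. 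Hence from \eqref{eq:pi} we get $\pi(\alpha)=h_\alpha(\delta)$ directly, with no $\liminf$ subtlety.

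First I would dispatch the case $\alpha\in(1,\infty]$. Here Theorem~\ref{thm: General smoothing capacities} gives $S_\alpha^{\beta_\delta} = 1-\pi(\alpha) = 1-h_\alpha(\delta)$, which is exactly the claimed value. This is the genuinely new part of the statement, and it requires nothing beyond the tensorization computation plus invoking the theorem. Next, for $\alpha\in(0,1]$ I would appeal to Corollary~\ref{cor: Renyi resolvability in BSC}: taking $\cW={\text{\rm BSC}}(\delta)$ and $P_{\bfY^\ast}$ the i.i.d.\ uniform output process, one has $J^{(D_\alpha)}({\text{\rm BSC}}(\delta),P_{\bfY^\ast}) = 1-h(\delta)$ for $\alpha\in(0,1]$. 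The point noted after Definition~\ref{def:capacity} is that $S_\alpha^{r}$ coincides with $J^{(D_\alpha)}(\cW,P_{\bfY})$ when the target output is uniform and $\Delta=D_\alpha$; here the channel whose noise kernel is $\beta_\delta$ is precisely ${\text{\rm BSC}}(\delta)$, and the uniform output process $U_n$ is the $n$-marginal of $P_{\bfY^\ast}$, so $S_\alpha^{\beta_\delta}=1-h(\delta)$. For $\alpha=0$ the same corollary gives the value $0$.

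There is one gap to fill carefully: Corollary~\ref{cor: Renyi resolvability in BSC} only covers $\alpha\in(1,2]\cup\{\infty\}$ on the upper range, whereas Theorem~\ref{thm: General smoothing capacities} covers all of $(1,\infty]$; I should note that the two agree on the overlap $(1,2]\cup\{\infty\}$ (both give $1-h_\alpha(\delta)$), so the results are consistent, and that Theorem~\ref{thm: General smoothing capacities} is what supplies the previously-open interval $2<\alpha<\infty$. I also need to confirm that the infimum defining $S_\alpha^{\beta_\delta}$ matches $J^{(D_\alpha)}$ with uniform target in the sense used by Yu--Tan, i.e.\ that a sequence of codes approximating $U_n$ under $\beta_\delta$-smoothing is the same object as an input process realizing the uniform output under the BSC (this is immediate since $T_{\beta_\delta}f_{\cC_n}$ is exactly the BSC-$\delta$ output distribution when the input is uniform on $\cC_n$). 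The main obstacle is not any single hard estimate but rather bookkeeping: one must make sure the translation between the ``smoothing'' language (noise kernel $r_n=\beta_\delta^{(n)}$, target $U_n$) and the ``resolvability'' language (channel $\cW$, output process $P_{\bfY}$) is airtight at the level of definitions, particularly the $\liminf$/$\inf$ ordering in \eqref{eq:smooth-p} versus \eqref{eq:JDelta}, and that the limiting values at $\alpha=1$ (where $h_\alpha(\delta)\to h(\delta)$) and the boundary behavior match by the stated continuity of $D_\alpha$ in $\alpha$.
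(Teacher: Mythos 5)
Your proposal is correct and follows essentially the same route as the paper: the cases $\alpha\in[0,1]$ are read off from Corollary~\ref{cor: Renyi resolvability in BSC} via the identification of $S_\alpha^{\beta_\delta}$ with $J^{(D_\alpha)}({\text{\rm BSC}}(\delta),P_{\bfY^\ast})$, and the cases $\alpha\in(1,\infty]$ follow from Theorem~\ref{thm: General smoothing capacities} together with the tensorization identity $H_\alpha(\beta_\delta^{(n)})=n\,h_\alpha(\delta)$. The extra bookkeeping you flag (consistency on the overlap $(1,2]\cup\{\infty\}$ and the smoothing/resolvability dictionary) is sound but not needed beyond what the paper already records.
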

\begin{proof}  The claims for $\alpha\in[0,1]$ follow from Corollary \ref{cor: Renyi resolvability in BSC}.
The results for $\alpha=(1,\infty]$ follow from Theorem \ref{thm: General smoothing capacities} since $\frac{H_{\alpha}(\beta_{\delta})}{n} = h_{\alpha}(\delta)$.
\end{proof} 
Having quantified smoothing capacities, let us examine code families with strong smoothing properties.
Since $D_1$-smoothing capacity and Shannon capacity coincide, it is natural to speculate that codes that achieve Shannon capacity
when used on the BSC$(\delta)$ would also attain $D_1$-smoothing capacity.
However, the following result demonstrates that capacity-achieving codes do not yield perfect smoothing. For typographical reasons, we abbreviate $T_{\beta_\delta}$ by $T_{\delta}$ from this section onward.

\begin{proposition}\label{prop: Bernoulli smoothing cap.-achieving codes}
    Let $\cC_n$ be a sequence of codes achieving capacity of ${\text{\rm BSC}}(\delta)$. Then 
      $$
      D(T_{\delta}f_{\cC_n}\|U_n)  \rightarrow \infty, D(T_{\delta}f_{\cC_n}\|U_n) = o(n).
      $$
\end{proposition}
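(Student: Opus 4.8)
The plan is to bound $D(T_\delta f_{\cC_n}\|U_n)$ from both sides using the identity $D(g\|U_n) = n - H(g)$ from Remark~\ref{rm: D_alpha mono}, so the claim becomes: $n - H(T_\delta f_{\cC_n}) \to \infty$ but $n - H(T_\delta f_{\cC_n}) = o(n)$. For the $o(n)$ upper bound, I would invoke Lemma~\ref{lemma: D_q(T_r|U) impos} in reverse direction—actually, Lemma~\ref{lemma: D_q(T_r|U) impos} gives a lower bound on $D_\alpha$, so it is not directly what I want here. Instead, for the $o(n)$ statement, the natural route is: since $\cC_n$ achieves capacity of $\text{BSC}(\delta)$, its rate satisfies $R(\cC_n) \to 1 - h(\delta)$. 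The smoothed distribution $T_\delta f_{\cC_n}$ is exactly the output distribution $P_{Y_{(X_{\cC_n},\,\text{BSC}(\delta))}}$ when the input is uniform on the code. By the data processing / chain rule, $H(Y^{(n)}) \ge H(Y^{(n)} \mid X_{\cC_n}) = n h(\delta)$, and $H(Y^{(n)}) = H(X_{\cC_n}) + H(Y^{(n)}\mid X_{\cC_n}) - H(X_{\cC_n}\mid Y^{(n)}) = nR(\cC_n) + nh(\delta) - H(X_{\cC_n}\mid Y^{(n)})$. Now Fano's inequality together with the fact that $\cC_n$ achieves capacity (so block-MAP error probability $P_B \to 0$) gives $H(X_{\cC_n}\mid Y^{(n)}) = o(n)$. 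Hence $H(T_\delta f_{\cC_n}) = H(Y^{(n)}) \ge nR(\cC_n) + nh(\delta) - o(n) = n - o(n)$, which yields $D(T_\delta f_{\cC_n}\|U_n) = n - H(Y^{(n)}) \le o(n)$, i.e. it is $o(n)$. (One must also check $H(Y^{(n)}) \le n$ trivially, giving $D \ge 0$.)

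For the divergence $\to \infty$ part, I would argue that $T_\delta f_{\cC_n}$ cannot be too close to uniform because the code has vanishing rate deficiency but strictly positive "noise deficiency." The cleanest approach: apply Lemma~\ref{lemma: D_q(T_r|U) impos} with a value $\alpha > 1$, say $\alpha = 2$. That lemma gives $D_2(T_\delta f_{\cC_n}\|U_n) \ge n(1 - R(\cC_n)) - H_2(\beta_\delta) = n(1 - R(\cC_n)) - n h_2(\delta) = n(h(\delta) - h_2(\delta)) + o(n)$. Since $h_2(\delta) < h(\delta)$ strictly for $\delta \in (0,1/2)$ (Rényi entropy is strictly decreasing in the order), this lower bound tends to $+\infty$ linearly—but that would contradict the $o(n)$ claim for $D_2$, not for $D_1$. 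So this is too crude; $D_2$ genuinely grows linearly while $D_1$ grows sublinearly, and I must work directly with KL divergence.

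The correct route for $D \to \infty$ is a second-order refinement. Write $D(T_\delta f_{\cC_n}\|U_n) = n - H(Y^{(n)})$ and note $n - H(Y^{(n)}) = I(X_{\cC_n}; Y^{(n)}) - (n - H(X_{\cC_n}) - H(X_{\cC_n}\mid Y^{(n)})) $... more carefully, $n - H(Y^{(n)}) = nR(\cC_n) - I(X_{\cC_n};Y^{(n)}) + (n - nR(\cC_n)) - $ hmm; cleanly, $D(T_\delta f_{\cC_n}\|U_n) = n - H(Y^{(n)})$ and $H(Y^{(n)}) = nR(\cC_n) + nh(\delta) - H(X_{\cC_n}\mid Y^{(n)})$, so
\[
D(T_\delta f_{\cC_n}\|U_n) = n(1 - R(\cC_n) - h(\delta)) + H(X_{\cC_n}\mid Y^{(n)}).
\]
For a capacity-achieving sequence, $n(1 - R(\cC_n) - h(\delta)) = -n(R(\cC_n) - (1-h(\delta))) = o(n)$ but could be negative; meanwhile $H(X_{\cC_n}\mid Y^{(n)}) \ge 0$. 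So I need the stronger fact that capacity-achieving codes cannot have $R(\cC_n)$ exceed $1 - h(\delta)$ by more than the amount that $H(X_{\cC_n}\mid Y^{(n)})$ compensates, AND that $H(X_{\cC_n}\mid Y^{(n)}) \to \infty$. The latter holds because $P_B \to 0$ with blocklength $n$ means, by a converse/list-size argument, $H(X_{\cC_n}\mid Y^{(n)})$ cannot stay bounded unless $R(\cC_n) \le 1 - h(\delta) - \Omega(1/n)\cdot$something — actually the honest statement is: capacity-achieving with vanishing error forces $H(X_{\cC_n}\mid Y^{(n)}) = o(n)$ (done above) but the combination $n(1-R-h(\delta)) + H(X\mid Y)$ tending to $\infty$ requires knowing the rate does not overshoot. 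Since $R(\cC_n) \le 1$ always and more to the point a capacity-achieving sequence has $R(\cC_n) \to 1 - h(\delta)$ from below is NOT guaranteed, but one can instead bound $D \ge H(X_{\cC_n}\mid Y^{(n)}) - |n(1-R(\cC_n)-h(\delta))|$ and use that $D \ge 0$ always gives nothing; so the real content is showing $H(X_{\cC_n}\mid Y^{(n)}) \to \infty$ whenever $P_B \to 0$ over an increasing number of distinct codewords, which follows since $e^{-P_B \log M - O(1)} \le \dots$; concretely Fano gives $H(X_{\cC_n}\mid Y^{(n)}) \le 1 + P_B \cdot nR(\cC_n)$ (upper bound, wrong direction).

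The main obstacle, then, is the lower bound $D \to \infty$: it is the delicate direction, requiring more than Fano and more than Lemma~\ref{lemma: D_q(T_r|U) impos}. I expect the right tool is a sharpened resolvability converse (in the spirit of Theorem~\ref{thm: Resolvability thm} / \cite{han1993approximation}) showing that a code of rate below Shannon capacity — even by a vanishing amount — cannot drive the unnormalized KL divergence to output statistics to zero; equivalently, that $D(f_{\cC_n} * \beta_\delta \| U_n) \to 0$ would force $R(\cC_n) \ge 1 - h(\delta) + o(1)$ in a way incompatible with reliable decoding (which needs $R \le 1 - h(\delta)$), the tension producing a quantity that necessarily diverges. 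I would formalize this by combining the Fano-based computation above with the elementary observation that reliable decoding forces $R(\cC_n) \le 1 - h(\delta) + \varepsilon_n$ with $\varepsilon_n \to 0$, hence $n(1 - R(\cC_n) - h(\delta)) \ge -n\varepsilon_n$, and separately arguing $H(X_{\cC_n}\mid Y^{(n)}) \gg n \varepsilon_n$ — for which one would need to control how fast $R(\cC_n)$ approaches capacity versus how fast $P_B \to 0$; a clean sufficient condition is that the sequence is "strictly" capacity-achieving with $P_B \to 0$ at any rate, invoking a strong-converse-type estimate. I would present the $o(n)$ half in full (it is short) and for the $\to\infty$ half cite the resolvability converse of \cite{han1993approximation,hayashi2006general} specialized to the uniform target, which directly yields that sub-capacity codes have $D(T_\delta f_{\cC_n}\|U_n)$ bounded away from $0$ and, after the normalization bookkeeping, actually diverging.
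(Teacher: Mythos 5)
Your $o(n)$ half is correct and is essentially a self-contained version of what the paper simply cites (Theorem 2 of Shamai--Verd\'u): the identity $D(T_{\delta}f_{\cC_n}\|U_n)=n(1-R(\cC_n)-h(\delta))+H(X_{\cC_n}\mid Y^{(n)})$ together with Fano and $R(\cC_n)\to 1-h(\delta)$ does the job. You also arrive at exactly the identity the paper uses for the divergence part, so your decomposition is the right one.

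The gap is in the $\to\infty$ direction, and you have correctly diagnosed where it is but not closed it. From the identity, $D\ge n(1-R(\cC_n)-h(\delta))$ since $H(X\mid Y)\ge 0$; what is needed is a quantitative converse saying that reliable decoding forces $n(1-R(\cC_n)-h(\delta))$ to diverge, i.e.\ that the rate of a capacity-achieving sequence must sit \emph{strictly below} capacity by an unbounded amount. The paper gets this from the second-order (dispersion) converse, Theorem 49 of Polyanskiy--Poor--Verd\'u \cite{polyanskiy2010channel}: any code on ${\text{\rm BSC}}(\delta)$ with vanishing block error probability satisfies $nR(\cC_n)\le n(1-h(\delta))-K\sqrt{n}$ for some $K>0$ and large $n$, whence $D\ge K\sqrt{n}\to\infty$. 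Your proposed substitute --- the Han--Verd\'u/Hayashi resolvability converse --- cannot work here: that converse only constrains codes whose smoothed output \emph{does} converge to uniform, and indeed the paper's own Theorem~\ref{thm: Bernoulli smoothing polar codes} exhibits codes of rate tending to $1-h(\delta)$ with $D\to 0$, so rate alone (without the $\sqrt{n}$ penalty extracted from reliability) gives no divergence. Without the dispersion-type converse, your argument establishes only $D\ge -n\varepsilon_n + H(X\mid Y)$ with both terms uncontrolled from below, which is exactly the dead end you describe; the missing idea is that $P_B\to 0$ buys you $\varepsilon_n\le -K/\sqrt{n}$, not merely $\varepsilon_n\to 0$.
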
     
\begin{proof}
    The second part of the statement is  Theorem 2 in \cite{shamai1997empirical}. The first part is obtained as follows. 
    Let $\cC_n$ be a capacity-achieving sequence of codes in ${\text{\rm BSC}}(\delta)$. Then from \cite{polyanskiy2010channel} (Theorem 49), there exists a constant $K>0$ such that $nR(\cC_n) \leq n(1-h(\delta)) - K\sqrt{n}$ for large $n$. Therefore, 
    \begin{align*}
        0 \leq  H(X_{\cC_n}|Y_{{\text{\rm BSC}}(\delta),X}) = n(R(\cC_n)+h(\delta)-1)+ D(T_{\delta}f_{\cC_n}\|U_n),
    \end{align*}
    which implies $D(T_{\delta}f_{\cC_n}\|U_n) \geq K\sqrt{n}$.
\end{proof}

Apart from random codes, only polar codes are known to achieve $D_1$-smoothing capacity. Before stating the formal result, recall that polar codes are formed by applying several iterations of a linear transformation to the input, which results in creating virtual channels for individual bits with Shannon's capacity close to zero or to one, plus a vanishing proportion of intermediate-capacity channels. While by Proposition \ref{prop: Bernoulli smoothing cap.-achieving codes}, that polar codes that achieve the BSC capacity, cannot achieve $D_1$-smoothing capacity, adding some intermediate-bit channels to the 
set of data bits makes this possible. This idea was first introduced in \cite{mahdavifar2011achieving}\footnote{
\label{fn:MV} The authors of \cite{mahdavifar2011achieving} had to include these channels to attain secrecy for the wiretap channel. At the same time, it is this inclusion that did not allow them to also attain transmission reliability. See Sec.~\ref{sec: binary symmetric wiretap  channel} for more details about this issue.} and expressed in terms of resolvability in \cite{bloch2012strong}.

\begin{theorem} [\cite{bloch2012strong}, Proposition 1]\label{thm: Bernoulli smoothing polar codes}
Let $\cW$ be the ${\text{\rm BSC}}(\delta)$ channel and $\cW_n^{(i)}$  be the virtual channels formed after applying $n$ steps of the polarization procedure. 
For $\gamma \in (0,1/2)$, define $\cG_n = \{i \in [n]: C(\cW_n^{(i)}) \geq 2^{-n^\gamma}\}$. Let $\cC_n$ be the polar code corresponding to virtual channels $\cG_n$. Then $D(T_{\delta}f_{\cC_n}\|U_n) \to 0$.
\end{theorem}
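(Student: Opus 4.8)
The plan is to reduce the claim to a resolvability-type statement about the polar-code construction and then invoke the polarization analysis. The key observation is that $D(T_\delta f_{\cC_n}\|U_n)$ is, up to a sign, the conditional entropy $H(X_{\cC_n}\mid Y_{\text{\rm BSC}(\delta),X})$ shifted by the rate deficit: writing $X=X_{\cC_n}$ uniform on $\cC_n$ and $Y$ its noisy image through $\text{\rm BSC}(\delta)^{(n)}$, one has
\begin{align*}
  H(X\mid Y) = H(X) - I(X;Y) = nR(\cC_n) - \big(n - H(Y) - n h(\delta)\big),
\end{align*}
and since $H(Y) = n - D(T_\delta f_{\cC_n}\|U_n)$ (because $P_Y = T_\delta f_{\cC_n}$ and $D(P\|U_n)=n-H(P)$), this rearranges to
\begin{align*}
  D(T_\delta f_{\cC_n}\|U_n) = H(X\mid Y) - n\big(R(\cC_n) + h(\delta) - 1\big).
\end{align*}
So it suffices to show that both $H(X\mid Y) \to 0$ and $n(R(\cC_n)+h(\delta)-1)\to 0$ for the code $\cC_n$ built from the index set $\cG_n$.

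First I would handle the rate term. By the polarization theorem, the synthetic channels $\cW_n^{(i)}$ polarize: the fraction of indices with $I(\cW_n^{(i)})$ bounded away from $0$ and $1$ vanishes, and moreover $\sum_i I(\cW_n^{(i)}) = n\,I(\cW) = n(1-h(\delta))$ since the transform is information-lossless. The set $\cG_n = \{i: C(\cW_n^{(i)})\ge 2^{-n^\gamma}\}$ contains all the "good" (near-capacity-$1$) indices plus the intermediate ones, and excludes only indices with capacity below $2^{-n^\gamma}$; hence
\begin{align*}
  |\cG_n| \;\ge\; \sum_{i} I(\cW_n^{(i)}) - n\cdot 2^{-n^\gamma} \;=\; n(1-h(\delta)) - n\,2^{-n^\gamma},
\end{align*}
so $R(\cC_n) \ge 1-h(\delta) - 2^{-n^\gamma}$, giving $n(R(\cC_n)+h(\delta)-1)\ge -n\,2^{-n^\gamma}\to 0$. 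For the reverse direction one uses $I(\cW_n^{(i)})\le C(\cW_n^{(i)})$ together with the fact that every index in $\cG_n$ has $C(\cW_n^{(i)})\ge 2^{-n^\gamma}$ is not directly enough, so instead bound $|\cG_n|\le \sum_i I(\cW_n^{(i)}) + (\text{number of intermediate indices}) \le n(1-h(\delta)) + o(n)$... actually the cleanest route is: $|\cG_n| = n(1-h(\delta)) + o(n)$ follows from polarization (good set has size $n(1-h(\delta))-o(n)$, the complement of the good-plus-bad sets is $o(n)$, and $\cG_n$ lies between "good" and "good$\cup$intermediate"), whence $n(R(\cC_n)+h(\delta)-1) = o(1)$.

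Next I would bound $H(X\mid Y)$, which is where the real content sits. Writing $U = X G_n$ for the polar-transformed coordinates, reliability of successive cancellation gives $H(U_{\cG_n^c}\mid Y, U_{\mathrm{frozen?}})$ — more precisely, by the chain rule $H(X\mid Y) = H(U\mid Y) = \sum_{i=1}^n H(U_i\mid Y, U_1^{i-1})$, and each term is controlled by the Bhattacharyya / conditional-entropy parameter of the synthetic channel $\cW_n^{(i)}$. For $i\notin\cG_n$ we have $C(\cW_n^{(i)})<2^{-n^\gamma}$, i.e. $H(U_i\mid Y,U_1^{i-1}) = 1 - C(\cW_n^{(i)}) > 1-2^{-n^\gamma}$ — but these indices are the *frozen* (non-data) bits, so they are known to the decoder and contribute $0$ to $H(X\mid Y)$ when $X$ is the (uniformly random) codeword with those coordinates fixed; wait, here $X$ is uniform on $\cC_n$ meaning the frozen coordinates are set to a fixed value, so $H(U_i\mid Y, U_1^{i-1})=0$ for $i\notin\cG_n$ trivially, and for $i\in\cG_n$ these are the random data bits. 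The bound we need is $\sum_{i\in\cG_n} H(U_i\mid Y, U_1^{i-1}) \to 0$, which is exactly the statement that the "intermediate-plus-good" synthetic channels have vanishing total conditional entropy — and this is where one must be careful, because the *intermediate* channels have capacity bounded away from $1$, so individually $H(U_i\mid\cdots)$ does *not* vanish. The resolution, and the step I expect to be the main obstacle, is that this is not the error-probability / reliability claim but the *resolvability* claim: here $Y$ is the true channel output (not a noisy decoding observation), so $H(U_i\mid Y,U_1^{i-1})$ is small precisely when $\cW_n^{(i)}$ is *close to a pure-noise channel*, i.e. when $C(\cW_n^{(i)})$ is small — the opposite polarization direction. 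I would therefore re-examine which synthetic-channel parameter governs $H(X\mid Y)$; the correct statement (this is essentially Proposition 1 of \cite{bloch2012strong}, and the argument parallels \cite{mahdavifar2011achieving}) is that $H(X\mid Y) \le \sum_{i\in\cG_n} H(\cW_n^{(i)})$ where $H(\cW_n^{(i)}) = 1-C(\cW_n^{(i)})$ when indices are ordered so the synthetic channel sees $Y$ and the past inputs, and the sum over $\cG_n$ is $|\cG_n| - \sum_{i\in\cG_n}C(\cW_n^{(i)})$. Since $\cG_n$ collects all indices with $C(\cW_n^{(i)})\ge 2^{-n^\gamma}$ and $\sum_{i} C(\cW_n^{(i)}) \ge n(1-h(\delta)) = \sum_i I(\cW_n^{(i)})$ is not tight enough, I would instead split: the good indices contribute $\sum_{\text{good}}(1-C(\cW_n^{(i)})) \le (\text{number of good})\cdot 2^{-n^\gamma'}$ for a suitable $\gamma'$ by the strong polarization rate (Bhattacharyya parameters decay like $2^{-n^\beta}$, $\beta<1/2$), which is $o(1)$; and the intermediate indices, being $o(n)$ in number, contribute at most $o(n)$ — which is *not* $o(1)$. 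So the genuine difficulty is that the naive bound gives only $o(n)$, not $o(1)$, and one must exploit a finer polarization estimate: the *total* conditional entropy $\sum_{i\in\cG_n}H(\cW_n^{(i)})$ equals $|\cG_n| - \sum_{i\in\cG_n}C(\cW_n^{(i)})$, and since $\cG_n$ captures *all but an exponentially small fraction* of the total capacity, $\sum_{i\in\cG_n}C(\cW_n^{(i)}) \ge n(1-h(\delta)) - n 2^{-n^\gamma} = |\cG_n| - o(1)$ after matching with the rate bound above; thus the difference is $o(1)$ as required. Chaining the two pieces, $D(T_\delta f_{\cC_n}\|U_n) = H(X\mid Y) + o(1) = o(1)$, completing the proof; for the detailed polarization rate estimates I would cite \cite{bloch2012strong} and the standard polar-coding literature rather than reproduce them here.
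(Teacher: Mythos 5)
The paper offers no proof of this statement---it is quoted from \cite{bloch2012strong}---so your attempt can only be judged on its own terms, and it contains a fatal gap at the very first step. Your identity $D(T_\delta f_{\cC_n}\|U_n)=H(X\mid Y)-n(R(\cC_n)+h(\delta)-1)$ is correct (it is the same one used in Proposition~\ref{prop: Bernoulli smoothing cap.-achieving codes}), but your plan to show that \emph{each} of the two terms vanishes cannot work, because neither does. Writing $n(R(\cC_n)+h(\delta)-1)=|\cG_n|-\sum_i I(\cW_n^{(i)})=\sum_{i\in\cG_n}\bigl(1-I(\cW_n^{(i)})\bigr)-\sum_{i\notin\cG_n}I(\cW_n^{(i)})$, the second sum is indeed $O(n2^{-n^\gamma})$, but the first sum runs over a set that by construction contains \emph{all} intermediate synthetic channels, each contributing an amount bounded away from zero, and the number of such channels grows polynomially in $n$. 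Indeed, by the converse of \cite{Watanabe2014} quoted in Section~\ref{sec: binary symmetric wiretap channel}, any perfectly smoothable code must have $nR\ge n(1-h(\delta))+\Omega(\sqrt n)$, so the theorem's own conclusion forces $n(R(\cC_n)+h(\delta)-1)\to\infty$; since $D\ge 0$, the term $H(X\mid Y)$ diverges as well---consistent with footnote~\ref{fn:MV}, which records that this code is \emph{not} reliably decodable precisely because of the intermediate channels. Only the difference of two divergent quantities tends to zero. Your closing ``resolution'' silently reuses the false estimate $|\cG_n|=n(1-h(\delta))+o(1)$ and is therefore circular; the intermediate passage also misidentifies when $H(U_i\mid Y,U_1^{i-1})=1-I(\cW_n^{(i)})$ is small (it is small when the synthetic channel is \emph{good}, not when it is near pure noise).

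The argument that actually proves the statement runs in the opposite direction and never touches $H(X\mid Y)$. Compare the true output distribution with the one obtained when the frozen coordinates $U_i$, $i\notin\cG_n$, are \emph{also} drawn uniformly at random: in that case $X$ is uniform on $\cH_n$ and the output is exactly $U_n$. A telescoping (chain-rule) bound over the frozen positions shows that fixing them instead of randomizing them perturbs the output divergence by at most $\sum_{i\notin\cG_n} I(\cW_n^{(i)})\le n\,2^{-n^\gamma}\to 0$; the entire role of the threshold $2^{-n^\gamma}$ is to guarantee that the complement of $\cG_n$ consists only of nearly pure-noise synthetic channels, so that derandomizing them is invisible at the output. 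No control of $|\cG_n|-n(1-h(\delta))$ beyond $R(\cC_n)\to 1-h(\delta)$ is needed.
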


 Note that $\lim_{n\to\infty}R(\cC_n) =  \lim_{n\to\infty}\frac{|\cG_n|}{n} = 1-h(\delta)$. Hence, the polar code construction presented above achieves the perfect smoothing threshold with respect to KL divergence. Furthermore, since the convergence in $\alpha$ divergence for $\alpha<1$ is weaker than the convergence in $\alpha =1$, the same polar code sequence is perfectly $D_{\alpha}$-smoothable for $\alpha<1$. Noticing that the smoothing threshold for $\alpha<1$ is $1-h(\delta)$ by Theorem \ref{thm: Bernoulli-smoothing capacities},  we conclude that the above polar code sequence achieves smoothing capacity in $\alpha$-divergence for $\alpha<1$.  

As mentioned earlier, the smoothing properties of code families other than random codes and polar codes, have not been extensively studied.  We show that duals of capacity-achieving codes in the BEC exhibit good smoothing properties using the tools developed in \cite{samorodnitsky2020improved}. As the first step, we establish a connection between the smoothing of a generic linear code and the erasure correction performance of its dual code.

\begin{lemma}\label{lemma: smoothing and erasure}
Let $\cC$ be a linear code and let $X_{\cC^{\bot}}$ be a random uniform codeword of ${\cC^{\bot}}$. Let $Y_{X_{\cC^{\bot}},{\text{\rm BEC}}(\lambda)}$ be the output of the erasure channel ${\text{\rm BEC}}(\lambda)$ for the input $X_{\cC^{\bot}}$.    Then
    \begin{align}\label{eq:EGp}
        D_{\alpha}(T_{\delta}f_{\cC}\|U_n) \leq H(X_{\cC^{\bot}}|Y_{X_{\cC^{\bot}},{\text{\rm BEC}}(\lambda)}),
    \end{align}
where $\lambda  = (1-2\delta)^2$ for $\alpha =1$ and $\lambda = 1-h_{\alpha} (\delta)$ for $\alpha \in \{2,3,\dots,\infty\}$. 
\end{lemma}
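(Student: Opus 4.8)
The plan is to express the smoothness $D_\alpha(T_\delta f_\cC\|U_n)$ in terms of the $\alpha$-norm of $2^n T_\delta f_\cC$ via \eqref{eq: Renyi smoothness}, and then to compute that norm using Fourier/Walsh analysis. For a linear code $\cC$, the Fourier coefficients of $f_\cC = \1_\cC/|\cC|$ are supported on $\cC^\bot$ and are all equal to $1/|\cC^\bot|$ on that support. The Bernoulli operator $T_\delta$ acts diagonally in the Fourier basis, multiplying the coefficient at $y$ by $(1-2\delta)^{|y|}$. Hence $2^n T_\delta f_\cC$ has Fourier expansion $\sum_{y\in\cC^\bot}(1-2\delta)^{|y|}\chi_y$, and I want to bound its $\alpha$-norm. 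For $\alpha=2$ this is exact: $\|2^n T_\delta f_\cC\|_2^2 = \sum_{y\in\cC^\bot}(1-2\delta)^{2|y|}$, and one recognizes $\sum_{y\in\cC^\bot}\lambda^{|y|}$ with $\lambda=(1-2\delta)^2$ as (up to normalization) the probability that a uniform random codeword of $\cC^\bot$ survives the erasure channel $\mathrm{BEC}(\lambda)$ undetected — equivalently it is tied to $H(X_{\cC^\bot}\mid Y_{X_{\cC^\bot},\mathrm{BEC}(\lambda)})$ via the standard identity that conditional entropy of a linear code under erasures equals the expected dimension of the shortened code on the erased coordinates. For general $\alpha\in\{2,3,\dots,\infty\}$ the $\alpha$-norm is not a clean quadratic form, so the key input is Samorodnitsky's inequality from \cite{samorodnitsky2020improved}, which upper-bounds the higher $\alpha$-norm (or $\alpha$-entropy deficiency) of a noisy indicator of a linear code by the erasure-channel conditional entropy of the dual at the matched erasure parameter $\lambda = 1-h_\alpha(\delta)$; the case $\alpha=1$ with $\lambda=(1-2\delta)^2$ is the Shannon-entropy version (this is the well-known relationship exploited in \cite{hkazla2021codes}).

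Concretely the steps are: (1) rewrite $D_\alpha(T_\delta f_\cC\|U_n)=\frac{\alpha}{\alpha-1}\log\|2^n T_\delta f_\cC\|_\alpha$ (and the $\infty$ and $1$ analogues) using \eqref{eq: Renyi smoothness}–\eqref{eq: Renyi smoothness-inf} and Remark~\ref{rm: D_alpha mono}; (2) identify $T_\delta f_\cC$ as the distribution of $X_\cC + Z$ with $Z\sim\beta_\delta$, i.e.\ the output distribution of $\mathrm{BSC}(\delta)$ on a uniform codeword, and note that $2^n T_\delta f_\cC(x) = 2^n\Pr(Y_{X_\cC,\mathrm{BSC}(\delta)}=x)$; (3) invoke the harmonic-analytic bound of \cite{samorodnitsky2020improved} that translates the $\alpha$-norm deficiency of this noisy code distribution into the conditional entropy $H(X_{\cC^\bot}\mid Y_{X_{\cC^\bot},\mathrm{BEC}(\lambda)})$ of the dual under erasures, with $\lambda$ chosen so that the ``noise-parameter matching'' works — $\lambda=(1-2\delta)^2$ in the $L_2$/KL regime and $\lambda=1-h_\alpha(\delta)$ for integer $\alpha\ge 2$ (including $\alpha=\infty$, where $h_\infty(\delta)=-\log(1-\delta)$ for $\delta\le 1/2$); (4) assemble these to get \eqref{eq:EGp}.

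I expect the main obstacle to be step (3): making precise which of Samorodnitsky's inequalities applies and verifying that the parameter identification $\lambda=1-h_\alpha(\delta)$ is exactly the one it yields for integer $\alpha$. The subtlety is that $\|2^n T_\delta f_\cC\|_\alpha^\alpha$ for integer $\alpha\ge 3$ expands into a sum over $\alpha$-tuples of codewords summing to zero (a weight-enumerator-type quantity), and one must show this is controlled by the dual's erasure performance at the stated $\lambda$ rather than at the naive $(1-2\delta)^2$; this is precisely the content of the ``improved'' bounds in \cite{samorodnitsky2020improved} and where all the analytic work sits. The linear-algebra identity in step (1)–(2) relating $H(X_{\cC^\bot}\mid Y_{\cC^\bot,\mathrm{BEC}(\lambda)})$ to $\mathbb{E}[\dim(\cC^\bot \text{ shortened on surviving coordinates})]$ is routine, and the $\alpha=2$ case is a direct Parseval computation, so those are not the bottleneck. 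A secondary point to handle carefully is the $\alpha=\infty$ endpoint, where \eqref{eq: Renyi smoothness-inf} replaces \eqref{eq: Renyi smoothness} and one should check the bound passes to the limit (or apply the $\alpha\to\infty$ form of Samorodnitsky's inequality directly).
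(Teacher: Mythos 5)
Your overall route is the paper's: combine Samorodnitsky's inequalities (the entropy version for $\alpha=1$ with $\lambda=(1-2\delta)^2$, the norm version for integer $\alpha\ge 2$ and $\alpha=\infty$ with $\lambda=1-h_\alpha(\delta)$) with an exact identification of the resulting bound with $H(X_{\cC^{\bot}}\mid Y_{X_{\cC^{\bot}},{\text{\rm BEC}}(\lambda)})$. One clarification on attribution: Samorodnitsky's inequality only bounds $\log\|T_\delta f\|_\alpha$ by $\mathbbm{E}_{\Gamma\sim\lambda}\log\|\mathbbm{E}(f|\Gamma)\|_\alpha$ for an arbitrary nonnegative $f$; the passage from this subcube-averaged norm to the dual code's erasure conditional entropy is a separate, code-specific computation, not part of the cited theorem. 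For a linear code one shows $\|\mathbbm{E}(2^nf_\cC|\Gamma)\|_\alpha=\bigl(2^{|\Gamma|}|\cC|^{-1}F^{(\cC)}(\Gamma,0)\bigr)^{(\alpha-1)/\alpha}$, where $F^{(\cC)}(\Gamma,0)$ counts codewords vanishing on $\Gamma$, and then uses the duality $\dim(\cC^{\Gamma^c})=\dim\cC-|\Gamma|+\dim((\cC^\bot)^{\Gamma})$ to convert $\mathbbm{E}_{\Gamma\sim\lambda}\log\bigl(2^{|\Gamma|}|\cC|^{-1}F^{(\cC)}(\Gamma,0)\bigr)$ into $H(X_{\cC^{\bot}}\mid Y)$. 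You flag this identity as routine, and it is, but it is the part one actually has to write out; note also that $\frac{\alpha}{\alpha-1}\mathbbm{E}_{\Gamma}\log\|\mathbbm{E}(2^nf_\cC|\Gamma)\|_\alpha$ turns out to be independent of $\alpha$, which is why a single conditional entropy serves all orders (including the $\alpha=1$ and $\alpha=\infty$ endpoints).

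Your aside that the case $\alpha=2$ is ``a direct Parseval computation'' is wrong, and the error is instructive. Parseval gives $D_2(T_\delta f_\cC\|U_n)=\log\sum_{y\in\cC^\bot}(1-2\delta)^{2|y|}=\log\mathbbm{E}_{\Gamma\sim\mu}N(\Gamma)$ with $\mu=(1-2\delta)^2$, where $N(\Gamma)$ is the number of dual codewords supported on the erased set $\Gamma$, whereas $H(X_{\cC^{\bot}}\mid Y_{{\text{\rm BEC}}(\mu)})=\mathbbm{E}_{\Gamma\sim\mu}\log N(\Gamma)$. Jensen therefore yields the \emph{reverse} inequality $D_2\ge H(X_{\cC^{\bot}}\mid Y_{{\text{\rm BEC}}(\mu)})$, and at the wrong parameter besides: the lemma's $\alpha=2$ claim is stated at the strictly larger erasure rate $\lambda=1-h_2(\delta)>(1-2\delta)^2$, and it genuinely requires the $\alpha=2$ case of Samorodnitsky's norm inequality, exactly like the other integer orders. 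With that aside discarded, your steps (1)--(4) assemble into the paper's proof.
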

The proof is given in Appendix \ref{appendix: proof: lemma: smoothing and erasure}.

Using this lemma, we show that duals of BEC capacity-achieving codes  (with growing distance) exhibit good smoothing properties. In particular, they achieve $D_{\alpha}$-smoothing capacities for $\alpha \in \{2,3,\dots,\infty\}$.

\begin{theorem}\label{thm: Bernoulli Smoothing - BEC capacity-achieving codes}
Let $(\cC_n)_n$ be a sequence of linear codes with rate $R_n\to R$. Suppose that
the dual sequence $(\cC_n^\bot)_n$ achieves Shannon's capacity of the BEC$(\lambda)$ with $\lambda=R$, and assume that $d(\cC_n^\bot) = \omega(\log n).$ If  $R> (1-2\delta)^2$, then 
    \begin{align*}
        \lim_{n \to \infty}D(T_{\delta}f_{\cC_n}\|U_n) = 0. 
    \end{align*}
Additionally, for $\alpha \in \{2,3,\dots,\infty\}$, if $R>1-h_{\alpha} (\delta)$, then 
    \begin{align*}
       \lim_{n \to \infty} D_{\alpha}(T_{\delta}f_{\cC_n}\|U_n) = 0. 
    \end{align*}
    In particular, the sequence $\cC_n$ achieves $D_\alpha$-smoothing capacity $S_\alpha^{\beta_\delta}$ for $\alpha \in \{2,3,\dots,\infty\}$.
\end{theorem}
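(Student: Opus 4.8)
The plan is to reduce everything to the bound in Lemma~\ref{lemma: smoothing and erasure}, which gives
\[
D_{\alpha}(T_{\delta}f_{\cC_n}\|U_n) \leq H\bigl(X_{\cC_n^{\bot}}\mid Y_{X_{\cC_n^{\bot}},{\text{\rm BEC}}(\lambda)}\bigr),
\]
with $\lambda=(1-2\delta)^2$ when $\alpha=1$ and $\lambda=1-h_{\alpha}(\delta)$ when $\alpha\in\{2,3,\dots,\infty\}$. So the whole theorem comes down to showing that, under the stated hypotheses, the right-hand side of this inequality tends to $0$. First I would fix the relevant $\alpha$ and set $\lambda$ accordingly; note that the hypothesis $R>(1-2\delta)^2$ (resp.\ $R>1-h_{\alpha}(\delta)$) is exactly the statement that the dual rate $R=\lim R_n$ strictly exceeds $\lambda$, i.e.\ $(\cC_n^\bot)_n$ is a sequence of codes whose rate lies \emph{above} the Shannon capacity $1-\lambda$ of the BEC$(\lambda)$ that it is being used on. Wait --- that is the wrong direction for reliable decoding; the point is that $(\cC_n^\bot)_n$ is assumed to \emph{achieve} capacity of BEC$(\lambda')$ for $\lambda'=R$, a \emph{different} erasure channel, and since $\lambda<\lambda'=R$ the channel BEC$(\lambda)$ is (stochastically) less noisy than the one the code is designed for. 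So the code $\cC_n^\bot$, being capacity-achieving on the noisier BEC$(R)$, decodes with vanishing block error probability on the cleaner BEC$(\lambda)$.

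The key steps, in order, are: (1) Invoke capacity-achievability of $(\cC_n^\bot)_n$ on BEC$(R)$ to conclude that the block-MAP (equivalently, for the BEC, peeling/Gaussian-elimination) decoding error probability on BEC$(\lambda)$ satisfies $P_B(\text{BEC}(\lambda),\cC_n^\bot)\to 0$; here one uses the monotonicity of erasure decoding in $\lambda$ together with $\lambda<R$ and the fact that a capacity-achieving sequence on BEC$(R)$ has vanishing error on any BEC with parameter $<R$ (this is where the strict inequality in the hypothesis is spent). (2) Convert this small block-error probability into a small conditional entropy. By Fano-type reasoning, $H(X_{\cC_n^\bot}\mid Y)\le P_B\cdot \log|\cC_n^\bot| + h(P_B) \le P_B\cdot nR_n + 1$, which still does not obviously go to $0$ because $P_B$ is only $o(1)$ while $nR_n\to\infty$. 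This is the main obstacle, and it is exactly where the distance hypothesis $d(\cC_n^\bot)=\omega(\log n)$ enters: for the BEC, the conditional entropy $H(X_{\cC_n^\bot}\mid Y)$ equals the \emph{expected dimension of the space of codewords consistent with the unerased positions}, and when a decoding failure occurs the number of consistent codewords is at least $2^{d(\cC_n^\bot)}$ but the probability of such a failure can be bounded --- via a union bound over low-weight ``stopping'' configurations or, more cleanly, via the refined estimate that $P_B$ for a capacity-achieving code on a strictly-subcritical BEC decays faster than any inverse polynomial in $n$ --- so that $P_B \cdot nR_n \to 0$ once $d(\cC_n^\bot)$ grows faster than $\log n$.

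Concretely, I would argue as follows for step (2): write $H(X_{\cC_n^\bot}\mid Y) = \sum_{e} \Pr(\text{erasure pattern }e)\cdot \dim\bigl(\cC_n^\bot \text{ restricted to non-erased coords}\bigr)^{\perp\text{-nullity}}$, i.e.\ the expected number of ``free'' information bits given the erasures. Split the expectation according to whether decoding succeeds (contribution $0$) or fails. On the failure event the nullity is at least $1$ but at most $nR_n$; crucially, the failure event forces an entire nonzero codeword to be supported on erased coordinates, hence at least $d(\cC_n^\bot)$ coordinates are erased within that codeword's support, so one gets the crude bound $\Pr(\text{failure})\le$ (number of minimal stopping sets) $\times\,\lambda^{d(\cC_n^\bot)}$-type decay; combined with a polynomial bound on the stopping-set count and the fact that capacity-achievability already guarantees $\Pr(\text{failure})=o(1)$, the product with $nR_n$ vanishes provided $d(\cC_n^\bot)=\omega(\log n)$. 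A cleaner route, if available from the cited literature, is simply: for any ensemble achieving BEC capacity, the word-error probability below capacity is $n^{-\Omega(1)}$ or better, so $n\cdot P_B\to 0$ outright; the distance condition then merely ensures one is not in a degenerate regime. Either way, once $H(X_{\cC_n^\bot}\mid Y)\to 0$ we read off $D_\alpha(T_\delta f_{\cC_n}\|U_n)\to 0$ from Lemma~\ref{lemma: smoothing and erasure}, and the final "in particular'' claim follows since by Theorem~\ref{thm: Bernoulli-smoothing capacities} the threshold $S_\alpha^{\beta_\delta}$ equals $1-h_\alpha(\delta)$ for $\alpha\in\{2,3,\dots,\infty\}$, which we have just matched from above, while Lemma~\ref{lemma: D_q(T_r|U) impos} shows it cannot be beaten from below.
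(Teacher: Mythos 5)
Your overall architecture is exactly the paper's: reduce to Lemma~\ref{lemma: smoothing and erasure} with $\lambda=(1-2\delta)^2$ (resp.\ $\lambda=1-h_\alpha(\delta)$), observe that $\lambda<R$ so that $\cC_n^\bot$ is used strictly below the capacity of ${\text{\rm BEC}}(\lambda)$, conclude $P_B({\text{\rm BEC}}(\lambda),\cC_n^\bot)\to 0$, convert to $H(X_{\cC_n^\bot}\mid Y)\to 0$ via Fano, and finish with Theorem~\ref{thm: Bernoulli-smoothing capacities} and Lemma~\ref{lemma: D_q(T_r|U) impos} for the capacity claim. You also correctly isolate the real obstacle: Fano only gives $H(X\mid Y)\le h(P_B)+P_B\cdot nR(\cC_n^\bot)$, so one needs $P_B=o(1/n)$, not merely $P_B=o(1)$.

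The gap is in how you close that obstacle. Your first route bounds the failure probability by (number of minimal stopping sets)$\times\lambda^{d}$ and asserts ``a polynomial bound on the stopping-set count''; no such bound is available from the hypotheses (the number of minimum-weight codewords of, say, an RM code is not polynomial in $n$), and the union bound $\sum_w A_w\lambda^w$ need not even converge to $0$ without further information on the weight distribution. Your second route appeals to a generic $P_B=n^{-\Omega(1)}$ decay and claims ``so $n\cdot P_B\to 0$ outright'' --- but $n^{-c}$ with $c\le 1$ does not give this, and you then demote the distance hypothesis to a non-essential ``degenerate regime'' safeguard. In the paper the distance hypothesis is precisely what delivers the needed rate: it invokes the Tillich--Z\'emor sharp-threshold theorem, under which a capacity-achieving sequence with $d(\cC_n^\bot)=\omega(\log n)$ satisfies $P_B({\text{\rm BEC}}(R-\epsilon),\cC_n^\bot)=o(1/n)$ for every fixed $\epsilon>0$. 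With that citation (or an equivalent quantitative threshold statement) substituted for your step (2), the proof is complete and coincides with the paper's.
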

\begin{proof}
Since the dual codes achieve capacity of the BEC, it follows from \cite[Theorem~ 5.2]{tillich2000discrete} that, if their distance grows with $n$, then their decoding error probability vanishes. In particular,
if   $ d(\cC_n^\bot) = \omega(\log(n)),$ then 
$P_B({\text{\rm BEC}}(R-\epsilon),{\cC_n^{\bot}}) = o(\frac{1}{n})$ for all $\epsilon \in (0,R].$ Hence, from Fano's inequality, $$\lim_{n \to \infty} H(X_{\cC_n^{\bot}}|Y_{X_{\cC_n^{\bot}},{\text{\rm BEC}}(R-\epsilon)}) = 0.$$ 
Now if $R> (1-2\delta)^2$, then there exists $\epsilon_0$ such that $R-\epsilon_0 = (1-2\delta)^2$. Therefore, from Lemma \ref{lemma: smoothing and erasure},  
\begin{align*}
    \lim_{n \rightarrow \infty}D(T_{\delta}f_{\cC_n}\|U_n) \leq  \lim_{n \to \infty} H(X_{\cC_n^{\bot}}|Y_{X_{\cC_n^{\bot}},{\text{\rm BEC}}(R-\epsilon_0)}) = 0.
\end{align*}
Similarly, $\text{ if }  R> 1- h_{\alpha} (\delta) \text{ for } \alpha \in \{2,3,\dots,\infty\}, \text{ then } \lim_{n \rightarrow \infty} D_{\alpha}(T_{\delta}f_{\cC_n}\|U_n) = 0.$

Together with Theorem \ref{thm: Bernoulli-smoothing capacities}, we have now proved the final claim.
\end{proof}

The known code families that achieve capacity of the BEC include polar codes, LDPC codes, and doubly transitive codes, such as constant-rate RM codes. LDPC codes do not fit the assumptions because of low dual distance, but the other codes do. This yields explicit
families of codes that achieve $D_{\alpha}$-smoothing capacity.

We illustrate the results of this section in Fig.~\ref{fig: Bernoulli Smoothing - Thresholds}, where the curves show the achievability and impossibility rates for perfect smoothing with respect to Bernoulli noise. Given a code (sequence) of rate
$R$, putting it through a noise ${\beta_\delta}$ below Shannon capacity cannot achieve
perfect smoothing. The sequence of polar codes from \cite{mahdavifar2011achieving}, cited in Theorem~\ref{thm: Bernoulli smoothing polar codes}, is smoothable at rates
equal to the Shannon capacity (we stress again that they do not provide a decoding
guarantee at that noise level; see footnote~\ref{fn:MV}). At the second
curve from the bottom, the duals of codes that achieve Shannon's capacity in BEC achieve perfect $D_1$-smoothing; at the third (fourth) curve, these codes
are perfectly $D_2$- (or $D_\infty$-) smoothable, and they achieve the corresponding
smoothing capacity.

\begin{figure}[ht]
\hspace*{-.5in}\includegraphics[width=17cm,center]{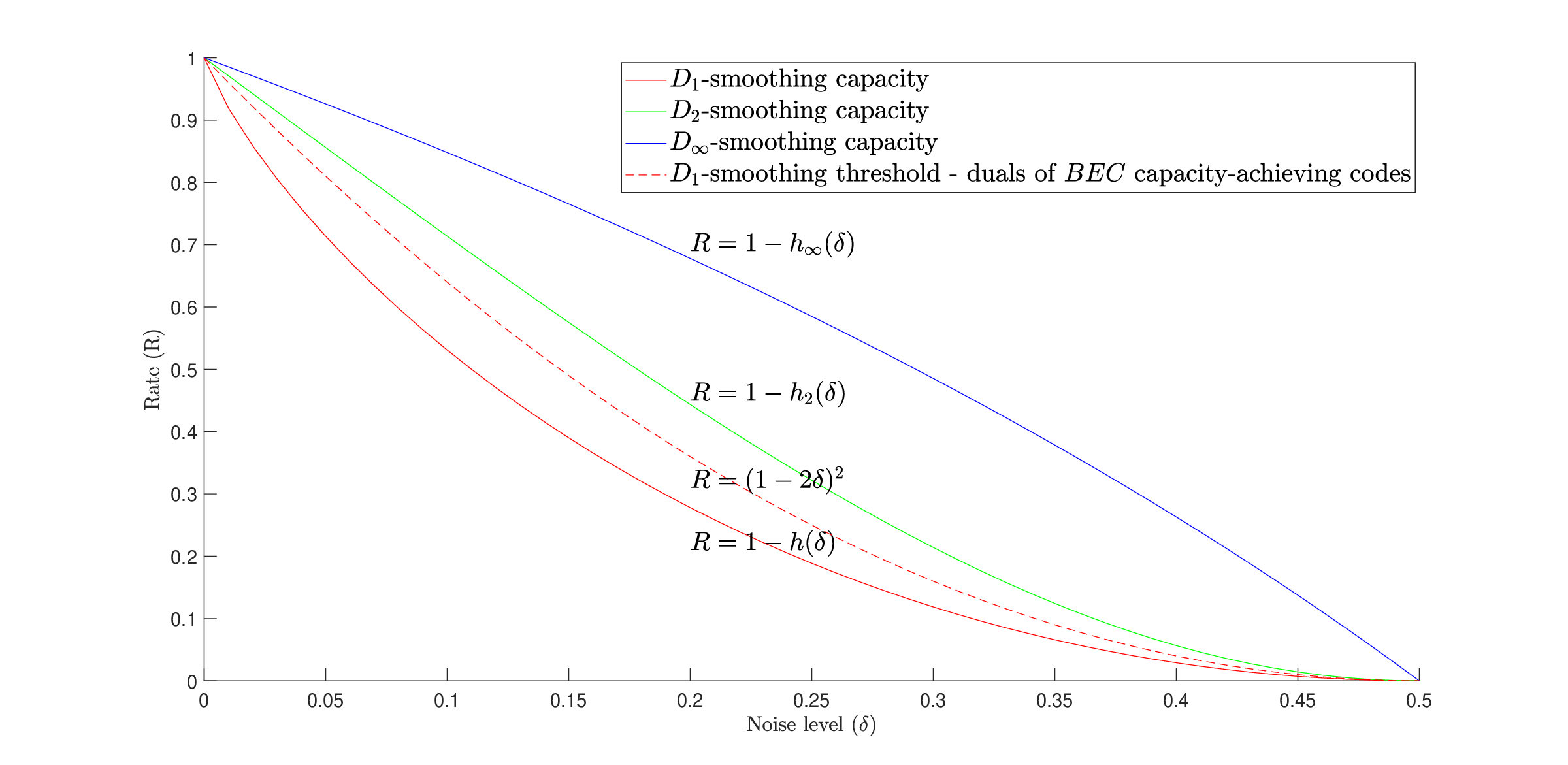}
\centering
\caption{Capacities and achievable rates for perfect smoothing. The lowermost curve gives Shannon capacity of the BSC$(\delta)$, the second curve
from the bottom is the smoothing threshold for duals of BEC capacity-achieving codes, the third one is $S_2^{\beta_\delta}$ and the top one is $S_\infty^{\beta_\delta}.$} 
\label{fig: Bernoulli Smoothing - Thresholds}
\end{figure}

\begin{remark} 
Observe that the strong converse of the channel coding theorem does not imply perfect smoothing. To give
a quick example, consider a code $\cC_n=B(0,\delta'n)$ formed of all vectors in 
the ball. Let $0<\delta<1/2$ and let us use this code on a BSC$(\delta)$, where
$h(\delta)+h(\delta') > 1$ and $\delta<1/2$. 
From the choice of the parameters, the rate of $\cC_n$ is above capacity, and therefore, $P_B({\text{\rm BSC}}(\delta),\cC_n) \approx 1$ from the strong converse.
At the same time,
\begin{align*}
    D(T_{\delta}f_{\cC_n}\|U_n) &= n - H(b_{\delta' n} \ast \beta_{\delta}) =n - H(\beta_{\delta'} \ast \beta_{\delta}) + O(\sqrt{n}) \\
    & = n(1-h(\delta'(1-\delta)+\delta(1-\delta'))) +  O(\sqrt{n}) .
\end{align*}
where the transition from the ball noise to Bernoulli noise (the second equality)
is shown in \cite{ordentlich2018entropy}. Since $\delta'(1-\delta)+\delta(1-\delta'))<1/2$ for all $\delta<1/2, \delta'<1/2,$ 
we conclude that $D(T_{\delta}f_{\cC_n}\|U_n)\nrightarrow 0.$ 
\end{remark}

\begin{remark} In this paper, we mostly study the tradeoff between the rate of codes and the level of the noise needed
to achieve perfect smoothing. A recent work of Debris-Alazard \emph{et al.} \cite{debris2023smoothing} considered
guarantees for smoothing derived from the distance distribution of codes and their dual distance (earlier, similar
calculations were performed in \cite{barg2021stolarsky,ashikhmin2002bounds}). Our approach enables us to 
find conditions for perfect smoothing similar to \cite{debris2023smoothing} but relying on fewer assumptions.
\begin{proposition}\label{prop: sm and dual_dist}
Let $\cC_n$ be a sequence of codes whose dual distance $d(\cC_n^\bot) \geq \partial^\bot n$ where $\partial^\bot \in (0,1)$. If $\partial^\bot > (1-2\delta)^2$, then
    \begin{align*}
        \lim_{n \to \infty} D(T_{\delta}f_{\cC_n}\|U_n) = 0.
    \end{align*}
\end{proposition}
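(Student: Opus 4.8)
The plan is to reduce the claim to an erasure-decoding estimate through Lemma~\ref{lemma: smoothing and erasure} and then to kill the resulting conditional entropy by a concentration bound on the number of erasures. Fix $\delta\in(0,1/2)$, put $\lambda=(1-2\delta)^2$, and note that the hypothesis reads $\partial^\bot>\lambda$. Taking $\alpha=1$ in Lemma~\ref{lemma: smoothing and erasure} yields
\[
  D(T_{\delta}f_{\cC_n}\|U_n)\ \le\ H\big(X_{\cC_n^{\bot}}\,\big|\,Y_{X_{\cC_n^{\bot}},{\text{\rm BEC}}(\lambda)}\big),
\]
so it is enough to prove that the right-hand side tends to $0$ as $n\to\infty$.

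To estimate this quantity I would condition on the (random) set $E\subseteq[n]$ of erased coordinates, which is revealed by the channel output and contains each coordinate independently with probability $\lambda$. Given that the erased set equals $e$, the codewords of $\cC_n^{\bot}$ compatible with the unerased symbols form a coset of the subcode $(\cC_n^{\bot})_e:=\{c\in\cC_n^{\bot}:\ c_i=0\ \text{for all}\ i\notin e\}$; hence the conditional entropy of $X_{\cC_n^{\bot}}$ given this event equals $\dim(\cC_n^{\bot})_e$. When $|e|<d(\cC_n^{\bot})$ this subcode is trivial, since a nonzero codeword supported inside $e$ would have weight at most $|e|<d(\cC_n^{\bot})$; and in all cases $\dim(\cC_n^{\bot})_e\le n$. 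Averaging over $E$ and using $d(\cC_n^{\bot})\ge\partial^\bot n$,
\[
  H\big(X_{\cC_n^{\bot}}\,\big|\,Y_{X_{\cC_n^{\bot}},{\text{\rm BEC}}(\lambda)}\big)=\mathbb{E}_E\big[\dim(\cC_n^{\bot})_E\big]\ \le\ n\,\Pr\big(|E|\ge d(\cC_n^{\bot})\big)\ \le\ n\,\Pr\big(|E|\ge\partial^\bot n\big).
\]

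Finally, $|E|$ is a sum of $n$ i.i.d.\ Bernoulli($\lambda$) variables with mean $\lambda n$, and $\partial^\bot-\lambda>0$ is a constant, so a Chernoff/Hoeffding bound gives $\Pr(|E|\ge\partial^\bot n)\le\exp\!\big(-2(\partial^\bot-\lambda)^2 n\big)$. Therefore $H(X_{\cC_n^{\bot}}\mid Y_{X_{\cC_n^{\bot}},{\text{\rm BEC}}(\lambda)})\le n\exp\!\big(-2(\partial^\bot-\lambda)^2 n\big)\to 0$, which together with the first display finishes the proof. There is no serious obstacle: the only step that deserves care is the combinatorial identity expressing the BEC conditional entropy as the dimension of the subcode supported on the erased positions, and its vanishing once $|E|<d(\cC_n^{\bot})$. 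Equivalently, one may simply quote the standard fact---already used in the proof of Theorem~\ref{thm: Bernoulli Smoothing - BEC capacity-achieving codes}---that a code corrects every erasure pattern of size below its minimum distance, so here the block-erasure probability, and hence the conditional entropy, is bounded by $n\,\Pr(|E|\ge d(\cC_n^{\bot}))$, which is exponentially small; note that, unlike in Theorem~\ref{thm: Bernoulli Smoothing - BEC capacity-achieving codes}, this argument uses only the lower bound on the dual distance and not any capacity-achieving property.
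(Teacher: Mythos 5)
Your proof is correct and follows the paper's argument exactly: apply Lemma~\ref{lemma: smoothing and erasure} with $\alpha=1$ and $\lambda=(1-2\delta)^2$, then show the BEC conditional entropy vanishes because $\partial^\bot>\lambda$. The paper states this last step as a one-line observation, whereas you supply the (correct) details — the coset/subcode identity for the conditional entropy and the Hoeffding bound on the erasure count — so your write-up is simply a fleshed-out version of the same proof.
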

\begin{proof}
    Notice that $\lim_{n \to \infty} H(X_{\cC_n^{\bot}}|Y_{X_{\cC^{\bot}},{\text{\rm BEC}}(\lambda)}) = 0$ if $\partial^\bot > \lambda$. With this, the proof is a straightforward application of Lemma~\ref{lemma: smoothing and erasure}. 
\end{proof}
\end{remark}
Compared to \cite{debris2023smoothing}, this claim removes the restrictions on the support of the dual distance distribution of the codes $\cC_n.$

\section{Binary symmetric wiretap channels }\label{sec: binary symmetric wiretap channel}

In this section, we discuss applications of perfect smoothing to the BSC wiretap 
channel. Wyner's wiretap channel model $\cV$ \cite{wyner1975wire} for the case of 
BSCs is defined as follows. The system is formed of three terminals, $A,B,$ and $E$. 
Terminal $A$ communicates with $B$ by sending messages $M$ chosen from a finite set 
$\cM$. Communication from $A$ to $B$ occurs over a BSC $\cW_b$ with crossover 
probability $\delta_b$, and it is observed by the eavesdropper $E$ via another BSC 
$\cW_e$ with crossover probability $\delta_e>\delta_b.$ A message $M\in\cM$ is 
encoded into a bit sequence $X\in\cH_n$ and sent from $A$ to $B$ in $n$ uses of the 
channel $\cW_b.$ Terminal $B$ observes
the sequence $Y=X+W_b,$ where $W_b\sim\text{Bin}(n,\delta_b)$ is the noise vector, while terminal $E$ observes the sequence $Z=X+W_e$ with $W_e\sim \text{Bin}(n,\delta_e)$. We assume that messages are encoded into a subset of $\cH_n,$ which imposes some probability distribution on the input of the channels.
The goal of the encoding is to ensure reliability and secrecy of communication. The reliability requirement amounts to the condition $\Pr(M \neq \hat{M}) \to 0$ as $n\to\infty$, where $\hat{M}$ is the estimate of $M$ made by $B$. To ensure
secrecy, we require the {\em strong secrecy condition} $I(M;Z) \rightarrow 0$. This is in contrast to the condition
$\frac 1n(M;Z) \rightarrow 0$ studied in the early works on the wiretap channel, which is now called weak secrecy. Denote by $R=\frac 1n\log|\cM|$ the transmission rate. The {\em secrecy capacity} $C_s(\cV)$ is defined as the supremum of the rates that permit reliable transmission, which also conforms to the secrecy condition.

The nested coding scheme, proposed by Wyner \cite{wyner1975wire}, has been 
the principal tool of constructing well-performing transmission protocols for the wiretap channel \cite{thangaraj2007applications,mahdavifar2011achieving,gulcu2016achieving}. To describe it, 
let $\cC_e $ and $\cC_b $ be two linear codes such that $\cC_e \subset \cC_b$ and $|\cM| = \frac{|\cC_b|}{|\cC_e|}$. We assign each message $m$ to a unique coset of $\cC_e$ in $\cC_b$. The sequence transmitted by $A$ is a uniform random vector from 
the coset. As long as the rate of the code $\cC_b$ is below the capacity of $\cW_b$,
we can ensure the reliability of communication from $A$ to $B$.

Strong secrecy can be achieved relying on perfect smoothing. Denote by $c_m$ a leader
of the coset that corresponds to the message $m$. The basic idea is that if $P_{Z|M=m} = (T_{\delta}f_{\cC_e})
(\cdot+c_m)$ is close to a uniform distribution $U_n$ for all $m$, these conditional pmf's are almost indistinguishable from each other, and terminal $E$ has no means of inferring the transmitted message from the observed bit string $Z$. 

As mentioned earlier, weak secrecy results for the wiretap channel based on LDPC codes and on polar codes were presented in \cite{thangaraj2007applications} and \cite{mahdavifar2011achieving}, respectively. The problem that these schemes faced, highlighted in Theorems \ref{prop: Bernoulli smoothing cap.-achieving codes}, \ref{thm: Bernoulli smoothing polar codes}, is that code sequences that achieve BSC capacity have a rate gap of at least $1/\sqrt n$ to the capacity value. At the same time, the rate of perfectly smoothable codes must exceed capacity by a similar quantity \cite{Watanabe2014}. For this reason, the authors of \cite{mahdavifar2011achieving} included the intermediate virtual channels in their polar coding scheme, which gave them strong secrecy, but required a noiseless main channel (cf. footnote~\ref{fn:MV}). A similar general 
issue arose earlier in attempting to use LDPC codes for the wiretap channel  \cite{Subramanian2010}. 

Contributing to the line of work connecting smoothing and the wiretap channel
\cite{hayashi2006general,bloch2013strong,yu2018renyi}, we show that nested coding schemes $\cC_e\subset \cC_b,$ where 
the code $\cC_b$ is good for error correction in ${\text{\rm BSC}}(\delta_b)$ and $\cC_e$ is perfectly smoothable with respect to $\beta_{\delta_b}$, attain strong secrecy and reliability for a BSC wiretap channel $(\delta_b,\delta_e)$.
As observed in Lemma \ref{lemma: smoothing and erasure}, 
duals of good erasure-correcting codes are perfectly smoothable for certain noise levels and hence they form a good choice for $\cC_e$ in this scenario. 

The following lemma establishes a connection between the smoothness of a noisy distribution of a code and strong secrecy.
\begin{lemma}\label{lemma: secrecy and smoothing}
    Consider the nested coding scheme for the BSC wiretap channel introduced above. If \\
    $D(T_{\delta_e}f_{\cC_e}\|U_n) < \epsilon,$ then $I(M;Z) < \epsilon.$
\end{lemma}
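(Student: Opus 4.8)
The plan is to bound $I(M;Z)$ by the divergence quantity appearing in the hypothesis. Recall that $M$ is uniform on $\cM$, each message $m$ selects a uniformly random vector from the coset $c_m + \cC_e$, and the eavesdropper sees $Z = X + W_e$ with $W_e \sim \beta_{\delta_e}^{(n)}$. Hence the conditional distribution of $Z$ given $M = m$ is exactly $P_{Z|M=m} = (T_{\delta_e} f_{\cC_e})(\cdot + c_m)$, a shift of the smoothed code distribution. First I would write $I(M;Z) = \sum_m \frac{1}{|\cM|} D\big(P_{Z|M=m} \,\big\|\, P_Z\big)$, where $P_Z = \frac{1}{|\cM|}\sum_m P_{Z|M=m}$ is the marginal output distribution.

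The key step is the standard observation that the mutual information, being a ``radius'' of the family $\{P_{Z|M=m}\}$ around its centroid $P_Z$, is upper bounded by the radius around \emph{any} fixed reference distribution — in particular around $U_n$. Concretely, since KL divergence satisfies $\sum_m \frac{1}{|\cM|} D(P_{Z|M=m}\|P_Z) \le \sum_m \frac{1}{|\cM|} D(P_{Z|M=m}\|Q)$ for every distribution $Q$ (this follows because $\sum_m \frac1{|\cM|} D(P_{Z|M=m}\|Q) - \sum_m \frac1{|\cM|} D(P_{Z|M=m}\|P_Z) = D(P_Z\|Q) \ge 0$), we may take $Q = U_n$ and obtain $I(M;Z) \le \sum_m \frac{1}{|\cM|} D(P_{Z|M=m}\|U_n)$. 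Then I would use shift-invariance: $D\big((T_{\delta_e}f_{\cC_e})(\cdot + c_m)\,\big\|\,U_n\big) = D(T_{\delta_e}f_{\cC_e}\|U_n)$ since $U_n$ is invariant under translation on $\cH_n$. This makes every term in the average equal to $D(T_{\delta_e}f_{\cC_e}\|U_n)$, so $I(M;Z) \le D(T_{\delta_e}f_{\cC_e}\|U_n) < \epsilon$, as claimed.

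I do not expect a serious obstacle here; the argument is a clean application of the ``Golden formula'' / variational characterization of mutual information combined with the translation invariance of $U_n$. The one point requiring a small amount of care is verifying that the coset-leader shift $c_m$ genuinely realizes $P_{Z|M=m}$ as the stated translate — i.e., that sending a uniform vector from $c_m + \cC_e$ through $\beta_{\delta_e}$ yields $\beta_{\delta_e} * f_{c_m + \cC_e}$, and that convolution with a translate of $\cC_e$ is the translate of the convolution. This is immediate from the definition of $T_r$ as convolution. If one instead wants to track $\alpha$-divergences, the same centroid inequality is no longer available for general $\alpha$, but for the KL case in the statement the bound is tight and elementary.
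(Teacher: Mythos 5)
Your proposal is correct and follows essentially the same route as the paper: both rest on the identity $\sum_m P_M(m)\,D(P_{Z|M=m}\|U_n) = I(M;Z) + D(P_Z\|U_n)$ together with the observation that translation invariance of $U_n$ makes every conditional divergence equal to $D(T_{\delta_e}f_{\cC_e}\|U_n)$. The only cosmetic difference is that you phrase the key step as the centroid/``Golden formula'' inequality while the paper expands $D(P_{Z|M}\|U_n|P_M)$ directly; the content is identical.
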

\begin{proof}  
 We have    
 \begin{align*}
        D(P_{Z|M}\|U_n|P_M) &= \sum_{\substack{m\in \cM \\z\in \cH_n  }}
        P_{MZ}(m,z)\log\frac{P_{Z|M}(z|m)}{U_n(z)}        
        \\
        &=I(M;Z) + D(P_{Z}\|U_n).
    \end{align*}
 Now note that $P_{Z|M=m}(z) = (T_{\delta_e}f_{\cC_e})(z+c_m) = P_{Z|M=m'}(z+c_{m'}+c_m),$ so $D(P_{Z|M=m}\|U_n) $ is independent of $m$. Therefore, for all  $m \in \mathcal{M}$
    \begin{align*}
        D(P_{Z|M=m}\|U_n) &= D(P_{Z|M}\|U_n|P_M)\\
        &=
        I(M;Z) + D(P_{Z}\|U_n) \geq I(M;Z). \qedhere
    \end{align*}
 \end{proof}
This lemma enables us to formulate conditions for reliable communication while guaranteeing the strong secrecy condition. 
Namely, it suffices to take a pair (a sequence of pairs) of nested codes $\cC_e\subset \cC_b$ such that $D(T_{\delta_e}f_{\cC_e}\|U_n)  \rightarrow 0$ as $n\to\infty.$ If at the same time the code $\cC_b$ corrects errors on a BSC$(\delta_b),$ then the scheme fulfills both the reliability and strong secrecy requirements under noise levels $\delta_b$ and $\delta_e$ for channels $\cW_b$ and $\cW_e$, respectively, supporting transmission from $A$ to $B$ at rate $R_b-R_e$. Together with the 
results established earlier, we can now make this claim more specific. 

\begin{theorem}\label{thm: Wiretap channels achievable rates for BEC capacity-achieving codes}
Let $((\cC_e^{n})^{\bot})_n$ and $(\cC_b^{n})_n$ be sequences of linear codes that achieve capacity of the BEC for their respective rates.
Suppose that $\cC_e^{n}\subset\cC_b^{n}$ and 
\begin{enumerate}
    \item  $d((\cC_e^{n})^\bot) = \omega(\log n), R(\cC_e^{n})\rightarrow R_e$;\\[-.1in]
    \item $d(\cC_b^{n}) = 
\omega(\log n), R(\cC_b^{n})\rightarrow R_b$.
\end{enumerate}
If $R_b 
<1-\log(1+2\sqrt{\delta_b(1-\delta_b)})$ and $R_e > 4\delta_e(1-\delta_e)$, then the nested coding scheme based on 
$\cC_e^{n}$ and $\cC_b^{n}$ can transmit messages with rate $R_b-R_e$ from $A$ to $B$, 
satisfying the reliability and strong secrecy conditions.
\end{theorem}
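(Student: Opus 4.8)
The plan is to assemble this theorem from the pieces already established, treating it as a packaging of Lemma~\ref{lemma: secrecy and smoothing}, Theorem~\ref{thm: Bernoulli Smoothing - BEC capacity-achieving codes}, and the reliability half of the nested coding scheme. First I would verify the \emph{secrecy} side. By Lemma~\ref{lemma: smoothing and erasure} applied with $\alpha = 1$, we have $D(T_{\delta_e}f_{\cC_e^n}\|U_n)\le H(X_{(\cC_e^n)^\bot}\mid Y_{X_{(\cC_e^n)^\bot},\,{\text{\rm BEC}}(\lambda)})$ with $\lambda = (1-2\delta_e)^2 = 1-4\delta_e(1-\delta_e)$. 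Since $((\cC_e^n)^\bot)_n$ achieves BEC capacity at rate $\to R_e$ and has $d((\cC_e^n)^\bot)=\omega(\log n)$, the argument inside the proof of Theorem~\ref{thm: Bernoulli Smoothing - BEC capacity-achieving codes} (via \cite{tillich2000discrete} and Fano) gives vanishing conditional entropy on any ${\text{\rm BEC}}(R_e-\epsilon)$. The hypothesis $R_e > 4\delta_e(1-\delta_e) = 1-\lambda$, i.e.\ $\lambda > 1-R_e$, lets us pick $\epsilon_0$ with $R_e - \epsilon_0 = \lambda$, whence $D(T_{\delta_e}f_{\cC_e^n}\|U_n)\to 0$. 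Feeding this into Lemma~\ref{lemma: secrecy and smoothing} yields $I(M;Z)\to 0$, i.e.\ strong secrecy.

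Next I would handle \emph{reliability}. The decoder at $B$ sees $Y = X + W_b$ with $X$ uniform on $\cC_b^n$, so block-MAP decoding of $\cC_b^n$ on ${\text{\rm BSC}}(\delta_b)$ must have vanishing error. Here the code $\cC_b^n$ is given as a BEC-capacity-achieving sequence with $d(\cC_b^n)=\omega(\log n)$, not a BSC-capacity-achieving one, so I cannot invoke Shannon capacity directly; instead I would use the hypothesis $R_b < 1 - \log(1+2\sqrt{\delta_b(1-\delta_b)})$, which is the Bhattacharyya / union-bound-type threshold. The cleanest route is: a code that achieves BEC capacity with growing distance also corrects a constant fraction of errors on the BSC up to this rate — this should follow either from the error-correction results already cited for such codes, or by combining the known fact that BEC-capacity-achieving doubly transitive (RM) and polar codes have good BSC performance below the relevant threshold. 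Concretely, $\log(1+2\sqrt{\delta_b(1-\delta_b)})$ is the exponent governing $\sum_i A_i (2\sqrt{\delta_b(1-\delta_b)})^i$, so a union bound on the MAP error using the distance distribution of $\cC_b^n$ gives $P_B({\text{\rm BSC}}(\delta_b),\cC_b^n)\to 0$ whenever the rate lies below that quantity; the growing distance ensures the low-weight terms are negligible. I would cite the relevant prior result (or the analysis in Section~\ref{sec: ball noise}) rather than redo it.

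Finally I would assemble the two halves: the nested scheme $\cC_e^n\subset\cC_b^n$ transmits one of $|\cC_b^n|/|\cC_e^n| = 2^{n(R(\cC_b^n)-R(\cC_e^n))}$ messages, so the rate is $R(\cC_b^n)-R(\cC_e^n)\to R_b - R_e$; decoding $Y$ to a codeword of $\cC_b^n$ and reading off its coset recovers $M$ with vanishing error by the reliability step; and $I(M;Z)\to 0$ by the secrecy step. One point to check carefully is that such a \emph{nested} pair with \emph{both} the inner dual and the outer code achieving BEC capacity actually exists with the stated distance growth — for RM codes this is immediate since RM codes are nested by construction and $\text{RM}(r,m)^\bot = \text{RM}(m-r-1,m)$, and their distances grow; I would remark on this to make the theorem non-vacuous.

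\medskip

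\emph{Main obstacle.} The delicate step is the reliability claim for $\cC_b^n$: the theorem's hypothesis on $\cC_b^n$ is phrased in terms of BEC capacity and distance, not BSC performance, so one must bridge from BEC-capacity-achievement (plus growing minimum distance) to good decoding on ${\text{\rm BSC}}(\delta_b)$ below the Bhattacharyya threshold $1-\log(1+2\sqrt{\delta_b(1-\delta_b)})$. Getting this transfer cleanly — whether via a union bound on the distance distribution, via a black-box invocation of known results for RM/polar codes, or via a generic argument — is the part that requires the most care; everything else is bookkeeping over results already proved in the paper.
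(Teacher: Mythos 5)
Your decomposition is exactly the paper's: reliability from Corollary~\ref{cor: BEC capacity achieving codes in BSC}, secrecy from Theorem~\ref{thm: Bernoulli Smoothing - BEC capacity-achieving codes} fed into Lemma~\ref{lemma: secrecy and smoothing}, and the rate $R_b-R_e$ from counting cosets. Two points deserve attention.

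First, your secrecy step contains a non sequitur. You write that the hypothesis $R_e>4\delta_e(1-\delta_e)=1-\lambda$ ``lets us pick $\epsilon_0$ with $R_e-\epsilon_0=\lambda$''; but choosing such an $\epsilon_0>0$ requires $R_e>\lambda=(1-2\delta_e)^2$, which does \emph{not} follow from $R_e>1-\lambda$ (take $\delta_e=0.1$: then $4\delta_e(1-\delta_e)=0.36$ while $(1-2\delta_e)^2=0.64$, so $R_e=0.5$ satisfies the stated hypothesis but not the one you actually need). The condition that Lemma~\ref{lemma: smoothing and erasure} and Theorem~\ref{thm: Bernoulli Smoothing - BEC capacity-achieving codes} require is $R_e>(1-2\delta_e)^2=1-4\delta_e(1-\delta_e)$; the theorem's printed hypothesis $R_e>4\delta_e(1-\delta_e)$ appears to be a typo for this (the paper's own proof silently invokes Theorem~\ref{thm: Bernoulli Smoothing - BEC capacity-achieving codes} under its hypothesis $R>(1-2\delta)^2$, and the numerical example after Fig.~\ref{fig: Wiretap channel - Thresholds} uses $R_e^{(2)}=0.16=(1-2\cdot 0.3)^2$, not $0.84$). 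With the hypothesis read as $R_e>(1-2\delta_e)^2$ your argument is correct; as literally written, the deduction fails.

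Second, on reliability: your primary route (cite the known transfer from BEC capacity-achievement plus $d(\cC_b^n)=\omega(\log n)$ to BSC decodability below $1-\log(1+2\sqrt{\delta_b(1-\delta_b)})$) is precisely what the paper does via Corollary~\ref{cor: BEC capacity achieving codes in BSC}. Your suggested fallback of ``a union bound on the MAP error using the distance distribution of $\cC_b^n$'' would not work from the stated hypotheses, since nothing in them controls the full distance distribution of $\cC_b^n$; the actual mechanism in \cite{hkazla2021codes} goes through Samorodnitsky's inequality and the dual code's erasure performance, not a distance-distribution union bound. So keep the citation route and drop the union-bound alternative.
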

\begin{proof}
From Corollary \ref{cor: BEC capacity achieving codes in BSC}, the conditions 
$d(\cC_b^{(n)}) = \omega(\log n)$ and $R_b <1-\log(1+2\sqrt{\delta_b(1-\delta_b)})$ 
guarantee transmission reliability. Furthermore, by Theorem \ref{thm: Bernoulli 
Smoothing - BEC capacity-achieving codes}, the conditions $d((\cC_e^{n})^\bot) = 
\omega(\log n)$ and $R_e > 4\delta_e(1-\delta_e)$ imply that $D(T_{\delta_e}f_{\cC_e}
\|U_n)  \rightarrow 0$, which in its turn implies strong secrecy by Lemma~\ref{lemma: 
secrecy and smoothing}.
\end{proof}

To give an example of a code family that satisfies the assumptions of this theorem, consider RM codes of constant
rate. Namely, let $\cC_e^{n} \subset \cC_b^{n}$ be two sequences of RM codes whose rates converge to $R_e$ and $R_b$ respectively. Note that duals of RM codes are themselves RM codes. By a well-known result \cite{kudekar2016reed}, RM codes achieve capacity of the BEC, and for any sequence of constant-rate RM codes, the distance scales as $2^{\Theta(\sqrt{n})}$. Therefore, RM codes
satisfy assumptions of Theorem~\ref{thm: Wiretap channels achievable rates for BEC capacity-achieving codes}.

Note that for RM codes we can obtain a stronger result, based on their error correction properties on the BSC. Involving this
additional argument brings them closer to secrecy capacity under the strong secrecy assumption.

\begin{theorem}\label{thm: Wiretap RM}
    Let $\cC_e^{n}$ and $\cC_b^{n}$ be two sequences of RM codes satisfying $\cC_e^{n}\subset\cC_b^{n}$ whose rates approach $R_e>0$  and $R_b>0,$ respectively. If $R_b <1-h(\delta_b)$ and $R_e > 4\delta_e(1-\delta_e)$, then the nested coding scheme based on $\cC_e^{n}$ and $\cC_b^{n}$ supports transmission on a BSC wiretap channel $(\delta_b,\delta_e)$ with rate $R_b-R_e,$ 
guaranteeing communication reliability and strong secrecy.
\end{theorem}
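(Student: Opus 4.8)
The plan is to split the argument into a reliability part and a strong-secrecy part, exactly as in Theorem~\ref{thm: Wiretap channels achievable rates for BEC capacity-achieving codes}, and to upgrade only the reliability part: where that theorem bounded the error probability of $\cC_b^n$ on the BSC by routing through erasure correction of the dual (Corollary~\ref{cor: BEC capacity achieving codes in BSC}, which costs the cutoff-rate-type threshold $1-\log(1+2\sqrt{\delta_b(1-\delta_b)})$), I would instead invoke the recent theorem asserting that Reed--Muller codes achieve Shannon capacity on the binary symmetric channel. This single change is what enlarges the admissible window for $R_b$ to the full capacity value $1-h(\delta_b)$; the secrecy half is then word-for-word the earlier argument.

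For reliability: in the nested scheme the message $M$ is uniform on $\cM$ and, conditioned on $M=m$, the transmitted vector is uniform on the coset of $\cC_e^n$ in $\cC_b^n$ indexed by $m$, so the channel input is uniform on $\cC_b^n$. Terminal $B$ block-MAP decodes $\cC_b^n$ and reports the coset of the decoded word, hence its error event is contained in the event that $\cC_b^n$ is mis-decoded, giving $\Pr(M\neq\hat M)\le P_B({\text{\rm BSC}}(\delta_b),\cC_b^n)$. Since $(\cC_b^n)_n$ is a sequence of Reed--Muller codes with $R(\cC_b^n)\to R_b<1-h(\delta_b)=C({\text{\rm BSC}}(\delta_b))$, the Reed--Muller capacity theorem for the BSC gives $P_B({\text{\rm BSC}}(\delta_b),\cC_b^n)\to0$, which is the reliability claim. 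Note that, unlike in Theorem~\ref{thm: Wiretap channels achievable rates for BEC capacity-achieving codes}, no explicit minimum-distance hypothesis on $\cC_b^n$ is needed here, since constant-rate Reed--Muller codes automatically have polynomially growing distance.

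For strong secrecy I would repeat the argument of Theorem~\ref{thm: Wiretap channels achievable rates for BEC capacity-achieving codes}. The dual of a Reed--Muller code is again a Reed--Muller code, so $((\cC_e^n)^\bot)_n$ is a sequence of constant-rate Reed--Muller codes of rate tending to $1-R_e\in(0,1)$; by \cite{kudekar2016reed} it achieves the capacity of the BEC, and a length-$n$ constant-rate Reed--Muller code has minimum distance $n^{\Omega(1)}$, so $d((\cC_e^n)^\bot)=\omega(\log n)$. Feeding this into Theorem~\ref{thm: Bernoulli Smoothing - BEC capacity-achieving codes} with $\delta=\delta_e$, the hypothesis $R_e>4\delta_e(1-\delta_e)$ yields $D(T_{\delta_e}f_{\cC_e^n}\|U_n)\to0$, and Lemma~\ref{lemma: secrecy and smoothing} then converts this into $I(M;Z)\to0$. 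The resulting message rate is $R(\cC_b^n)-R(\cC_e^n)\to R_b-R_e$; the scheme is non-vacuous because the tower $\mathrm{RM}(0,m)\subset\mathrm{RM}(1,m)\subset\cdots$ realizes a rate grid that becomes dense as $m\to\infty$, so nested Reed--Muller pairs with rates converging to any prescribed $0<R_e<R_b<1$ exist.

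The only genuinely non-routine ingredient is the Reed--Muller capacity theorem for the BSC; everything else — duality and minimum distance of Reed--Muller codes, the reduction of reliability to block-MAP decoding of $\cC_b^n$, achievability of the BEC capacity by RM codes, and the smoothing-implies-secrecy Lemma~\ref{lemma: secrecy and smoothing} — is already in place. The one point I would be careful about is that this capacity theorem must be applied to the \emph{arbitrary} given sequence $(\cC_b^n)_n$, not merely to some conveniently chosen sequence of the right rate; this is fine, since the theorem holds for every sequence of Reed--Muller codes whose rate stays bounded away from $0$ and $1$.
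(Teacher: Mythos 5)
Your proposal matches the paper's proof essentially verbatim: reliability follows from the Abbe--Sandon theorem that constant-rate RM codes achieve capacity of the BSC, and the secrecy half is carried over unchanged from Theorem~\ref{thm: Wiretap channels achievable rates for BEC capacity-achieving codes} via Theorem~\ref{thm: Bernoulli Smoothing - BEC capacity-achieving codes} and Lemma~\ref{lemma: secrecy and smoothing}. One caution, inherited from the paper's own statement rather than introduced by you: the smoothing step actually requires $R_e>(1-2\delta_e)^2$ (the hypothesis of Theorem~\ref{thm: Bernoulli Smoothing - BEC capacity-achieving codes}), which is implied by the stated condition $R_e>4\delta_e(1-\delta_e)$ only when $\delta_e(1-\delta_e)\ge 1/8$, so for small $\delta_e$ the hypothesis as written does not suffice for the argument.
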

\begin{proof}
    Very recently, Abbe and Sandon \cite{abbe2023proof}, building upon the work of Reeves and Pfister \cite{reeves2021reed}, proved that RM codes achieve capacity in symmetric channels. Therefore, the condition $R_b <1-h(\delta_b)$ guarantees reliability. The rest of the proof is similar to that of Theorem \ref{thm: Wiretap channels achievable rates for BEC capacity-achieving codes}.
\end{proof}

Theorems \ref{thm: Wiretap channels achievable rates for BEC capacity-achieving codes}
and \ref{thm: Wiretap RM} stop short of constructing codes that attain secrecy capacity of the channel (this is similar to the results of \cite{hkazla2021codes} for
the transmission problem over the BSC). To quantify the gap to capacity,
we plot the smoothing and decodability rate bounds in Fig.~\ref{fig: Wiretap channel - Thresholds}.

\begin{figure}[ht]
\includegraphics[width=17cm,center]{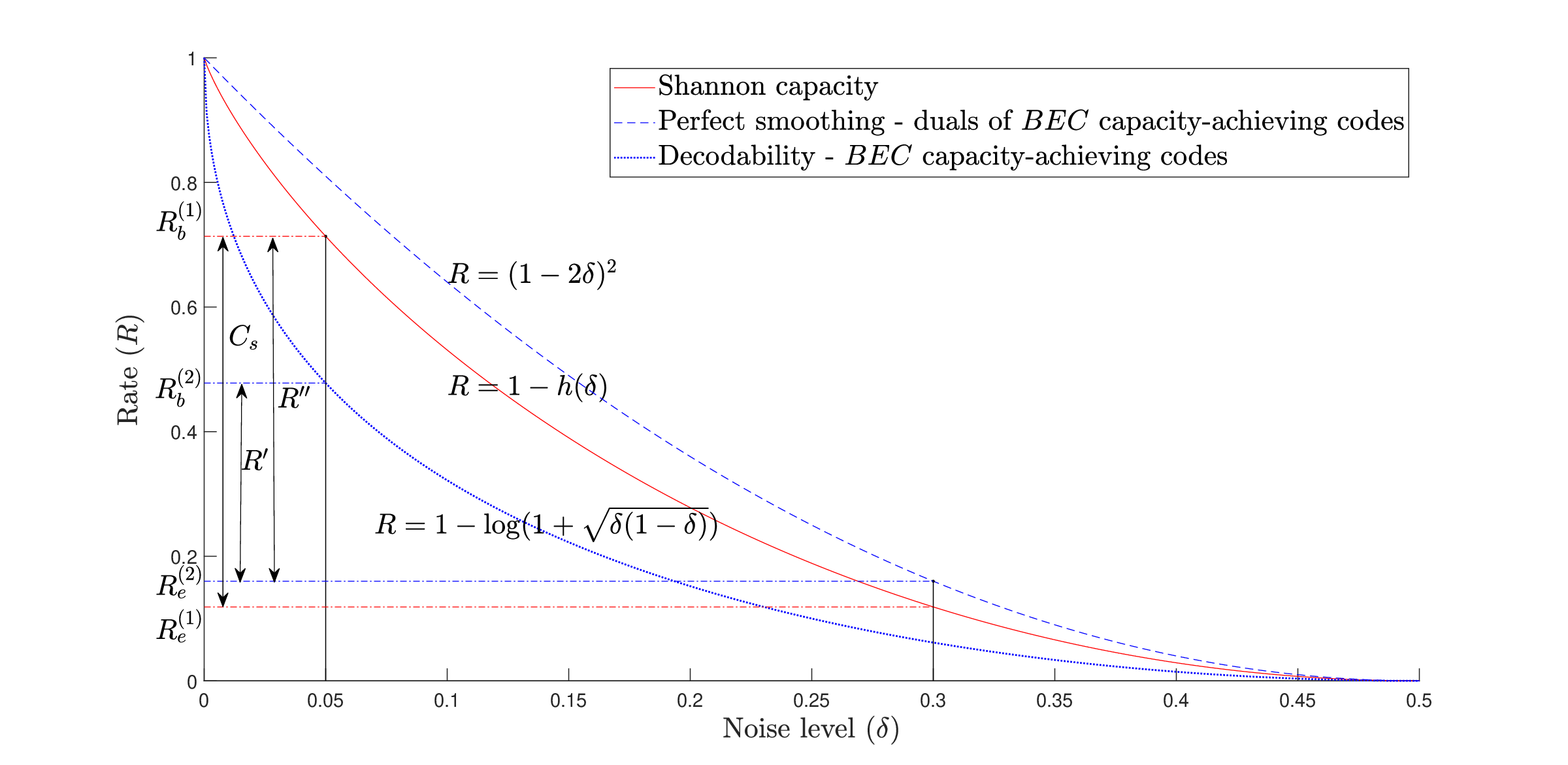}
\caption{Achievable rates in the BSC wiretap channel with BEC capacity-achieving codes. The bottom curve is the lower bound on the code rate that guarantees decodability
on a BSC$(\delta)$. The middle curve shows Shannon's capacity and the top one is the
$D_1$-smoothing threshold for Bernoulli noise $T_{\delta}.$}
\label{fig: Wiretap channel - Thresholds}
\end{figure}

As an example, let us set the noise parameters $\delta_b = 0.05$ and $\delta_e = 0.3$ and denote the 
corresponding secrecy capacity by $C_s$. Suppose that we use a BEC capacity achieving code as code $\cC_b$ 
and a dual of a BEC capacity achieving code as code $\cC_e$ in the nested scheme. The value $R'$ is the largest rate at which
we can guarantee both reliability and strong secrecy.
In the example in Fig.~\ref{fig: Wiretap channel - Thresholds}, $C_s = R_b^{(1)} - R_e^{(1)} = 0.5949$ and $R' = R_b^{(2)} - 
R_e^{(2)} = 0.3181$. The only assumption required here is that the codes $\cC_e^\bot$ and $\cC_b$ have good erasure correction properties.

As noted, generally, RM codes support a higher communication rate than $R'$. Let $R''$ be their achievable rate. For the same noise parameters as above, we obtain $R'' = R_b^{(1)} - R_e^{(2)} = 0.5536,$ which is closer to $C_s$ than $R'$.

\begin{remark}
The fact that RM codes achieve capacity in symmetric channels immediately implies that nested RM codes achieve secrecy 
capacity in BSC wiretap channel under weak secrecy. While it is tempting to assume that, 
coupled with channel duality theorems of \cite{Renes2018,Rengaswamy2021}, this result also implies that RM codes fulfil the strong secrecy requirement on the BSC wiretap channel, an immediate proof looks out of reach \cite{pfister2023private}.
\end{remark}

\subsection{Secrecy from \texorpdfstring{$\alpha$-}{}divergence}
Classically, the (strong) secrecy in the wiretap channel is measured by $I(M,Z)$. In \cite{bloch2013strong}, slightly weaker secrecy measures were considered besides the mutual information. However, more stringent secrecy measures may require in certain scenarios. $\alpha$-divergence-based secrecy measured were introduced by Yu and Tan \cite{yu2018renyi} as a solution to this problem.

Observe that secrecy measured by $D_{\alpha}(P_{Z|M}\|U_n|M)$ for $\alpha \geq 1$ is stronger than the mutual-information-based secrecy. This is because for $\alpha \geq 1$
\begin{align*}
    I(M;Z) \leq D(P_{Z|M}\|U_n|P_M) \leq D_{\alpha}(P_{Z|M}\|U_n|P_M).
\end{align*}

Given a wiretap channel with an encoding-decoding scheme, we say $\alpha$-secrecy is satisfied if 
\begin{align*}
    \lim_{n\to \infty } D_{\alpha}(P_{Z|M}\|U_n|P_M) = 0.
\end{align*}

The following theorem establishes that it is possible to achieve the rate $C(\delta_b)-S_{\alpha}^{\beta_{\delta_e}} = h_{\alpha}(\delta_e) - h(\delta_b)$ with RM codes for $\alpha \in \{2,3,\dots,\infty\}$.

\begin{theorem}
    Let $\alpha \in \{2,3,\dots,\infty\}$.
    Let $\cC_e^{n}$ and $\cC_b^{n}$ be two sequences of RM codes satisfying $\cC_e^{n}\subset\cC_b^{n}$ whose rates approach $R_e>0$  and $R_b>0,$ respectively. If $R_b <1-h(\delta_b)$ and $R_e > 1-h_{\alpha}(\delta_e)$, then the nested coding scheme based on $\cC_e^{n}$ and $\cC_b^{n}$ supports transmission on a BSC wiretap channel $(\delta_b,\delta_e)$ guaranteeing $\alpha$-secrecy with rate $R_b-R_e,$ provided that $h_{\alpha}(\delta_e) - h(\delta_b) > 0$.
\end{theorem}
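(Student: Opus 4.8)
The plan is to mirror the proof of Theorem~\ref{thm: Wiretap RM}, replacing the $D_1$-smoothing input with the stronger $D_\alpha$-smoothing guarantee for $\alpha \in \{2,3,\dots,\infty\}$. The reliability half is identical: by Abbe--Sandon \cite{abbe2023proof}, RM codes achieve capacity on symmetric channels, so the hypothesis $R_b < 1-h(\delta_b)$ ensures that the outer code $\cC_b^{n}$ corrects errors on ${\text{\rm BSC}}(\delta_b)$ with vanishing block error probability, hence $B$ recovers $M$ reliably. For secrecy, the key point is that the nested-coding argument of Lemma~\ref{lemma: secrecy and smoothing} goes through verbatim with $D_\alpha$ in place of $D_1$: since $P_{Z|M=m}(z) = (T_{\delta_e}f_{\cC_e^{n}})(z+c_m)$ is a shift of a fixed distribution, $D_\alpha(P_{Z|M=m}\|U_n)$ does not depend on $m$, so $D_\alpha(P_{Z|M}\|U_n|P_M) = D_\alpha(T_{\delta_e}f_{\cC_e^{n}}\|U_n)$, and it therefore suffices to drive this single quantity to zero.

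Thus the whole statement reduces to showing $\lim_{n\to\infty} D_\alpha(T_{\delta_e}f_{\cC_e^{n}}\|U_n) = 0$ for the inner RM code sequence. Here I would invoke Theorem~\ref{thm: Bernoulli Smoothing - BEC capacity-achieving codes}: since duals of RM codes are RM codes, $(\cC_e^{n})^\bot$ achieves capacity of the BEC at rate $1-R_e$ (equivalently, $\cC_e^{n}$ sits inside the hypothesis with $\lambda = R_e$), and constant-rate RM codes have distance $2^{\Theta(\sqrt n)} = \omega(\log n)$, so the dual-distance condition $d((\cC_e^{n})^\bot) = \omega(\log n)$ holds. The theorem then gives, for $\alpha \in \{2,3,\dots,\infty\}$, that $R_e > 1-h_\alpha(\delta_e)$ implies $D_\alpha(T_{\delta_e}f_{\cC_e^{n}}\|U_n) \to 0$, which is exactly the hypothesis imposed. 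Combining this with the reliability half and Lemma~\ref{lemma: secrecy and smoothing} (in its $D_\alpha$ form) yields $\alpha$-secrecy at transmission rate $R_b - R_e$, and the constraint $h_\alpha(\delta_e) - h(\delta_b) > 0$ is precisely what guarantees this rate can be made positive (one chooses $R_b$ just below $1-h(\delta_b)$ and $R_e$ just above $1-h_\alpha(\delta_e)$, and the existence of nested RM code sequences with these prescribed rates follows from the density of achievable RM rates).

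The only genuinely nonroutine point is verifying that Lemma~\ref{lemma: secrecy and smoothing} upgrades cleanly to $\alpha$-divergence. The shift-invariance step is unchanged, but one must check the chain-rule/decomposition $D_\alpha(P_{Z|M}\|U_n|P_M) \ge D_\alpha(P_{Z|M}\|U_n)$-type bound that lets us pass from the conditional divergence to the quantity bounding $I(M;Z)$ and the higher-order secrecy measure; the displayed inequalities $I(M;Z) \le D(P_{Z|M}\|U_n|P_M) \le D_\alpha(P_{Z|M}\|U_n|P_M)$ already recorded in the subsection handle this, since the conditional $D_\alpha$ here equals the unconditional $D_\alpha(T_{\delta_e}f_{\cC_e^{n}}\|U_n)$ by the same $m$-independence. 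So I expect no real obstacle — the work is entirely in assembling the already-established pieces, and the main thing to be careful about is matching the BEC-erasure parameter $\lambda = R_e$ in Theorem~\ref{thm: Bernoulli Smoothing - BEC capacity-achieving codes} against the RM dual rate so that the smoothing hypothesis $R_e > 1-h_\alpha(\delta_e)$ is applied to the correct code in the nested pair.
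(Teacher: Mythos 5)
Your proposal is correct and follows exactly the route the paper intends: reliability from Abbe--Sandon applied to $\cC_b^{n}$, $D_\alpha$-smoothing of $\cC_e^{n}$ via Theorem~\ref{thm: Bernoulli Smoothing - BEC capacity-achieving codes} (using that RM duals are RM codes achieving BEC capacity with distance $\omega(\log n)$), and the shift-invariance argument of Lemma~\ref{lemma: secrecy and smoothing} carried over to the conditional $\alpha$-divergence. The paper leaves this theorem without an explicit proof precisely because it is this assembly of already-established pieces, so there is nothing to add.
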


Evidently, to achieve a stringent version of secrecy, it is necessary to reduce the rate of the message. The capacity of the $(\delta_b,\delta_e)$-wiretap channel is $h(\delta_e) - h(\delta_b)$, while the known highest rate that assures $\alpha$-secrecy and reliability is $h_{\alpha}(\delta_e) - h(\delta_b)$. Hence, to achieve $\alpha$-secrecy, we must give up  $h(\delta_e) - h_{\alpha}(\delta_e)$ of the attainable rate.

\section{Ball noise and error probability of decoding}\label{sec: ball noise}
This section focuses on achieving the best possible smoothing with respect to ball noise. As an application, we show that codes that possess good smoothing properties with respect to ball noise are suitable for error correction in the BSC.

\subsection{Ball noise} Recall that perfect smoothing of a sequence of codes is only possible if the rate is greater than the corresponding $D_{\alpha}$-smoothing capacity. In addition to characterizing $D_{\alpha}$-smoothing capacities of ball noise, we quantify the best smoothing one can expect with rates below the $D_{\alpha}$-smoothing capacity. We will use these results in the upcoming subsection when we derive upper bounds for the decoding error probability on a BSC. The next theorem summarizes our main result on smoothing with respect to ball noise.

\begin{theorem}\label{Thm: ball noise D_alpha sm bounds}
 Let $\left(b_{\delta n}\right)_n$ be the sequence of ball noise operators, where $\delta n$ is the radius of the ball. 
Let $\delta\in[0,1/2],\alpha\in[0,\infty].$ 
Let $\cC_n$ be a code of length $n$ and rate $R_n.$ 
Then we have the following bounds: 
\begin{align}
    D_{\alpha}(T_{b_{\delta n}}f_{\cC_n}\|U_n) &\geq 0      \label{eq: C<R smoothing} \\ 
    \frac{1}{n}D_{\alpha}(T_{b_{\delta n}}f_{\cC_n}\|U_n) &\geq 1-R_n-h(\delta). \label{eq: C>R smoothing}
\end{align}
There exist sequences of codes of rate $R_n\to R$ that achieve asymptotic equality in \eqref{eq: C<R smoothing} for all $R>1-h(\delta).$ 
At the same time, if $R<1-h(\delta)$, then there exist sequences of codes achieving asymptotic equality in \eqref{eq: C>R smoothing}.
\end{theorem}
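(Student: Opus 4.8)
The plan is to establish the two inequalities by direct computation and then the two tightness claims via explicit or random constructions. For the inequalities: \eqref{eq: C<R smoothing} is immediate from \eqref{eq:equiv}, since $\|2^ng\|_\alpha \ge 1$ for any pmf $g$ and $\alpha>1$, and follows by continuity for $\alpha\in\{0,1\}$ and the analogous direction for $\alpha<1$. For \eqref{eq: C>R smoothing}, I would apply Lemma~\ref{lemma: D_q(T_r|U) impos} with $r=b_{\delta n}$: this gives $D_\alpha(T_{b_{\delta n}}f_{\cC_n}\|U_n)\ge n(1-R_n)-H_\alpha(b_{\delta n})$. Now $b_{\delta n}$ is the uniform distribution on a ball of volume $V_{\delta n}$, so $H_\alpha(b_{\delta n})=\log V_{\delta n}$ for \emph{every} $\alpha$, and the standard volume estimate $\log V_{\delta n} = nh(\delta)+o(n)$ finishes the bound after dividing by $n$.

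For the achievability of \eqref{eq: C<R smoothing} when $R>1-h(\delta)$: the key observation is that $\pi(\alpha)=\liminf_n H_\alpha(b_{\delta n})/n = h(\delta)$ for all $\alpha$, since $H_\alpha(b_{\delta n})=\log V_{\delta n}$ independently of $\alpha$. Thus $1-\pi(\alpha)=1-h(\delta)$, and Theorem~\ref{thm: General smoothing capacities} directly gives, for $\alpha\in(1,\infty]$, that every rate $R>1-h(\delta)$ is achievable for perfect $D_\alpha$-smoothing, i.e.\ $D_\alpha(T_{b_{\delta n}}f_{\cC_n}\|U_n)\to 0$, which is asymptotic equality in \eqref{eq: C<R smoothing}. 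For $\alpha\in[0,1]$ this follows because $D_\alpha$ is monotone in $\alpha$ (Remark~\ref{rm: D_alpha mono}): convergence to zero for, say, $\alpha=2$ implies it for all smaller $\alpha$. So the same code sequence works across the whole range $\alpha\in[0,\infty]$, and one only needs the $\alpha\in(1,\infty]$ case from Theorem~\ref{thm: General smoothing capacities}.

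For the tightness of \eqref{eq: C>R smoothing} when $R<1-h(\delta)$: here the lower bound \eqref{eq: C>R smoothing} must be matched by an upper bound of the same first order $n(1-R-h(\delta))$. The natural approach is a random coding argument paralleling the one behind Theorem~\ref{thm: General smoothing capacities} (in Appendix~\ref{appendixB}): pick $\cC_n$ uniformly at random of size $2^{nR_n}$ and show that $\mathbb{E}\,\|2^n T_{b_{\delta n}}f_{\cC_n}\|_\alpha^\alpha$ is, up to subexponential factors, $2^{(\alpha-1)n(1-R-h(\delta))}$. The second-moment-type expansion of $\|2^nT_{b_{\delta n}}f_{\cC_n}\|_\alpha^\alpha$ produces a leading ``diagonal'' term of exactly this order (matching Lemma~\ref{lemma: D_q(T_r|U) impos}) plus cross terms that, since $b_{\delta n}$ is a bounded kernel ($\|b_{\delta n}\|_\infty = 1/V_{\delta n}$), one bounds so that they do not exceed the diagonal term in exponential order; a concentration step (Markov, plus the continuity/monotonicity of $D_\alpha$ to pass between integer and real $\alpha$, and to handle the limiting cases) then yields a code sequence with $\frac1n D_\alpha(T_{b_{\delta n}}f_{\cC_n}\|U_n)\to 1-R-h(\delta)$. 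I expect this last part — controlling the cross terms in the moment computation uniformly in $\alpha$ and confirming they are exponentially negligible relative to the diagonal term below capacity — to be the main technical obstacle; everything else reduces to the volume asymptotics $\log V_{\delta n}=nh(\delta)+o(n)$ and invocations of Lemma~\ref{lemma: D_q(T_r|U) impos} and Theorem~\ref{thm: General smoothing capacities}.
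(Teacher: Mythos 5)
Your treatment of \eqref{eq: C<R smoothing}, of the lower bound \eqref{eq: C>R smoothing}, and of the achievability for $R>1-h(\delta)$ is essentially identical to the paper's: the first inequality is trivial, the second is Lemma~\ref{lemma: D_q(T_r|U) impos} with $H_\alpha(b_{\delta n})=\log V_{\delta n}$, and the achievability follows from Theorem~\ref{thm: General smoothing capacities} with $\pi(\alpha)=h(\delta)$ together with monotonicity of $D_\alpha$ in $\alpha$ (Remark~\ref{rm: D_alpha mono}). One small correction: the stated bound \eqref{eq: C>R smoothing} is an exact finite-$n$ inequality, so you should use $\log V_{\delta n}\le nh(\delta)$ (valid for $\delta\le 1/2$) rather than $\log V_{\delta n}=nh(\delta)+o(n)$, which would only give the inequality up to $o(1)$.

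The one genuine gap is the part you yourself flag: the achievability of equality in \eqref{eq: C>R smoothing} for $R<1-h(\delta)$. You correctly identify the strategy (random coding, a diagonal term of order $2^{(\alpha-1)n(1-R-h(\delta))}$, cross terms to be dominated), but you leave the cross-term control open, and that is precisely where the work lies. The paper closes it with machinery already in Appendix~\ref{appendixB}: applying Lemma~\ref{lemma: recursive} with $r=b_{\delta n}$ gives
$Q_n(\alpha)\le\sum_{k=0}^{p}\binom{p}{k}\,2^{\frac{nk}{q}(1-R_n-\frac{1}{n}\log V_{\delta n})}\,Q_n\bigl(\frac{p-k}{q}\bigr)$ for rational $\alpha=1+p/q$, and one proves $Q_n(\alpha)\le 2^{n(\alpha-1)(1-R-h(\delta)+o(1))}$ by induction on $\alpha$ over the intervals $[1,2],(2,3],\dots$. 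The induction works exactly because $1-R-h(\delta)>0$: the factor $2^{\frac{nk}{q}(1-R-h(\delta)+o(1))}$ is now \emph{increasing} in $k$, so the fully "diagonal" term $k=p$ dominates and the terms with $k\le p-q$ contribute only $2^{n(\alpha-2)(1-R-h(\delta)+o(1))}$ by the induction hypothesis. (Note this is the opposite regime from the proof of Theorem~\ref{thm: General smoothing capacities}, where the same factors vanish; you cannot simply cite that theorem here.) Combining the resulting expectation upper bound with the deterministic lower bound \eqref{eq: C>R smoothing} yields a code sequence with asymptotic equality for each rational $\alpha>1$, and the extension to all real $\alpha\in[0,\infty)$ follows by sandwiching with Remark~\ref{rm: D_alpha mono} — no separate concentration step is needed. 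Without carrying out this induction (or an equivalent cross-term estimate), the fourth claim of the theorem is not established.
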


\begin{proof}
The inequality in \eqref{eq: C<R smoothing} is trivial. Let us prove that asymptotically it can be achieved with equality.
From Theorem \ref{thm: General smoothing capacities}, there exists a sequence of codes  $(\cC_n)_n$ such that $D_{\infty}(T_{b_{\delta n}}f_{\cC_n}\|U_n) = o(1)$ given that $R>1-h(\delta)$. Hence, for $\alpha \in [0,\infty]$
\begin{align*}
    0 \leq D_{\alpha}(T_{b_{\delta n}}f_{\cC_n}\|U_n) \leq D_{\infty}(T_{b_{\delta n}}f_{\cC_n}\|U_n) = o(1).
\end{align*}
Hence the equality case in \eqref{eq: C<R smoothing} is achievable for all $\alpha \in [0,\infty]$. 

Let us prove \eqref{eq: C>R smoothing}. From Lemma \ref{lemma: D_q(T_r|U) impos}, we have 
\begin{align*}
    D_{\alpha}(T_{b_{\delta n}}f_{\cC_n}\|U_n) &\geq n(1-R_n)-H_{\alpha}(b_{\delta n}) \geq n(1-R_n-h(\delta))
\end{align*}
because $\frac{1}{n}H_{\alpha}(b_{\delta n}) = \frac{1}{n}\log V_{\delta n} \leq  h(\delta)$. 
  
We are left to show that for $R<1-h(\delta),$ \eqref{eq: C>R smoothing} can be achieved with equality in the limit of large $n$. We use a random coding argument to prove this. Let $\cC_n$ be an $(n,2^{nR_n})$ code whose codewords are chosen independently and uniformly. In Eq.~\eqref{eq:Q}, Appendix B, we define the expected norm of the noisy function. Here we use this quantity for the ball noise kernel. For $\alpha \in [0,\infty)$, define
$$
Q_n(\alpha) = \mathbbm{E}_{\cC_n}2^{(\alpha-1)D_{\alpha}(T_{b_{\delta n}}\|U_n)}.
$$ 
From Lemma \ref{lemma: recursive}, for any rational $\alpha \geq 1$, 
\begin{align}\label{eq: rec_ball}
    Q_n(\alpha) & \leq \sum_{k=0}^p \binom{p}{k} 2^{\frac{nk}{q}(1-R_n-\frac{\log V_{\delta n}}{n})}Q_n\Big(\frac{p-k}{q}\Big),
\end{align}
for $p,q \in \mathbbm{Z}_{0}^{+}$ such that $\alpha = 1+\frac pq$.

 Assume that $R <1-h(\delta)$.  Let us prove that $Q_n(\alpha) \leq  2^{n(\alpha-1)(1-R-h(\delta) +o(1))}$ for rational values of $\alpha$ using induction. Let $\alpha \in [1,2]$ be rational and note that $p\leq q$. Since $Q_n(\cdot) \leq 1$ when the argument is less than 1, we can write \eqref{eq: rec_ball} as
follows:
\begin{align*}
    Q_n(\alpha) & \leq \sum_{k=0}^p \binom{p}{k} 2^{\frac{nk}{q}(1-R_n-\frac{\log V_{\delta n}}{n})} = 2^{n(\alpha-1)(1-R-h(\delta) +o(1))}.
\end{align*}
Now assume that \eqref{eq: rec_ball} holds for all rational $\alpha \in [1,m]$ for some integer  $m \geq 2$ and
prove that in this case it holds also for $\alpha\in(m,m+1].$ 
By the induction hypothesis,
 \begin{align*}
    Q_n(\alpha) & \leq \sum_{0\leq k\leq p-q} \binom{p}{k} 2^{\frac{nk}{q}(1-R_n-\frac{\log V_{\delta n}}{n})}2^{n\frac{p-k-q}{q}(1-R-h(\delta) +o(1))} +\sum_{k = p-q}^p \binom{p}{k} 2^{\frac{nk}{q}(1-R_n-\frac{\log V_{\delta n}}{n})} \\
    & \leq \sum_{0\leq k\leq p-q} \binom{p}{k} 2^{n(\alpha-2)(1-R-h(\delta) +o(1))} + \sum_{k =p-q}^p \binom{p}{k} 2^{n(\alpha-1)(1-R-h(\delta)+o(1))}\\
    & = 2^{n(\alpha-1)(1-R-h(\delta) +o(1))}.
\end{align*}
Therefore,  for every rational $\alpha \in (1,\infty)$ there exists a sequence of codes satisfying  
\begin{align}\label{eq: C>R ach}
  D_{\alpha}(T_{b_{\delta n}}f_{\cC_n}\|U_n) = n(1-R-h(\delta) +o(1)),  
\end{align}
which is equivalent to the equality in \eqref{eq: C>R smoothing}. 

Let us extend this result to non-negative reals. Let $\alpha \in [0,\infty)$ and let us choose a rational $\alpha' \in (1,\infty)$ such that $\alpha < \alpha'$.
We know that there exists a sequence of codes satisfying  
\begin{align*}
  D_{\alpha'}(T_{b_{\delta n}}f_{\cC_n}\|U_n) = n(1-R-h(\delta) +o(1)).
\end{align*}
From \eqref{eq: C>R smoothing} and from Remark \ref{rm: D_alpha mono},
\begin{align*}
      n(1-R_n-h(\delta)) \leq  D_{\alpha}(T_{b_{\delta n}}f_{\cC_n}\|U_n) \leq D_{\alpha'}(T_{b_{\delta n}}f_{\cC_n}\|U_n) = n(1-R-h(\delta) +o(1)).
\end{align*}
Hence, the asymptotic equality in \eqref{eq: C>R smoothing} is achievable for all $\alpha \in [0,\infty)$.

\end{proof}

The above theorem characterizes the $D_{\alpha}$-smoothing capacities with respect to ball noise.

\begin{corollary}\label{Cor: ball noise D_alpha cap}
Let $\delta \in [0,1/2]$. Let ${b}(\delta) = \left(b_{\delta n}\right)_n$ be a sequence of ball noise operators, where $\delta n$ is the radius corresponding to the $n$-th kernel. Then
    \begin{align*}
        S_{\alpha}^{{b}(\delta)} &= 1-h(\delta) \quad \text{for } \alpha \in [0,\infty].
    \end{align*}
\end{corollary}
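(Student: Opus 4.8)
The plan is to derive Corollary~\ref{Cor: ball noise D_alpha cap} directly from Theorem~\ref{Thm: ball noise D_alpha sm bounds}, which already packages both the impossibility and the achievability directions. Recall that $S_\alpha^{b(\delta)}$ is, by Definition~\ref{def:capacity}, the infimum of $\liminf_n R(\cC_n)$ over code sequences whose $D_\alpha$-smoothness vanishes. So I need to show two things: (i) no code sequence of limiting rate below $1-h(\delta)$ can be perfectly $D_\alpha$-smoothed by $b(\delta)$, and (ii) there exists a code sequence of limiting rate equal to $1-h(\delta)$ (or rate $\to R$ for any $R>1-h(\delta)$, which already suffices for the infimum) that is perfectly $D_\alpha$-smoothable.

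For direction (i), I would invoke the bound \eqref{eq: C>R smoothing}: for any code $\cC_n$ of rate $R_n$, we have $\frac1n D_\alpha(T_{b_{\delta n}}f_{\cC_n}\|U_n) \geq 1-R_n-h(\delta)$. If a sequence has $\liminf_n R(\cC_n) = R < 1-h(\delta)$, then along a subsequence $R_n \to R$, and hence $\frac1n D_\alpha(T_{b_{\delta n}}f_{\cC_n}\|U_n) \geq 1-R_n-h(\delta) \to 1-R-h(\delta) > 0$, so $D_\alpha(T_{b_{\delta n}}f_{\cC_n}\|U_n)$ does not go to $0$ (it in fact grows linearly in $n$ along that subsequence). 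This shows $S_\alpha^{b(\delta)} \geq 1-h(\delta)$ for all $\alpha \in [0,\infty]$. For direction (ii), the last part of Theorem~\ref{Thm: ball noise D_alpha sm bounds} asserts precisely that for every $R > 1-h(\delta)$ there is a code sequence with $R(\cC_n)\to R$ achieving asymptotic equality in \eqref{eq: C<R smoothing}, i.e., $D_\alpha(T_{b_{\delta n}}f_{\cC_n}\|U_n)\to 0$. Hence every rate $R>1-h(\delta)$ is achievable for perfect $D_\alpha$-smoothing, so the infimum $S_\alpha^{b(\delta)} \leq R$ for all such $R$, giving $S_\alpha^{b(\delta)} \leq 1-h(\delta)$. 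Combining the two inequalities yields $S_\alpha^{b(\delta)} = 1-h(\delta)$ for all $\alpha\in[0,\infty]$.

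There is essentially no obstacle here: the corollary is a bookkeeping consequence of the theorem, and the only mild subtlety is the handling of $\liminf$ versus limit in the definition of $S_\alpha^{b(\delta)}$ — one must pass to a subsequence on which $R_n$ converges to the $\liminf$ to apply \eqref{eq: C>R smoothing}, and note that a divergent (even along a subsequence) $D_\alpha$ rules out the convergence-to-zero requirement. I would also remark for completeness that the answer is independent of $\alpha$, which is expected because $\frac1n H_\alpha(b_{\delta n}) = \frac1n\log V_{\delta n}\to h(\delta)$ regardless of $\alpha$ (the ball kernel's Rényi entropy rate does not depend on the order), so $\pi(\alpha) = h(\delta)$ for all $\alpha$, consistent with $S_\alpha^r = 1-\pi(\alpha)$ from Theorem~\ref{thm: General smoothing capacities} in the range $\alpha\in(1,\infty]$ and extended here down to $\alpha\in[0,1]$ via the explicit achievability construction in the theorem's proof.
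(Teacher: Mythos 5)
Your proposal is correct and is exactly the argument the paper intends: the corollary is read off from Theorem~\ref{Thm: ball noise D_alpha sm bounds}, with the lower bound $S_\alpha^{b(\delta)}\ge 1-h(\delta)$ coming from \eqref{eq: C>R smoothing} (after passing to a subsequence realizing the $\liminf$ of the rates) and the upper bound from the achievability of equality in \eqref{eq: C<R smoothing} for every $R>1-h(\delta)$. Your closing remark that $\frac1n H_\alpha(b_{\delta n})=\frac1n\log V_{\delta n}\to h(\delta)$ independently of $\alpha$ correctly explains why the answer does not depend on $\alpha$ and matches the paper's use of Theorem~\ref{thm: General smoothing capacities}.
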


Norms of $T_{b_{t}}f_{\cC}$ can be used to bound decoding error probability on a BSC. 
While estimating these norms for a given code is generally complicated,
the second norm affords a compact expression based on the distance distribution of the code. In the next section, we bound decoding error probability using the second norm of $T_{b_{t}}f_{\cC}$. The following proposition provides closed-form expressions for $\|2^nT_{b_{t}}f_{\cC}\|_2^2$.

\begin{proposition}\label{prop: ball noise L2} 
\begin{align*}
        \|2^nT_{b_{t}}f_{\cC}\|_2^2=\frac{2^n}{|\cC|V_t^2}\sum_{i=0}^n \mu_t(i)A_i = \frac{1}{V_t^2}\sum_{k=0}^n L_t(k)^2A_k^{\bot}.
\end{align*}
where $\mu_t(i)$ is defined in \eqref{eq:mu_t} and $L_t$ is the Lloyd polynomial of degree $t$ \eqref{eq: ft ball}.
\end{proposition}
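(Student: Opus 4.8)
The plan is to prove the two equalities separately: the first by a direct double‑counting argument, and the second by passing to the Walsh--Hadamard (Fourier) domain and invoking Parseval. To start with the first equality, I would unwind the left‑hand side using $b_t=\1_{B(0,t)}/V_t$ and $f_\cC=\1_\cC/|\cC|$:
\[
\|2^nT_{b_t}f_\cC\|_2^2=2^n\sum_{x\in\cH_n}\big((b_t\ast f_\cC)(x)\big)^2,\qquad (b_t\ast f_\cC)(x)=\frac{|B(x,t)\cap\cC|}{V_t|\cC|}.
\]
Expanding $|B(x,t)\cap\cC|^2=\sum_{c,c'\in\cC}\1[c\in B(x,t)]\,\1[c'\in B(x,t)]$ and exchanging the summation over $x$ with that over the pairs $(c,c')$, the inner sum over $x$ becomes $|B(c,t)\cap B(c',t)|=\mu_t(|c-c'|)$ by translation invariance (see \eqref{eq:mu_t}). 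Grouping the pairs by their mutual distance and using the definition \eqref{eq:dd} of the distance distribution turns $\sum_{c,c'\in\cC}\mu_t(|c-c'|)$ into $|\cC|\sum_{i=0}^n\mu_t(i)A_i$, and collecting the constants gives the first claimed formula.

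For the second equality I would use Parseval's identity. With $\widehat g(\xi)=\sum_x g(x)(-1)^{\langle x,\xi\rangle}$ one has $\widehat{b_t\ast f_\cC}=\widehat{b_t}\,\widehat{f_\cC}$ and $2^n\sum_x(b_t\ast f_\cC)^2=\sum_\xi\widehat{b_t}(\xi)^2\,\widehat{f_\cC}(\xi)^2$. Both factors are radial. On the one hand $\widehat{b_t}(\xi)=L_t(|\xi|)/V_t$, where $L_t$ is the Lloyd polynomial \eqref{eq: ft ball} (equivalently $L_t=\sum_{w\le t}K_w$ for the Krawtchouk polynomials $K_w$), since $\widehat{\1_{B(0,t)}}(\xi)=\sum_{w\le t}\sum_{|z|=w}(-1)^{\langle z,\xi\rangle}$. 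On the other hand $\widehat{f_\cC}(\xi)^2=|\cC|^{-2}\sum_{c,c'\in\cC}(-1)^{\langle c-c',\xi\rangle}$, so summing over the sphere $\{\xi:|\xi|=k\}$ and using $\sum_{|\xi|=k}(-1)^{\langle y,\xi\rangle}=K_k(|y|)$ gives $\sum_{|\xi|=k}\widehat{f_\cC}(\xi)^2=|\cC|^{-1}\sum_{i}A_iK_k(i)=A_k^\bot$ by the definition of the dual distance distribution. Substituting these two evaluations and summing over $k$ yields $\frac{1}{V_t^2}\sum_{k=0}^nL_t(k)^2A_k^\bot$.

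An alternative, Fourier‑free route to the second equality is to feed the first one the identity $\mu_t(i)=2^{-n}\sum_kL_t(k)^2K_k(i)$ (which is merely $b_t\ast b_t$ read off on the Fourier side, using $\widehat{b_t}=L_t/V_t$) and then interchange sums and apply $\sum_iA_iK_k(i)=|\cC|A_k^\bot$; either way one lands on the same expression. I do not anticipate a substantive obstacle here — the proof is essentially bookkeeping. The only point demanding care is fixing conventions so that everything is consistent: the normalization of the Walsh--Hadamard transform, the precise definition of the Lloyd polynomial $L_t$ in \eqref{eq: ft ball} (so that $\widehat{\1_{B(0,t)}}$ restricted to weight $k$ is exactly $L_t(k)$), and the definition $A_k^\bot=\tfrac1{|\cC|}\sum_iA_iK_k(i)$ of the dual distance distribution, which reduces to the weight distribution of $\cC^\bot$ when $\cC$ is linear.
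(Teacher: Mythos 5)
Your proof is correct and follows essentially the same route as the paper: the paper simply specializes its general Proposition~\ref{prop: L2-smth} (whose proof is exactly your direct expansion of the convolution over pairs of codewords) to $r=b_t$ and then invokes \eqref{eq:mu_tF} and \eqref{eq:r-hat r}, which is precisely the Fourier/Krawtchouk bookkeeping you carry out (and also the ``alternative route'' you mention at the end). The only cosmetic difference is your unnormalized Walsh--Hadamard convention versus the paper's $2^{-n}$-normalized one in \eqref{eq:ft}, but you are internally consistent, so nothing is affected.
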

The proof is immediate from Proposition \ref{prop: L2-smth} in combination with  \eqref{eq:mu_tF} and \eqref{eq:r-hat r}.

\subsection{Probability of decoding error on a BSC\texorpdfstring{$(\delta)$}{} }

The idea that smoothing of codes under some conditions implies good decoding performance has appeared in a number of papers using different language. Smoothing of capacity achieving codes was considered in \cite{shamai1997empirical,polyanskiy2013empirical}. H{\k a}z\l{}a et al. \cite{hkazla2021codes} showed that if a code (sequence) is perfectly smoothable with respect to Bernoulli noise, then the dual code is good for decoding (see Theorem~\ref{prop: Good erasure correcting codes in in BSC}, Corollary~\ref{cor: BEC capacity achieving codes in BSC}).
Going from smoothing to decodability involves representing the $D_2$-smoothness of codes with respect to Bernoulli noise as a potential energy form and comparing it to the Bhattacharya bound for the dual codes. One limitation of this approach is that it cannot infer decodability for rates $R > 1-\log(1+2\sqrt{\delta(1-\delta)})$ (this is the region above the blue solid curve in Figure \ref{fig: Wiretap channel - Thresholds}). Rao and Sprumont \cite{rao2022criterion} and H{\k a}z\l{}a \cite{hkazla2022optimal} proved that sufficient smoothing of codes implies the decodability of the codes themselves rather than their duals. However, these results are concerned with list decoding for rates above the Shannon capacity, resulting in exponential list size, which is arguably less relevant from the perspective of communication. 

Except for \cite{rao2022criterion}, the cited papers utilize  perfect or near-perfect smoothing to infer decodability.  For codes whose rates are below the capacity, perfect smoothing is impossible. At the same time, codes that possess sufficiently good smoothing properties are good for decoding. This property is at the root of the results for list decoding in  \cite{rao2022criterion}; however, their bounds were insufficient to make conclusions about list decoding below capacity.

Consider a channel where, for the input $X \sim f_{\cC}$, the output $Y$ is given by $Y = X + W$ with $W \sim b_t$. Define $F_t(y) = |\cC \cap B(y,t)|$ be the number of codewords in the ball $B(y,t)$. Hence, for a received vector $y$, the possible number of codewords that can yield $y$ is given by $F_t(y)$. Intuitively, the decoding error is small if $F_t(y) \approx 1$ for typical errors. Therefore, $F_t$ is of paramount interest in decoding problems. Since the typical errors for both ball noise and the Bernoulli noise is almost the same, this allows us to obtain a bound for decodability in the BSC channel.
Using this approach, we show that the error probability of decoding
on a ${\text{\rm BSC}}(\delta)$ can be expressed via the second moment of the number of codewords in 
the ball of radius $ t \gtrsim \delta n $.

Assume without loss of generality that $\cC$ is a linear code and $0^n$ is used for transmission.
Let $Y$ be the random Bernoulli vector of errors, and note that $Y \sim \beta_{\delta}$.
The calculation below does not depend on whether we rely on unique or list decoding within a ball of radius $t$, so let us assume that the decoder outputs $L\ge 1$  candidate codewords conditioned on the received vector $y$, which is a realization of $Y.$ 

In this case, the list decoding error can be written as 
\begin{align}\label{eq: lst_error def}
    P_{L,t}(\cC,{\text{\rm BSC}}(\delta)) = \Pr\{F_t(Y) \geq L+1 \cup |Y| > t\}.
\end{align}

\begin{theorem}\label{lemma: list decoding error} 
Let $t$ and $t^{\prime}$ be integers such that $0<t^{\prime} < t<n$. Then for any $L\ge 1,$
\begin{align} \label{eq: ld_error}
    P_{L,t}(\cC,{\text{\rm BSC}}(\delta)) \leq \frac{\beta_{\delta}(t^{\prime})}{L}\sum_{w=1}^{n}{\mu_t(w)}A_w + \Pr(|Y|\leq t^\prime \cup |Y|\geq t ).
\end{align}
\end{theorem}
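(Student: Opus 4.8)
The plan is to bound the probability in \eqref{eq: lst_error def} by splitting the event $\{F_t(Y) \geq L+1\} \cup \{|Y| > t\}$ according to whether $|Y|$ is small (at most $t'$), moderate ($t' < |Y| \leq t$), or large ($|Y| > t$). The last two cases are absorbed into the term $\Pr(|Y| \leq t' \cup |Y| \geq t)$ in a somewhat indirect way: note that we really want to isolate the ``dangerous'' regime in which $Y$ is a typical error (so $|Y|$ is close to $\delta n$, in particular $\leq t'$ for a well-chosen $t'$ slightly below $\delta n$ — wait, rather $t'$ is an auxiliary radius; the point is that $\Pr(t' < |Y| \le t) \le \Pr(|Y| \le t' \cup |Y| \ge t)$ trivially fails, so instead the moderate band $t'<|Y|\le t$ must be the band we actually control). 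So first I would write
\begin{align*}
P_{L,t}(\cC,{\text{\rm BSC}}(\delta)) &\leq \Pr\{F_t(Y) \geq L+1,\ |Y| \leq t'\} + \Pr\{|Y| > t'\}\\
&\leq \Pr\{F_t(Y) \geq L+1,\ |Y| \leq t'\} + \Pr(|Y|\leq t^\prime \cup |Y|\geq t ),
\end{align*}
where the second inequality is the crude bound $\Pr\{|Y|>t'\} \le \Pr\{|Y| \le t'\} + \Pr\{|Y| \ge t\}$ using $0 < t' < t$ — actually this requires a moment's care, since $\{|Y|>t'\}$ and $\{|Y| \le t' \cup |Y| \ge t\}$ are not obviously nested; the honest statement is $\Pr\{|Y|>t'\} = \Pr\{t' < |Y| < t\} + \Pr\{|Y| \ge t\} \le 1$, so one should instead keep $\Pr\{|Y|>t'\}$ and note it is dominated by the stated error term only if we reinterpret: the cleanest route is to bound $\Pr\{F_t(Y)\ge L+1\} \le \Pr\{F_t(Y)\ge L+1,\ |Y|\le t'\} + \Pr\{|Y| > t'\}$ and then observe $\Pr\{|Y|>t'\}\le\Pr\{|Y|\ge t\}+\Pr\{t'<|Y|<t\}$, with the middle term handled together with $\Pr\{|Y|\le t'\}$ exactly as in the RHS of \eqref{eq: ld_error}. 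I would reconcile this bookkeeping at the start.

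The heart of the argument is bounding $\Pr\{F_t(Y) \geq L+1,\ |Y| \leq t'\}$. Here I would use a first-moment (Markov) bound: on the event $|Y| \le t'$, having $F_t(Y) \ge L+1$ means the ball $B(Y,t)$ contains at least $L+1$ codewords, one of which is $0^n$ (since $|Y| \le t' < t$), hence at least $L$ nonzero codewords. Therefore
\begin{align*}
\Pr\{F_t(Y) \geq L+1,\ |Y| \leq t'\} \leq \frac{1}{L}\,\mathbbm{E}\big[\,(F_t(Y)-1)\,\1_{\{|Y|\le t'\}}\,\big] \leq \frac{1}{L}\sum_{y:\, |y|\le t'} \beta_\delta(y)\, |\{w \in \cC\setminus\{0\}: w \in B(y,t)\}|.
\end{align*}
Swapping the order of summation, the inner count over $y$ with $|y| \le t'$ and $|y - w| \le t$ is at most $|B(0,t) \cap B(w,t)| = \mu_t(|w|)$ (dropping the constraint $|y|\le t'$ only enlarges the set, and $\beta_\delta(y) \le \beta_\delta(t')$ for $|y| \le t'$ since $\delta \le 1/2$ makes $\beta_\delta$ decreasing in weight — this monotonicity is where $t'$ and the factor $\beta_\delta(t')$ enter). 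This yields
\begin{align*}
\Pr\{F_t(Y) \geq L+1,\ |Y| \leq t'\} \leq \frac{\beta_\delta(t')}{L} \sum_{w \in \cC\setminus\{0\}} \mu_t(|w|) = \frac{\beta_\delta(t')}{L}\sum_{w=1}^n \mu_t(w) A_w,
\end{align*}
using the definition \eqref{eq:dd} of the distance distribution (with $\cC$ linear, $A_w$ counts codewords of weight $w$).

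Combining the two displays gives \eqref{eq: ld_error}. The main obstacle I anticipate is the bookkeeping in the first paragraph: getting the three weight-regimes to line up so that the ``bad'' middle band $t' < |Y| < t$ and the two tails $|Y|\le t'$, $|Y|\ge t$ together reproduce exactly the term $\Pr(|Y|\le t' \cup |Y|\ge t)$ in the statement, without double-counting or losing a factor. The second delicate point is justifying $\beta_\delta(y) \le \beta_\delta(t')$ on $\{|y| \le t'\}$, which needs $\delta \le 1/2$ (stated in the ambient setup for the BSC) so that $\beta_\delta(x) = \delta^{|x|}(1-\delta)^{n-|x|}$ is nonincreasing in $|x|$; I would flag this explicitly. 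Everything else is a routine Markov inequality plus a Fubini-type interchange against the ball-intersection volume $\mu_t$.
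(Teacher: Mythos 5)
There is a genuine gap, and it is fatal to the argument as written: your decomposition is backwards. You apply the first-moment/Markov bound on the event $\{|Y|\le t'\}$ and leave $\Pr\{|Y|>t'\}$ as the residual term. But in the intended regime $t'<\delta n<t$ (cf.\ Proposition~\ref{prop: list decoding error - asymptotic}, where $t'=\delta n-n^{\theta}$ and $t=\delta n+n^{\theta}$), the band $t'<|Y|\le t$ carries essentially all the probability mass, so $\Pr\{|Y|>t'\}\to 1$ while the claimed error term $\Pr(|Y|\le t'\cup|Y|\ge t)\to 0$. Your union bound is valid but proves a vacuous statement, not \eqref{eq: ld_error}; the bookkeeping you flag as needing "reconciliation at the start" cannot in fact be reconciled along your route. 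The correct split is the opposite one: set $S_{t',t}=B(0,t)\setminus B(0,t')$, write
\begin{align*}
P_{L,t}(\cC,{\text{\rm BSC}}(\delta)) \le \Pr\{(F_t(Y)\ge L+1)\cap(Y\in S_{t',t})\} + \Pr(Y\notin S_{t',t}),
\end{align*}
and note that $\{Y\notin S_{t',t}\}=\{|Y|\le t'\}\cup\{|Y|>t\}$ is exactly the atypical event appearing in \eqref{eq: ld_error}. The Markov argument is then run on the annulus $S_{t',t}$.

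This exposes the second error, which is the direction of the monotonicity of $\beta_\delta$. Since $\delta<1/2$, the pmf $\beta_\delta(x)=\delta^{|x|}(1-\delta)^{n-|x|}$ is \emph{decreasing} in $|x|$; hence $\beta_\delta(y)\le\beta_\delta(t')$ holds for $|y|\ge t'$ (in particular on $S_{t',t}$), whereas on your set $\{|y|\le t'\}$ the inequality runs the other way, $\beta_\delta(y)\ge\beta_\delta(t')$. Your parenthetical justification is internally inconsistent on this point. The remaining ingredients of your proposal — the bound $\1_{F_t(y)\ge L+1}\le (F_t(y)-1)/L$ for $y\in B(0,t)$ (valid because $0^n\in B(y,t)$ there), the interchange of summation, and the identity $\sum_{y\in B(0,t)}(F_t(y)-1)=\sum_{c\in\cC}(\1_{B(0,t)}\ast\1_{B(0,t)})(c)-V_t=\sum_{i\ge1}\mu_t(i)A_i$ — are all sound and match the paper's computation; only they must be carried out over $S_{t',t}$ (enlarged to $B(0,t)$, which only adds nonnegative terms) rather than over $\{|y|\le t'\}$.
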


\begin{proof}
     Define $S_{t^{\prime},t}= B(0,t)\setminus B(0,t^{\prime})$. Clearly,
\begin{align*}
    P_{L,t}(\cC,{\text{\rm BSC}}(\delta)) &= \Pr\{F_t(Y) \geq L+1 \cup |Y| > t\}\\
 & \le \Pr\{(F_t(Y) \geq L+1) \cap (Y\in S_{t^{\prime},t})\} + \Pr(Y\not\in S_{t^{\prime},t}).
   \end{align*}
 Let us estimate the first of these probabilities. 
  \begin{align*}
   & \Pr\{(F_t(Y) \geq L+1) \cap (Y\in S_{t^{\prime},t})\}\\ 
   & = \sum_{y\in S_{t^{\prime},t}}\1_{F_t(y)\geq L+1}\beta_{\delta}(y)\\  
   & \leq \sum_{y\in S_{t^{\prime},t}}\frac{F_t(y)-1}{L}\beta_{\delta}(y)
   \\
    & \leq \frac{\beta_{\delta}(t^{\prime})}{L}\sum_{y\in S_{t^{\prime},t}}(F_t(y)-1)    \\
    & \leq \frac{\beta_{\delta}(t^{\prime})}{L}\sum_{y\in B(0,t)}(F_t(y)-1) \quad \text{(because for all  } y \in B(0,t),  F_t(y) \geq 1)  \\ 
    & = \frac{\beta_{\delta}(t^{\prime})}{L}\Big(\sum_{y \in \cH _n}(\1_{\cC}\ast\1_{B(0,t)})(y)\1_{B(0,t)}(y) -V_t\Big)\\
    & = \frac{\beta_{\delta}(t^{\prime})}{L}\Big(\sum_{c \in \cC}(\1_{B(0,t)}\ast\1_{B(0,t)})(c) -V_t\Big)\\
    & = \frac{\beta_{\delta}(t^{\prime})}{L}\sum_{i =1}^n\mu_t(i)A_i. \qedhere
\end{align*}
\end{proof}

\begin{remark}
In the case of $L=1$, the bound in \eqref{eq: ld_error} can be considered a slightly weaker version of Poltyrev's bound \cite{poltyrev1994bounds}, Lemma 1. By allowing this weakening, we obtain a bound in
a somewhat more closed-form, also connecting decodability with smoothing. We also prove a simple bound for the error probability of list decoding expressed in terms of the code's distance distribution (and, from \eqref{eq:r-hat r}, also in terms of the dual distance distribution). The latter result seems not to have appeared in earlier literature.
\end{remark}

The following version of this lemma provides an error bound, which is useful in the asymptotic setting.  
\begin{proposition}  \label{prop: list decoding error - asymptotic}
Let $t = \delta n+ n^{\theta},$ where $\theta\in(1/2,1).$ then
\begin{align*} 
    P_{L,t}(\cC,{\text{\rm BSC}}(\delta)) \leq \frac{\sqrt{2n}}{L{V_t}}\Big(\frac{1-\delta}{\delta}\Big)^{2n^{\theta}}\sum_{w=1}^{n}{\mu_t(w)}A_w +2e^{-n^{2\theta-1}}.
\end{align*}

In particular,
\begin{align*} 
    P_{L,t}(\cC,{\text{\rm BSC}}(\delta)) \leq \frac{\sqrt{2n}}{V_t}\Big(\frac{1-\delta}{\delta}\Big)^{2n^{\theta}}\sum_{w=1}^{n}{\mu_t(w)}A_w +2e^{-n^{2\theta-1}}.
\end{align*}
\end{proposition}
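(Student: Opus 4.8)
The plan is to derive Proposition~\ref{prop: list decoding error - asymptotic} as a direct consequence of Theorem~\ref{lemma: list decoding error}, specializing the free parameter $t^\prime$ and then estimating the two resulting pieces with standard concentration bounds for the binomial distribution. First I would set $t^\prime = \delta n - n^\theta$, so that the ``annulus'' $S_{t^\prime,t}$ is centered at the mean $\delta n$ of $|Y| \sim \mathrm{Bin}(n,\delta)$ and has half-width $n^\theta$. This choice makes both the complement event $\{|Y| \le t^\prime\} \cup \{|Y| \ge t\}$ and the factor $\beta_\delta(t^\prime)$ easy to control.

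The second term, $\Pr(|Y| \le t^\prime \cup |Y| \ge t)$, is $\Pr(||Y| - \delta n| \ge n^\theta)$, which by Hoeffding's inequality is at most $2\exp(-2n^{2\theta}/n) = 2\exp(-2n^{2\theta-1})$; to match the stated bound one simply uses the slightly weaker $2e^{-n^{2\theta-1}}$ (or keeps the constant $2$ in the exponent — I would just write $2e^{-n^{2\theta-1}}$ as in the statement, which is implied). For the first term, I need an upper estimate on $\beta_\delta(t^\prime)$, where recall $\beta_\delta(x) = \delta^{|x|}(1-\delta)^{n-|x|}$ depends only on weight. With $|x| = t^\prime = \delta n - n^\theta$, we get $\beta_\delta(t^\prime) = \delta^{\delta n - n^\theta}(1-\delta)^{(1-\delta)n + n^\theta} = \big(\delta^{\delta}(1-\delta)^{1-\delta}\big)^n \big(\tfrac{1-\delta}{\delta}\big)^{n^\theta}$. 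The factor $\big(\delta^{\delta}(1-\delta)^{1-\delta}\big)^n = 2^{-nh(\delta)}$ must then be absorbed: one bounds $\tfrac{1}{V_t} \le \tfrac{1}{V_{\delta n}}$ is the wrong direction, so instead I would keep $V_t$ as is and observe that the statement carries $\tfrac{1}{V_t}$ explicitly, so really the $2^{-nh(\delta)}$ from $\beta_\delta(t^\prime)$ needs to combine with a $\sqrt{2n}\,2^{nh(\delta)}$-type lower bound on a volume — more precisely, I would use the elementary estimate $\beta_\delta(t^\prime)\cdot V_{t} \le \beta_\delta(t')\cdot 2^{n} \le \sqrt{2n}\,(\tfrac{1-\delta}{\delta})^{2n^\theta}$ after noting $\beta_\delta(\delta n) \ge \tfrac{1}{\sqrt{2n}} 2^{-nh(\delta)}$ (a standard Stirling-type bound on the central binomial term) and $\beta_\delta(t') = \beta_\delta(\delta n)\cdot(\tfrac{1-\delta}{\delta})^{n^\theta} \le \beta_\delta(\delta n)\cdot(\tfrac{1-\delta}{\delta})^{n^\theta}$, combined with $\beta_\delta(\delta n)\le \max_x \beta_\delta(x) = 2^{-nh(\delta)+o(n)}$; carefully tracking constants yields the polynomial prefactor $\sqrt{2n}$.

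Plugging $t^\prime = \delta n - n^\theta$ into \eqref{eq: ld_error} and substituting these two estimates gives the first displayed inequality; the second follows trivially by dropping the $1/L$ factor (valid since $L \ge 1$). The main obstacle — really the only delicate point — is getting the constants in the prefactor exactly right: one must pair the exponentially small $2^{-nh(\delta)}$ coming from $\beta_\delta(t^\prime)$ against the volume $V_t \approx 2^{nh(\delta)}$ that appears in the denominator of Theorem~\ref{lemma: list decoding error}'s bound (after writing $\mu_t(w) \le V_t$ where needed, or rather keeping $\sum_w \mu_t(w) A_w$ intact and only massaging the $\beta_\delta(t^\prime)/L$ prefactor), so that what survives is precisely the subexponential quantity $\sqrt{2n}\,(\tfrac{1-\delta}{\delta})^{2n^\theta}$. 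I would handle this by invoking the two-sided estimate $\tfrac{1}{\sqrt{2n}}2^{nh(\delta)} \le V_{\delta n} \le 2^{nh(\delta)}$ together with monotonicity of $\beta_\delta$ on $[0,\delta n]$, which controls $\beta_\delta(t')$ from above by $\beta_\delta(\delta n - n^\theta)$ and hence produces the $(\tfrac{1-\delta}{\delta})^{n^\theta}$ factor; the square in the exponent, $(\tfrac{1-\delta}{\delta})^{2n^\theta}$, then arises because the same shift $n^\theta$ enters once through $\beta_\delta(t')/\beta_\delta(\delta n)$ and once more through $V_{\delta n}/V_t$-type slack (or equivalently because the bound is being stated with room to spare). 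Everything else is a routine substitution.
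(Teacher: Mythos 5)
Your skeleton matches the paper's proof exactly: set $t'=\delta n-n^{\theta}$ in Theorem~\ref{lemma: list decoding error}, control the tail term by Hoeffding (giving $2e^{-2n^{2\theta-1}}\le 2e^{-n^{2\theta-1}}$), and compute $\beta_\delta(t')=2^{-nh(\delta)}\bigl(\tfrac{1-\delta}{\delta}\bigr)^{n^{\theta}}$. The second term and the reduction of the second display to the first (drop $1/L$) are fine.

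However, the one estimate that actually carries the proposition --- $\beta_\delta(t')\,V_t\le \sqrt{2n}\,\bigl(\tfrac{1-\delta}{\delta}\bigr)^{2n^{\theta}}$ --- is where your argument breaks. The chain you write, $\beta_\delta(t')V_t\le \beta_\delta(t')\cdot 2^n\le \sqrt{2n}\,(\tfrac{1-\delta}{\delta})^{2n^{\theta}}$, is false: the middle quantity equals $2^{n(1-h(\delta))}(\tfrac{1-\delta}{\delta})^{n^{\theta}}$, which is exponentially large for $\delta<1/2$, while the right-hand side is subexponential. The surrounding appeals to Stirling are also off target: $\beta_\delta$ here is the pointwise pmf $\delta^{|x|}(1-\delta)^{n-|x|}$, so $\beta_\delta(\delta n)=2^{-nh(\delta)}$ exactly (no binomial coefficient, no $\sqrt{2n}$ correction), and $\max_x\beta_\delta(x)=(1-\delta)^n\ne 2^{-nh(\delta)+o(n)}$. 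Your closing remark that the square in the exponent comes from ``$V_{\delta n}/V_t$-type slack (or ... room to spare)'' identifies the right place but never proves it, and the parenthetical is wrong --- the second factor of $(\tfrac{1-\delta}{\delta})^{n^{\theta}}$ is genuinely needed, not slack. The correct step is: for $t\le n/2$, $V_t\le 2^{nh(t/n)}$, and by concavity of $h$, $h(\delta+n^{\theta-1})\le h(\delta)+n^{\theta-1}\log\tfrac{1-\delta}{\delta}$, so $V_t\le 2^{nh(\delta)}\bigl(\tfrac{1-\delta}{\delta}\bigr)^{n^{\theta}}$; multiplying by $\beta_\delta(t')=2^{-nh(\delta)}\bigl(\tfrac{1-\delta}{\delta}\bigr)^{n^{\theta}}$ gives $\beta_\delta(t')V_t\le\bigl(\tfrac{1-\delta}{\delta}\bigr)^{2n^{\theta}}$, which is even stronger than the stated bound (the $\sqrt{2n}$ is pure headroom). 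With that substitution the proof closes.
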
 

\begin{proof}
Set $t^{\prime} = \delta n -n^{\theta}$. A direct calculation shows that $$\beta_\delta(t^{\prime})V_t < {\sqrt{2n}}\Big(\frac{1-\delta}{\delta}\Big)^{2n^{\theta}}.$$
By the Hoeffding bound, 
\begin{align*}
   \Pr(|Y|\leq t^\prime \cup |Y|\geq t ) \le 2e^{-n^{2\theta-1}}.
\end{align*}
Together with Lemma \ref{lemma: list decoding error} this implies our statements.
\end{proof}

A question of prime importance is whether the right-hand side quantities in Proposition \ref{prop: list decoding error - asymptotic} converge to 0. For $R < 1-h(\delta)$, one can easily see that for random codes, $\sum_{w=1}^{n}\frac{\mu_t(w)}{V_t}A_w = 2^{-\Theta(n)}$ where $t = \delta n+ n^{\theta}$ showing that this is in fact the case. 

From Proposition \ref{prop: ball noise L2}, it is clear that the potential energy $\sum_{w=1}^{n}{\mu_t(w)}A_w$ is a measure of the smoothness of $T_{b_t}f_\cC$. This implies that codes that are sufficiently smoothable with respect to $b_t$ are decodable in the BSC with vanishing error probability. In other words, Proposition \ref{prop: list decoding error - asymptotic} establishes a connection between smoothing and decoding error probability.


\section{Perfect smoothing---The finite case}\label{sec: Perfect smoothing -finite}
In this section, we briefly overview another form of perfect smoothing, which is historically the
earliest application of these ideas in coding theory. It is not immediately related to the information-theoretic problems considered in the other parts. 

We are interested in radial kernels that yield perfect smoothing for a given code. 
We call $\rho(r):=\max(i: r(i)\ne 0)$ the {\em radius} of $r$. Note that the logarithm of the support size of $r$ (as a function on the space $\cH_n$) is exactly the $0$-R{\'e}nyi entropy of $r$. Therefore, kernels with smaller radii can be perceived as less random, supporting the view of the radius $\rho(r)$ as a general measure of randomness.
\begin{definition}
  We say a code $\cC$ is perfectly smoothable with respect to $r$ if $T_rf_{\cC}(x) = \frac{1}{2^n}$ for all
  $x\in \cH_n$, and in this case we say that $r$ is a perfectly smoothing kernel for $\cC$.
\end{definition}

Intuitively, such a kernel should have a sufficiently large radius. In particular, it should be as large as the {\em covering radius} of the code $\rho(\cC)$ or otherwise, smoothing does not affect the vectors that are $\rho$ away from the code. To
obtain a stronger condition, recall that the external distance of code $\cC$ is 
    $
         \bar{d}(\cC) = |\{i\ge 1: A_i^\bot\ne0\}|.
    $
\begin{proposition}\label{prop: perfect smoothing requires spread kernels}
    Let $r$ be a perfectly smoothing kernel of code $\cC$. Then $ \rho(r) \geq \bar{d}(\cC).$
\end{proposition}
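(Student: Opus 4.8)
The plan is to pass to the Fourier (Walsh--Hadamard) side of the identity $T_rf_{\cC}=U_n$ and read off a constraint on the support of $\widehat r$ from the external distance of $\cC$. Recall that for a function $g\colon\cH_n\to\reals$ the Fourier coefficients are $\widehat g(y)=\sum_{x}g(x)(-1)^{\langle x,y\rangle}$, that convolution becomes pointwise multiplication, $\widehat{T_rf_{\cC}}=\widehat r\cdot\widehat{f_{\cC}}$, and that $\widehat{U_n}$ is supported only at $y=0$ (where it equals $1$). Since $f_{\cC}$ is a pmf, $\widehat{f_{\cC}}(0)=1$, and the condition $T_rf_{\cC}=U_n$ is equivalent to
\begin{align*}
\widehat r(y)\,\widehat{f_{\cC}}(y)=0\qquad\text{for all }y\ne 0 .
\end{align*}
Thus at every $y\ne 0$ with $\widehat r(y)\ne 0$ we must have $\widehat{f_{\cC}}(y)=0$.

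Next I would identify the support of $\widehat{f_{\cC}}$. When $\cC$ is a linear code, $\widehat{f_{\cC}}=\1_{\cC^\bot}$, so $\widehat{f_{\cC}}(y)\ne 0$ exactly for $y\in\cC^\bot$; in general one still has that the MacWilliams transform of the distance distribution $(A_i)$ gives the dual distance distribution $(A_k^\bot)$, and $\widehat r(y)$ for a radial kernel depends only on $|y|$ through the Krawtchouk expansion. The upshot is: the perfect-smoothing condition forces $\widehat r(y)=0$ for every $y\in\cC^\bot\setminus\{0\}$, i.e. for every nonzero dual weight. Since $\widehat r$ for a radial $r$ of radius $\rho(r)$ is a linear combination of Krawtchouk polynomials $K_0,\dots,K_{\rho(r)}$ evaluated at $|y|$, it is (as a function of the weight) a polynomial of degree at most $\rho(r)$; a nonzero such polynomial has at most $\rho(r)$ roots among $\{1,\dots,n\}$. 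But it must vanish at all $\bar d(\cC)$ distinct nonzero dual weights. If $r$ is not identically zero on the relevant set (which it cannot be, since $r$ is a pmf and $T_rf_{\cC}=U_n$ forces $\widehat r(0)=1\ne 0$, hence $\widehat r$ is not the zero polynomial), this is only possible when $\rho(r)\ge\bar d(\cC)$, which is the claim.

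I would organize the write-up as: (i) Fourier formulation of perfect smoothing, yielding $\widehat r$ vanishes on nonzero dual weights; (ii) the Krawtchouk/polynomial-degree bound $\deg\widehat r\le\rho(r)$ for radial $r$; (iii) counting roots. The main obstacle is the bookkeeping in step (ii): one must be careful that the relevant quantity is the number of \emph{distinct} nonzero weights in the dual support (which is exactly $\bar d(\cC)$ by definition), and that the polynomial representing $\widehat r$ as a function of Hamming weight genuinely has degree $\le\rho(r)$ and is not the zero polynomial — the latter needing the observation that $\widehat r(0)=\sum_i r(i)V$-type normalization is nonzero. Everything else is a routine transcription of standard facts about the Hamming-scheme Fourier transform already implicit in the paper's use of $A_i^\bot$ and Lloyd polynomials.
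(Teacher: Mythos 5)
Your proposal is correct and follows essentially the same route as the paper: both arguments reduce perfect smoothing to the vanishing of $\widehat r$ at all nonzero dual weights (you via the pointwise condition $\widehat r\cdot\widehat{f_{\cC}}=0$ off the origin, the paper via the equivalent $L_2$ identity $\sum_{i\ge1}\widehat r(i)^2A_i^\bot=0$), and then count roots of the nonzero degree-$\le\rho(r)$ Krawtchouk combination representing $\widehat r$ as a function of weight. The only point to tighten in a write-up is the non-linear case, where the paper's use of $A_i^\bot\ge 0$ makes the "vanishing on all $\bar d(\cC)$ dual weights" step immediate.
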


\begin{proof} Note that perfect smoothing of $\cC$ with respect to $r$ is equivalent to 
\begin{align*}
    \|2^nT_rf_{\cC}\|_2^2 = 1,
\end{align*}
which  by Proposition \ref{prop: L2-smth} is equivalent to the following condition:
\begin{align*}
    \sum_{i=1}^n\hat{r}(i)^2A^\bot_i = 0.
\end{align*}
Therefore,
\begin{align*}
    \bar{d}(\cC) = |\{i\ge 1: A_i^\bot\ne0\}| \leq n - |\{i\ge 1: \hat{r}(i)\ne0\}|.
\end{align*}
 By definition,
\begin{align*}
    \hat{r} = \frac{1}{2^n} K^{\intercal}r,
\end{align*}
where $K=(K_i(j))_{i,j=0}^n$ is the Krawtchouk matrix. 
Define $I_1 = \{j\in \{1,2,\dots, n\} : \hat{r}(j) = 0 \}$ and $I_2 = \{i\in \{1,2,\dots, n\} : r(i) \neq 0 \} $
then 
\begin{align*}
    0 = \hat{r}|_{I_1} = \frac{1}{2^n}K^{\intercal}|_{(I_1,:)}r = \frac{1}{2^n} K^{\intercal}|_{(I_1,I_2)}r|_{I_2}.
\end{align*}
This relation implies that there exists a linear combination of Krawtchouk polynomials of degree at most $\rho(r)$ with $|I_1|$ roots. Therefore, $\bar{d}(\cC) \leq
n - |\supp(\{\hat{r}(i)\}_{i=1}^n)| = |I_1| \leq \rho(r).$ 
\end{proof}
Since $\rho(\cC)\le \bar d(\cC)$, this inequality strengthens the obvious condition $\rho(r) \geq \rho(\cC).$ 
At the same time, there are codes that are perfectly smoothable by a radial kernel $r$ such that $\rho(r)=\rho(\cC)$. 

\begin{definition}{\cite{semakov1971uniformly}} \label{def:UP}
A code $\cC$ is uniformly packed in the wide sense if there exists rational numbers $\{\alpha_i\}_{i=0}^{\rho}$ such that 
\begin{align*}
    \sum_{i=0}^{\rho(\cC)} \alpha_iA_i(x) = 1 \quad \text{ for all } x\in \cH_n,
\end{align*}
where $A_i(x)$ is the weight distribution of the code $\cC-x$.
\end{definition}
Our main observation here is that some uniformly packed codes are perfectly smoothable with respect to noise kernels that are minimal in a sense. The following proposition states this more precisely. 
\begin{proposition}\label{prop: uniformly packed codes} Let $\cC$ be a code that is perfectly smoothable 
by a radial kernel of radius $\rho(r)=\rho(\cC)$. Then $\cC$ is uniformly packed in the wide sense with
$\alpha_i\ge 0$ for all $i$.
\end{proposition}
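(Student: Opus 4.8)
### Proof proposal

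\textbf{Goal and setup.} We want to show: if $\cC$ admits a perfectly smoothing \emph{radial} kernel $r$ with $\rho(r) = \rho(\cC)$, then $\cC$ is uniformly packed in the wide sense with nonnegative coefficients. The plan is to unpack the perfect-smoothing condition via the Fourier/MacWilliams identity, exploit the fact that $r$ is radial and of radius exactly $\rho(\cC)$ to pin down $\hat r$ almost completely, and then reinterpret the resulting relation as a uniform-packing identity.

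\textbf{Step 1: the perfect-smoothing condition in the Fourier domain.} As in the proof of Proposition~\ref{prop: perfect smoothing requires spread kernels}, perfect smoothing of $\cC$ by $r$ is equivalent to $\|2^nT_rf_\cC\|_2^2 = 1$, which by Proposition~\ref{prop: L2-smth} is equivalent to $\sum_{i=1}^n \hat r(i)^2 A_i^\bot = 0$, i.e., $\hat r(i) = 0$ for every $i \ge 1$ with $A_i^\bot \ne 0$. Write $r$ as a vector of values $(r(i))_{i=0}^{\rho}$ indexed by weight, so $\hat r(j) = \frac{1}{2^n}\sum_{i=0}^{\rho} r(i) K_i(j)$ where $K_i$ is the $i$-th Krawtchouk polynomial (degree $i$); here $\rho = \rho(r) = \rho(\cC)$.

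\textbf{Step 2: solving for $r$ from the vanishing of $\hat r$.} Consider the polynomial $P(x) := \sum_{i=0}^{\rho} r(i) K_i(x)$, which has degree exactly $\rho$ (since $r(\rho) \ne 0$ and the $K_i$ form a triangular basis). By Step 1, $P$ vanishes at all $j \in \{1,\dots,n\}$ with $A_j^\bot \ne 0$; there are $\bar d(\cC)$ such points. By Proposition~\ref{prop: perfect smoothing requires spread kernels} we have $\bar d(\cC) \le \rho$, but in fact the classical bound $\bar d(\cC) \ge \rho(\cC) = \rho$ (external distance bounds the covering radius from above) forces $\bar d(\cC) = \rho$. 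Hence $P$ is a degree-$\rho$ polynomial with exactly $\rho$ prescribed roots (the nonzero positions of the dual weight distribution), so it is determined up to a scalar: $P(x) = c \prod_{j : A_j^\bot \ne 0, j\ge 1}(x - j)$. In particular the values $r(0), \dots, r(\rho)$ are determined up to a single multiplicative constant, and the normalization $\sum_i r(i)\binom{n}{i}$-type condition (that $r$ is a pmf, plus $\hat r(0) = 2^{-n}$) fixes $c$.

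\textbf{Step 3: from $r$ to the uniform-packing identity.} Now expand $T_rf_\cC(x) = 2^{-n}$ directly in the ``weight-distribution'' variables. Since $r$ is radial, $T_rf_\cC(x) = \frac{1}{|\cC|}\sum_{i=0}^{\rho} r(i)\, A_i(x)$ where $A_i(x) = |\{c \in \cC : |c-x| = i\}|$ is the (local) weight distribution of $\cC - x$. So perfect smoothing says $\sum_{i=0}^{\rho} r(i) A_i(x) = \frac{|\cC|}{2^n}$ for all $x$. Dividing by the constant $\frac{|\cC|}{2^n}$ gives $\sum_{i=0}^{\rho} \alpha_i A_i(x) = 1$ with $\alpha_i = \frac{2^n}{|\cC|} r(i)$. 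This is exactly Definition~\ref{def:UP}, and since $r$ is a pmf we have $\alpha_i = \frac{2^n}{|\cC|}r(i) \ge 0$ for all $i$, and the $\alpha_i$ are rational whenever the $r(i)$ are (which we may arrange, since $r$ is determined up to scalar by a rational linear system); this establishes the claim.

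\textbf{Main obstacle.} The substantive point is Step 2 — ensuring that the perfect-smoothing constraint, which a priori only says $\hat r$ vanishes on the support of $\{A_j^\bot\}$, actually pins $r$ down tightly enough that its weight-profile directly yields uniform-packing coefficients. This hinges on the equality $\bar d(\cC) = \rho(\cC) = \rho(r)$: the inequality $\rho(\cC) \le \bar d(\cC)$ is classical, while $\bar d(\cC) \le \rho(r)$ comes from Proposition~\ref{prop: perfect smoothing requires spread kernels}, and the hypothesis $\rho(r) = \rho(\cC)$ closes the loop. Once that chain of equalities is in place, the degree count ``degree-$\rho$ polynomial with $\rho$ roots'' makes the rest essentially bookkeeping; the only mild care needed is checking nonnegativity of the $\alpha_i$, which is immediate from $r$ being a probability kernel, but one should double-check that the normalization constant $c$ in Step 2 indeed comes out with the sign making $r \ge 0$ consistent — this is automatic since $r$ is given to be a pmf, so no genuine positivity argument about Krawtchouk products is required.
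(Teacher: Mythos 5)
Your Step~3 \emph{is} the paper's entire proof: since $r$ is radial with $\rho(r)=\rho(\cC)$, the condition $2^nT_rf_\cC\equiv 1$ rewrites directly as $\sum_{i=0}^{\rho}\bigl(\tfrac{2^n}{|\cC|}r(i)\bigr)A_i(x)=1$, and nonnegativity of the $\alpha_i$ is immediate because $r$ is a pmf. Your Steps~1 and~2 (passing to the Fourier domain, invoking Proposition~\ref{prop: perfect smoothing requires spread kernels} together with the Delsarte bound $\rho(\cC)\le\bar d(\cC)$ to get $\bar d(\cC)=\rho$, and pinning $r$ down as a scalar multiple of $\prod_{j:A_j^\bot\ne 0}(x-j)$ in the Krawtchouk basis) are logically sound but entirely unnecessary for the packing identity itself — the hypothesis that $r$ is a perfectly smoothing radial kernel of radius $\rho(\cC)$ feeds straight into Step~3. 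The one thing your detour genuinely buys is the \emph{rationality} of the $\alpha_i$, which Definition~\ref{def:UP} formally requires and which the paper's one-line proof silently glosses over; your observation that $r$ is determined up to a rational normalization by a rational linear system supplies that missing detail.
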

\begin{proof}
  By definition, if $\cC$ is perfectly smoothable with respect to $r$, then $2^nT_rf_{\cC}=1$ which is tantamount to $\sum_{y \in \cH_n}\frac{2^n}{|\cC|}r(y)\1_{\cC}(x-y)= 1$ for all $x \in \cH_n$. This condition can be written as $\sum_{i=0}^{\rho}\big(\frac{2^n}{|\cC|}r(i)\big)A_i(x)= 1$ for all $x \in \cH_n$, completing the proof.
\end{proof}

To illustrate this claim, we list several families of uniformly packed codes (\cite{semakov1971uniformly,goethals1975uniformily}; see also \cite{tokareva2007upper}) that are perfectly 
smoothable by a kernel of radius equal to the covering radius of the code.

    \begin{enumerate}[label = (\roman*)]
        \item Perfect codes: $r = b_{\rho}$ where $\rho=\rho(\cC)$ is the covering radius. 
        \item 2-error correcting BCH codes of length $2^{2m+1}, m\ge 2$. The smoothing kernel $r$ is given by
        $$r(0)=r(1) = L, r(2)=r(3)=\frac{3L}{n}, r(i)=0, i\geq 4.$$
        \item Preparata codes. The smoothing kernel $r$ is given by
        $$r(0)=r(1) = L, r(2)=r(3)=\frac{6L}{n-1}, r(i)=0, i\geq 4.$$
        \item Binary $(2^m-1,2^{2^m-3m+2},7)$ Goethals-like codes \cite{goethals1975uniformily}. The smoothing kernel $r$ is given by
        $$r(0)=r(1) = L, r(2)=r(3)=\frac{65L}{2n}, r(4)=r(5)=\frac{30L}{n(n-3)}, r(i)=0, i\geq 4.$$
  \end{enumerate}
Here $L$ is a generic notation for the normalizing factor.
More examples are found in a related class of {\em completely regular codes} \cite{borges2019completely}.

Definition~\ref{def:UP} does not include the condition that $\alpha_i\ge0$, and in fact there are codes that are uniformly packed in the wide sense, but some of the
$\alpha_i$'s are negative, and thus they are not smoothable by a noise kernel of radius
$\rho(\cC)$. One such family is 3-error-correcting binary BCH codes of length $2^{2m+1}, m\ge 2$ \cite{goethals1975uniformily}.

\appendix

\section{$L_2$ smoothing}

The Fourier transform of a function $f:\cH _n \rightarrow \reals$ is a function on the dual group $\widehat\cH_n$, which we identify with $\cH_n$:
\begin{equation}\label{eq:ft}
 \widehat{f}(y) = \frac{1}{2^n}\sum_{x\in {\cH _n}} f(x)(-1)^{x \cdot y}, \quad y\in\cH_n.
\end{equation}

The Fourier transform of the indicator function of the sphere is given by $\widehat\1_{S(0,t)}=\frac1{2^n}K_t,$ where
$K_t(x)= {K}_t^{(n)}(x) = \sum_{j=0}^t(-1)^j\binom{x}{j}\binom{n-x}{t-j}$ is a Krawtchouk polynomial of degree $t$. Then clearly
the Fourier transform of the indicator of the ball is
\begin{equation}\label{eq: ft ball}
    \widehat\1_{B(0,t)}  =  \frac{1}{2^n}L_t,
\end{equation}
where $L_t(x):=\sum_{i=0}^t K_i(x)$ is called the Lloyd polynomial \cite[p.64]{delsarte1973an}. The intersection of balls in \eqref{eq:mu_t} can be written as $\1_{B(0,t)}\ast\1_{B(x,t)},$ which implies the expression \cite[Lemma~4.1]{barg2021stolarsky}
   \begin{equation}\label{eq:mu_tF}
   \mu_t(i)=2^{-n}\sum_{k=0}^n L_t(k)^2 K_k(i).
   \end{equation}
Given a code $\cC\subset \cH_n$, we define the {\em dual distance distribution} of $\cC$ 
as the set of numbers $A^\bot_j:=\frac{1}{|C|}\sum_{i=0}^n A_i K_j(i),$ where $(A_i)_{i=0}^n$ is the distance distribution of $\cC$ \eqref{eq:dd}. Note that when $\cC$ is linear, the set
$(A^\bot_j)_{j=0}^n$ coincides with the distance distribution of its dual code $\cC^\bot.$
For a radial potential $V$ on $\cH_n$ and a code $\cC$ we have
    \begin{align}\label{eq:r-hat r}
        \sum_{i=0}^nV(i)A_i = |\cC|\sum_{k=0}^n\widehat{V}(k)A_k^{\bot}.
    \end{align}

The $ L_2$-smoothness of a noisy code distribution can be written in terms of the distance distribution or of the dual distance distribution.
\begin{proposition}\label{prop: L2-smth}
Let $\cC$ be a code and $r$ be a noise kernel. Then
\begin{align*}
    \|2^nT_rf_{\cC}\|_2^2 = \frac{2^n}{|\cC|}\sum_{i=0}^n (r \ast r)(i)A_i = 4^n\sum_{k=0}^n\hat{r}(k)^2A_k^\bot.
\end{align*}

\end{proposition}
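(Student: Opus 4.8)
The plan is to compute $\|2^nT_rf_{\cC}\|_2^2$ directly from the definition of the $2$-norm and then recognize the resulting sum as the stated quantity, first in terms of the distance distribution and then, via Fourier duality, in terms of the dual distance distribution. First I would write $2^nT_rf_{\cC} = \frac{2^n}{|\cC|}(r\ast \1_{\cC})$ and expand
\begin{align*}
    \|2^nT_rf_{\cC}\|_2^2 = \frac{1}{2^n}\sum_{x\in\cH_n}\Big(\frac{2^n}{|\cC|}\Big)^2 (r\ast\1_{\cC})(x)^2 = \frac{2^n}{|\cC|^2}\sum_{x\in\cH_n}(r\ast\1_{\cC})(x)^2.
\end{align*}
Expanding the square and swapping the order of summation turns $\sum_x (r\ast\1_\cC)(x)^2$ into $\sum_{c,c'\in\cC}(r\ast r)(c-c')$, using that $r$ is a probability kernel that may be taken even (radial), so $\tilde r = r$; more precisely $\sum_x (r\ast\1_\cC)(x)(r\ast\1_\cC)(x) = \sum_{c,c'}(r\ast\tilde r)(c-c')$ with $\tilde r(z)=r(-z)=r(z)$ in the Hamming space. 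Grouping the pairs $(c,c')$ by their distance $i=|c-c'|$ and invoking the definition \eqref{eq:dd} of $A_i$, which gives $|\{(c,c')\in\cC^2: |c-c'|=i\}| = |\cC|A_i$, yields
\begin{align*}
    \|2^nT_rf_{\cC}\|_2^2 = \frac{2^n}{|\cC|^2}\cdot |\cC|\sum_{i=0}^n (r\ast r)(i)A_i = \frac{2^n}{|\cC|}\sum_{i=0}^n(r\ast r)(i)A_i,
\end{align*}
which is the first claimed equality.

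For the second equality I would apply the relation \eqref{eq:r-hat r} with the radial potential $V = r\ast r$. Since the Fourier transform diagonalizes convolution, $\widehat{r\ast r} = 2^n\,\widehat r\,\widehat r = 2^n\widehat{r}^{\,2}$ (with the normalization of \eqref{eq:ft}, where $\widehat{f\ast g}=2^n\widehat f\widehat g$). Substituting into \eqref{eq:r-hat r} gives
\begin{align*}
    \sum_{i=0}^n(r\ast r)(i)A_i = |\cC|\sum_{k=0}^n \widehat{r\ast r}(k)A_k^\bot = |\cC|\sum_{k=0}^n 2^n\widehat r(k)^2 A_k^\bot,
\end{align*}
so that
\begin{align*}
    \|2^nT_rf_{\cC}\|_2^2 = \frac{2^n}{|\cC|}\cdot |\cC|\cdot 2^n\sum_{k=0}^n\widehat r(k)^2A_k^\bot = 4^n\sum_{k=0}^n\widehat r(k)^2A_k^\bot,
\end{align*}
completing the proof.

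The computation is essentially bookkeeping, so there is no deep obstacle; the one point requiring care is tracking the normalization constants. Two conventions interact here: the $2$-norm carries a factor $2^{-n}$ (it is an average, not a sum), and the Fourier transform in \eqref{eq:ft} is normalized so that convolution picks up a factor $2^n$ rather than $1$. I would make sure the factor of $2^n$ versus $2^{-n}$ is consistent between the Plancherel-type step and the definition \eqref{eq:r-hat r} as stated in the appendix, and verify the $\widehat{r\ast r}=2^n\widehat r^2$ identity against \eqref{eq:ft} directly. I would also note that the symmetry $r\ast\tilde r = r\ast r$ used in the first step is automatic in $\cH_n$ because $-z=z$, but it is worth stating explicitly so the reader sees why no reflected kernel appears.
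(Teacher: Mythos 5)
Your proof is correct and follows essentially the same route as the paper's: expand the $2$-norm, swap summations to obtain $\sum_{c,c'}(r\ast r)(c-c')$, group by distance to get the first equality, and then apply $\widehat{r\ast r}=2^n\hat r^2$ together with \eqref{eq:r-hat r} for the second. Your explicit remarks about the reflection $\tilde r=r$ (automatic since $-z=z$ in $\cH_n$) and about tracking the $2^n$ normalizations are sensible bookkeeping points that the paper leaves implicit.
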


\begin{proof}
Let us prove the first equality:
    \begin{align}
    \|2^nT_rf_{\cC}\|_2^2 &= \frac{1}{2^n}\sum_{x\in \cH_n}(2^nT_rf_{\cC}(x))^2\nonumber\\
    &= \frac{2^n}{|\cC|^2V_t^2}\sum_{x\in \cH_n}(r\ast \1_{\cC})(x)^2\nonumber\\
    &= \frac{2^n}{|\cC|^2}\sum_{x\in \cH_n}\sum_{y\in \cH_n}r(x-y) \1_{\cC}(y)\sum_{z\in \cH_n}r(x-z)\1_{\cC}(z)\nonumber\\
    &= \frac{2^n}{|\cC|^2}\sum_{y\in \cC} \sum_{z\in \cC}\sum_{x\in \cH_n}r(x-y)r(x-z)\nonumber\\
    &= \frac{2^n}{|\cC|^2}\sum_{y\in \cC} \sum_{z\in \cC}(r\ast r)(y-z)\nonumber\\
    &= \frac{2^n}{|\cC|}\sum_{i=0}^n (r \ast r)(i)A_i.  
\end{align}
The second equality is immediate by noticing that $\widehat{r\ast r} = 2^n \hat{r}^2$ and using \eqref{eq:r-hat r}.
\end{proof}

\vspace*{.1in}\section{Proof of Theorem ~\ref{thm: General smoothing capacities}}\label{appendixB}

We will first establish Theorem~\ref{thm: General smoothing capacities} when  $\alpha$ is rational, and then use a density argument to extend the proof to all real numbers. The case $\alpha= \infty$ is handled separately at the end of this appendix.

We will use the following technical claim.

\begin{lemma}\label{lemma: frac binom.}
    Let $x$ and $y$ be two non-negative reals. Further, let $p$ and $q$ be positive integers. Then 
    \begin{align*}
        (x+y)^\frac{p}{q} \leq \sum_{k=0}^p\binom{p}{k}x^\frac{k}{q}y^\frac{p-k}{q}.
    \end{align*}
\end{lemma}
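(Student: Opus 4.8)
The statement is a subadditivity inequality for the power function, and the natural route is to reduce it to the elementary fact that $\ell^q$-norms dominate $\ell^1$-norms on $\mathbb{R}^2$ for integer $q\ge 1$. First I would substitute $a := x^{1/q}$ and $b := y^{1/q}$, both nonnegative, so that $x = a^q$, $y = b^q$, and $x^{k/q}y^{(p-k)/q} = a^k b^{p-k}$. In these variables the right-hand side becomes exactly the binomial expansion
\[
  \sum_{k=0}^p \binom{p}{k} a^k b^{p-k} = (a+b)^p,
\]
and the left-hand side becomes $(a^q+b^q)^{p/q}$, so the claim is equivalent to $(a^q+b^q)^{p/q} \le (a+b)^p$.

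\textbf{Key step.} It now suffices to show $a^q + b^q \le (a+b)^q$ for $a,b\ge 0$ and $q$ a positive integer; then, since $t\mapsto t^{p/q}$ is nondecreasing on $[0,\infty)$ (as $p/q>0$), raising both sides to the power $p/q$ gives $(a^q+b^q)^{p/q}\le (a+b)^p$, which is the desired inequality after undoing the substitution. The inequality $a^q+b^q\le (a+b)^q$ itself follows from the binomial theorem: $(a+b)^q = \sum_{j=0}^q \binom{q}{j} a^j b^{q-j} = a^q + b^q + \sum_{j=1}^{q-1}\binom{q}{j}a^j b^{q-j}$, and every term in the remaining sum is nonnegative because $a,b\ge 0$.

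\textbf{Obstacle.} There is essentially no serious obstacle here; the only points requiring a word of care are that $q\ge 1$ is an integer (so the binomial expansion has nonnegative coefficients and the middle sum is a genuine — possibly empty — sum of nonnegative terms, with the inequality trivial when $q=1$), and that the monotonicity of $t\mapsto t^{p/q}$ is applied on the nonnegative reals where $a^q+b^q$ and $a+b$ live. Edge cases $x=0$ or $y=0$ are covered directly by the same computation. I would present the argument in the three lines above: substitute, invoke the binomial bound, raise to the power $p/q$.
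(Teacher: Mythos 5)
Your proof is correct and follows essentially the same route as the paper: the paper starts from the subadditivity $(x+y)^{1/q}\le x^{1/q}+y^{1/q}$ (which you prove explicitly via the binomial theorem after substituting $a=x^{1/q}$, $b=y^{1/q}$) and then raises to the $p$-th power and expands binomially. The only difference is that you justify the step the paper labels ``clearly,'' which is fine but not a new argument.
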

\begin{proof}
    Clearly $(x+y)^\frac{1}{q} \leq x^\frac{1}{q}+y^\frac{1}{q}$. Therefore,
    $$
        (x+y)^\frac{p}{q} \leq ( x^\frac{1}{q}+y^\frac{1}{q})^p= \sum_{k=0}^p\binom{p}{k}x^\frac{k}{q}y^\frac{p-k}{q}.\qedhere
    $$
\end{proof}
For $M\ge 1$ let $\cC = (c_0,c_2,\dots,c_{M-1})$ be a code whose codewords are chosen randomly and independently from $\cH_n.$ For $\alpha \in [0,\infty)$, define 
   \begin{equation}\label{eq:Q}
   Q_n(\alpha) = 
   \mathbbm{E}_{\cC}2^{(\alpha-1)D_{\alpha}(T_rf_\cC\|U_n)}.
   \end{equation} 
For $\alpha>0,$ $Q_n(\alpha)=\|2^nT_rf_\cC(x)\|^\alpha_\alpha.$ Clearly $Q_n(1) = 1, Q_n(\alpha) \leq 1 \text{ for } \alpha \in [0,1)$, and $Q_n(\alpha) \geq 1$
for $\alpha>1.$ 

In the next lemma, we obtain a recursive bound for $Q_n$. We will then use an induction argument to show the full result.

\begin{lemma}\label{lemma: recursive}
Let $\alpha = \frac{p}{q}+1$ and let $\cC\subset\cH_n$ be a random code of size $M=2^{nR}$. Then
\begin{align}
    Q_n(\alpha) & \leq \sum_{k=0}^p \binom{p}{k} 2^{\frac{nk}{q}(1-R-\frac1nH_{1+k/q}
    (r))}Q_n\Big(\frac{p-k}{q}\Big). \label{eq:Qn}
\end{align}
\end{lemma}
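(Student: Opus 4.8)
The plan is to rewrite $Q_n(\alpha)$ as a single moment of a sum of per‑codeword contributions, peel off one codeword, and then apply the subadditivity estimate of Lemma~\ref{lemma: frac binom.}.

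Write $\cC=(c_0,\dots,c_{M-1})$ with the $c_j$ independent and uniform on $\cH_n$, and set $g(x):=2^nT_rf_\cC(x)=\frac{2^n}{M}\sum_{j=0}^{M-1}r(x-c_j)$. For each fixed $x$ the substitution $c_j\mapsto x-c_j$ is a bijection of $\cH_n$, hence preserves the uniform law, so $g(x)$ has the same distribution as $g(0)=\frac{2^n}{M}\sum_{j=0}^{M-1}r(c_j)$. Since $Q_n(\gamma)=\mathbbm{E}_\cC\|2^nT_rf_\cC\|_\gamma^\gamma$ for $\gamma>0$ (with the analogous reading for $\gamma=0$), this gives $Q_n(\gamma)=\frac1{2^n}\sum_x\mathbbm{E}[g(x)^\gamma]=\mathbbm{E}[g(0)^\gamma]$ for every $\gamma\ge 0$.

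Now write $\alpha=1+\frac pq$ and factor $g(0)^\alpha=g(0)\cdot g(0)^{p/q}$. Because $g(0)=\frac1M\sum_i\big(2^nr(c_i)\big)$ and the joint law of $(c_0,\dots,c_{M-1})$ is permutation invariant, $\mathbbm{E}[g(0)^\alpha]=\mathbbm{E}\big[2^nr(c_0)\,g(0)^{p/q}\big]$. Split $g(0)=a+b$ with $a:=\frac{2^n}{M}r(c_0)$ and $b:=\frac{2^n}{M}\sum_{j=1}^{M-1}r(c_j)$; Lemma~\ref{lemma: frac binom.} yields $g(0)^{p/q}=(a+b)^{p/q}\le\sum_{k=0}^p\binom pk a^{k/q}b^{(p-k)/q}$. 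Substituting and using that $c_0$ is independent of $b$, the $k$‑th term factors as
\[
\binom pk\,2^n\Big(\tfrac{2^n}{M}\Big)^{k/q}\,\mathbbm{E}\big[r(c_0)^{1+k/q}\big]\,\mathbbm{E}\big[b^{(p-k)/q}\big].
\]
For the first expectation, since $c_0$ is uniform on $\cH_n$, the definition \eqref{eq:Renyi} of Rényi entropy gives $\mathbbm{E}[r(c_0)^{1+k/q}]=2^{-n}\sum_z r(z)^{1+k/q}=2^{-n}\,2^{-\frac kq H_{1+k/q}(r)}$; together with $\frac{2^n}{M}=2^{n(1-R)}$ this turns the prefactor into exactly $2^{\frac{nk}{q}\left(1-R-\frac1nH_{1+k/q}(r)\right)}$. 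For the second, $b\le g(0)$ pointwise and $\frac{p-k}{q}\ge 0$, so $\mathbbm{E}[b^{(p-k)/q}]\le\mathbbm{E}[g(0)^{(p-k)/q}]=Q_n\big(\tfrac{p-k}{q}\big)$. Summing over $k$ gives \eqref{eq:Qn}.

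I do not expect a genuine obstacle: this is a standard random‑coding computation. The few points that want a little care are (i) the distributional‑symmetry identity $Q_n(\gamma)=\mathbbm{E}[g(0)^\gamma]$, in particular its degenerate value at $\gamma=0$ (the expected normalized support size of $T_rf_\cC$); (ii) the exchangeability step that isolates the single factor $2^nr(c_0)$ \emph{before} splitting; and (iii) noting that dropping the $r(c_0)$ summand from $b$ is harmless, as it only decreases $b$. Everything else is the bookkeeping identity $\sum_z r(z)^{1+k/q}=2^{-\frac kq H_{1+k/q}(r)}$.
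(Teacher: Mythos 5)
Your proof is correct and is essentially the paper's own argument: both peel off one codeword factor from $g(0)^{\alpha}=g(0)\cdot g(0)^{p/q}$, apply Lemma~\ref{lemma: frac binom.} to the split into the distinguished codeword plus the rest, exploit independence to factor out $\mathbbm{E}[r(c_0)^{1+k/q}]=2^{-n}2^{-\frac{k}{q}H_{1+k/q}(r)}$, and re-absorb the dropped summand to recover $Q_n\big(\frac{p-k}{q}\big)$. Your use of translation symmetry and exchangeability to collapse the sums over $x$ and over codewords up front is only a cosmetic streamlining of the bookkeeping the paper carries explicitly.
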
 

\begin{proof} 
In the calculation below we write $\mathbbm{E}$ for $\mathbbm{E}_\cC$. Starting with \eqref{eq:Q}, we obtain
\begin{align*}
     Q_n(\alpha) 
     & = \mathbbm{E}\Big[\frac{1}{2^n}\sum_{x \in \cH_n} [2^n(r \ast f_\cC)(x)]^{\alpha}\Big]\\
     & = 2^{n(\alpha-1)}\mathbbm{E}\Big[\sum_{x \in \cH_n} \Big[\sum_{z \in \cC}r(x-z)\frac{1}{M}\Big]^{\alpha}\Big]\\
     & = \frac{2^{n(\alpha-1)}}{M^{\alpha}}\sum_{x \in \cH_n} \mathbbm{E}\Big[\sum_{i = 0}^{M-1}r(x-c_i)\Big]^{\alpha}\\
     & = \frac{2^{n(\alpha-1)}}{M^{\alpha}}\sum_{x \in \cH_n} \mathbbm{E}\Big[\sum_{i = 0}^{M-1}r(x-c_i)\Big[\sum_{j = 0}^{M-1}r(x-c_j)\Big]^{\alpha-1}\Big]\\
     & = \frac{2^{n(\alpha-1)}}{M^{\alpha}}\sum_{x \in \cH_n} \mathbbm{E}\Big[\sum_{i = 0}^{M-1}r(x-c_i)\Big[r(x-c_i) + \sum_{j = 0, j\neq i}^{M-1}r(x-c_j)\Big]^{\frac{p}{q}}\Big]\\
     & \leq \frac{2^{n(\alpha-1)}}{M^{\alpha}}\sum_{x \in \cH_n} \mathbbm{E}\Big[\sum_{i = 0}^{M-1}r(x-c_i)\sum_{k=0}^p \binom{p}{k} r(x-c_i)^\frac{k}{q} \Big[\sum_{j = 0, j\neq i}^{M-1}r(x-c_j)\Big]^\frac{p-k}{q}\Big]\\
     & = \frac{2^{n(\alpha-1)}}{M^{\alpha}}\sum_{k=0}^p \binom{p}{k}\sum_{x \in \cH_n} \mathbbm{E}\Big[ \sum_{i = 0}^{M-1}r(x-c_i)^{1+\frac{k}{q}} \Big[\sum_{j = 0, j\neq i}^{M-1}r(x-c_j)\Big]^\frac{p-k}{q}\Big]\\
     & = \frac{2^{n(\alpha-1)}}{M^{\alpha}}\sum_{k=0}^p \binom{p}{k}\sum_{x \in \cH_n} \mathbbm{E}\Big[ \sum_{i = 0}^{M-1}r(x-c_i)^{1+\frac{k}{q}}\Big] \;\mathbbm{E}\Big[\sum_{j = 0, j\neq i}^{M-1}r(x-c_j)\Big]^\frac{p-k}{q},
\end{align*}
where $c_i,i=1,\dots,M$ are random codewords in the code $\cC.$ Recalling that $\mathbbm{E} r(x-c_i)^{a} =  \|r\|_a^a$ for any $a>0,$ we continue as follows:
\begin{align*} 
     & \leq \frac{2^{n(\alpha-1)}}{M^{\alpha}}\sum_{k=0}^p \binom{p}{k}\sum_{x \in \cH_n} M\|r\|_{1+k/q}^{1+k/q}\; \mathbbm{E}\Big[\sum_{j = 0}^{M-1}r(x-c_j)\Big]^\frac{p-k}{q}\\
     & = \frac{2^{n(\alpha-1)}}{M^{\alpha-1}}\sum_{k=0}^p \binom{p}{k} \|r\|_{1+k/q}^{1+k/q} \; \mathbbm{E}\Big[\sum_{x \in \cH_n}\Big[\sum_{j = 0}^{M-1}r(x-c_j)\Big]^\frac{p-k}{q}\Big]\\
     & = \frac{2^{np/q}}{M^{p/q}}\sum_{k=0}^p \binom{p}{k} \|r\|_{1+k/q}^{1+k/q} Q_n\Big(\frac{p-k}{q}\Big)\frac{M^{(p-k)/q}}{2^{n((p-k)/q -1)}}\\
     & = \sum_{k=0}^p \binom{p}{k} \frac{2^{n(1+k/q)}}{M^{k/q}} \|r\|_{1+k/q}^{1+k/q}Q_n\Big(\frac{p-k}{q}\Big)\\
     & = \sum_{k=0}^p \binom{p}{k} 2^{\frac{nk}{q}\big(1-R-\frac{H_{(1+k/q)}(r)}{n}\big)}Q_n\Big(\frac{p-k}{q}\Big),
\end{align*}
where we used \eqref{eq:Renyi} and the fact that $r$ is a pmf.
\end{proof}

On account of \eqref{eq:smooth-p}, \eqref{eq:Q}, and Lemma~\ref{lemma: D_q(T_r|U) impos}, to prove Theorem~\ref{thm: General smoothing capacities}, we need to prove the following
\begin{theorem}\label{thm: main} Consider a sequence of ensembles of random codes of increasing length $n$ and rate $R_n\to R.$
   If $R > 1-\pi(\alpha)$, where $\pi(\alpha)$ is given by \eqref{eq:pi}, then
   \begin{align}\label{eq: main}
       \lim_{n \to \infty}Q_n(\alpha) = 1
   \end{align}
   for all $\alpha \in (1,\infty)$. 
   \end{theorem}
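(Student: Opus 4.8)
I would prove \eqref{eq: main} first for \emph{rational} $\alpha$ by induction, and then deduce it for all real $\alpha\in(1,\infty)$ by a norm‑monotonicity/density argument. In both steps one reduces to an upper bound: since $T_{r_n}f_{\cC_n}$ is a pmf and $L_\beta$‑norms (with respect to $U_n$) are non‑decreasing in $\beta$, $\|2^nT_{r_n}f_{\cC_n}\|_\alpha^\alpha\ge\|2^nT_{r_n}f_{\cC_n}\|_1^\alpha=1$ for every code and every $\alpha>1$; hence $Q_n(\alpha)\ge1$ always, and it suffices to show $\limsup_nQ_n(\alpha)\le1$. The machinery is Lemma~\ref{lemma: recursive} (applied with the rate $R$ there replaced by the actual rate $R_n$), the fact recorded before it that $Q_n(\beta)\le1$ for $\beta\in[0,1]$, and the monotonicity of $\pi$ in $\alpha$. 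The recurring computation is the following: in the $k$‑th summand of \eqref{eq:Qn} the exponent is $\tfrac{nk}{q}\bigl(1-R_n-\tfrac1nH_{1+k/q}(r_n)\bigr)$, and since $1+k/q\le 1+p/q=\alpha$ and R\'enyi entropy is non‑increasing in its order, $\tfrac1nH_{1+k/q}(r_n)\ge\tfrac1nH_\alpha(r_n)$, so $\liminf_n\tfrac1nH_{1+k/q}(r_n)\ge\pi(\alpha)$; as $R_n\to R$ and $\pi(\alpha)>1-R$ by hypothesis, for each $k\ge1$ this exponent is eventually $\le-cn$ for a fixed $c>0$, so the $k\ge1$ terms of \eqref{eq:Qn} vanish in the limit \emph{provided the factors $Q_n(\tfrac{p-k}{q})$ stay bounded}. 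Note also that the $k=0$ summand of \eqref{eq:Qn} is exactly $Q_n(\alpha-1)$.

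\textbf{Rational case.} I would induct on the integer $m$ with $\alpha\in(m,m+1]$. For $m=1$, i.e.\ $\alpha\in(1,2]$, every reduced argument satisfies $\tfrac{p-k}{q}\le\tfrac pq\le1$, so $Q_n(\tfrac{p-k}{q})\le1$, and \eqref{eq:Qn} together with the exponent estimate gives $Q_n(\alpha)\le1+o(1)$. Assume the claim for all rational arguments in $(1,m]$ and take a rational $\alpha\in(m,m+1]$ with $\pi(\alpha)>1-R$. The $k=0$ term is $Q_n(\alpha-1)$ with $\alpha-1\in(m-1,m]\subset(1,m]$; since $\pi$ is non‑increasing, $\pi(\alpha-1)\ge\pi(\alpha)>1-R$, so $Q_n(\alpha-1)\to1$ by the inductive hypothesis. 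For $k\ge1$ the argument $\tfrac{p-k}{q}<\alpha-1$ is either $\le1$ (so $Q_n(\tfrac{p-k}{q})\le1$) or a rational in $(1,m]$ with $\pi(\tfrac{p-k}{q})\ge\pi(\alpha)>1-R$ (so $Q_n(\tfrac{p-k}{q})\to1$ by the inductive hypothesis); in either case these factors are bounded, so the exponent estimate kills the $k\ge1$ terms and $\limsup_nQ_n(\alpha)\le1$, completing the induction.

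\textbf{Passage to real $\alpha$.} Given real $\alpha\in(1,\infty)$ with $\pi(\alpha)>1-R$, I would first note that $\pi$ is continuous on $(1,\infty)$: the functions $\alpha\mapsto(\alpha-1)\tfrac1nH_\alpha(r_n)=-\tfrac1n\log\sum_x r_n(x)^\alpha$ are non‑decreasing and concave in $\alpha$ (log‑sum‑exp), vanish at $\alpha=1$ and are bounded above by $\alpha-1$, hence $1$‑Lipschitz uniformly in $n$; therefore $\liminf_n(\alpha-1)\tfrac1nH_\alpha(r_n)$ is $1$‑Lipschitz, and dividing by $\alpha-1$ shows $\pi$ is continuous on $(1,\infty)$. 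Using continuity, pick a rational $\alpha'>\alpha$ so close that $\pi(\alpha')>1-R$ still holds; by the rational case $Q_n(\alpha')\to1$. For every code, $\|2^nT_{r_n}f_{\cC_n}\|_\alpha\le\|2^nT_{r_n}f_{\cC_n}\|_{\alpha'}$, so $\|2^nT_{r_n}f_{\cC_n}\|_\alpha^\alpha\le\bigl(\|2^nT_{r_n}f_{\cC_n}\|_{\alpha'}^{\alpha'}\bigr)^{\alpha/\alpha'}$; taking $\mathbb{E}_{\cC_n}$ and applying Jensen's inequality to the concave map $t\mapsto t^{\alpha/\alpha'}$ gives $Q_n(\alpha)\le Q_n(\alpha')^{\alpha/\alpha'}\to1$, which with $Q_n(\alpha)\ge1$ yields \eqref{eq: main}.

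\textbf{Main obstacle.} I expect the delicate point to be the real‑$\alpha$ step, and within it the verification that $\pi$ has no downward jump inside $(1,\infty)$ (so that a rational $\alpha'>\alpha$ with $\pi(\alpha')>1-R$ exists), which needs the uniform Lipschitz bound for $(\alpha-1)\tfrac1nH_\alpha(r_n)$, together with the exchange of $\mathbb{E}_{\cC_n}$ and the norm comparison via Jensen. By contrast the induction is essentially bookkeeping; the one point worth care is that monotonicity of $\pi$ automatically propagates the hypothesis $R>1-\pi(\alpha)$ to all the smaller arguments generated by \eqref{eq:Qn}.
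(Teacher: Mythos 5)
Your proposal is correct and follows essentially the same route as the paper: the recursive bound of Lemma~\ref{lemma: recursive} plus induction on the integer part of $\alpha$ for rational orders, then continuity of $\pi$ on $(1,\infty)$ together with monotonicity of $Q_n(\cdot)$ to pass to real $\alpha$. Your continuity argument for $\pi$ (uniform $1$-Lipschitzness of $\alpha\mapsto-\tfrac1n\log\sum_x r_n(x)^\alpha$ via concavity) and the Jensen step are valid minor variants of the paper's mean-value-theorem lemma and its direct pointwise comparison $Q_n(\alpha)\le Q_n(\alpha')$.
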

We start with the case of rational $\alpha$.
\begin{proposition}   
    Let $\alpha\ge 0$ be rational. If $R > 1-\pi(\alpha)$ then $\limsup_nQ_n(\alpha) \leq 1$.
\end{proposition}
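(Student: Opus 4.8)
The plan is to prove this by induction on an integer $m\ge 1$, establishing the statement
\[
P(m):\quad \text{for every rational }\alpha\in[0,m]\text{ with }R>1-\pi(\alpha),\ \limsup_{n\to\infty}Q_n(\alpha)\le 1 .
\]
The base case $P(1)$ is immediate, since $Q_n(\alpha)\le 1$ for all $\alpha\in[0,1]$ and all $n$ (as recorded right after \eqref{eq:Q}). The two ingredients driving the induction are the recursive estimate of Lemma~\ref{lemma: recursive} and the fact that $\pi$ is non-increasing (the remark after Theorem~\ref{thm: General smoothing capacities}): the latter means that the hypothesis $R>1-\pi(\alpha)$ automatically gives $R>1-\pi(\beta)$ for \emph{every} $\beta\in[0,\alpha]$, so a single hypothesis will control all smaller parameters at once.

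For the step $P(m)\Rightarrow P(m+1)$ it suffices to treat a rational $\alpha\in(m,m+1]$; write $\alpha=1+p/q$ with $p,q$ positive integers (possible since $\alpha-1>0$ is rational). Applying Lemma~\ref{lemma: recursive} at length $n$ with rate $R_n$,
\[
Q_n(\alpha)\le\sum_{k=0}^{p}\binom{p}{k}\,2^{\frac{nk}{q}\left(1-R_n-\frac1n H_{1+k/q}(r_n)\right)}\,Q_n\!\left(\tfrac{p-k}{q}\right).
\]
Every argument satisfies $0\le\tfrac{p-k}{q}\le\tfrac{p}{q}=\alpha-1\le m$, and $R>1-\pi\!\big(\tfrac{p-k}{q}\big)$ by monotonicity of $\pi$, so $P(m)$ yields $\limsup_n Q_n\!\big(\tfrac{p-k}{q}\big)\le 1$; in particular these factors are bounded (say by $2$) for all large $n$, uniformly in $k$. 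For the terms with $k\ge 1$ the exponential kills the term: since $1<1+\tfrac kq\le\alpha$ we get $\pi(1+\tfrac kq)\ge\pi(\alpha)>1-R$, so fixing any $c$ with $0<c<\pi(\alpha)-(1-R)$ and using $R_n\to R$ together with the definition of $\liminf$ in \eqref{eq:pi}, one has $1-R_n-\tfrac1n H_{1+k/q}(r_n)\le -c/2$ for all large $n$ and all $1\le k\le p$; hence $2^{\frac{nk}{q}(1-R_n-\frac1n H_{1+k/q}(r_n))}\to 0$ and the whole $k$-th term tends to $0$. Therefore $\limsup_n Q_n(\alpha)\le\limsup_n Q_n\!\big(\tfrac pq\big)\le 1$, the last inequality being $P(m)$ applied to $\tfrac pq=\alpha-1\in[0,m]$ (again $R>1-\pi(\tfrac pq)$ by monotonicity). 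This establishes $P(m+1)$, and the proposition follows by letting $m$ exhaust all rationals.

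The computation is essentially bookkeeping, and the one point that needs care is the structure of the induction: one must arrange that \emph{every} $Q_n$-value appearing on the right of the recursion has parameter at most $m$ — which is why inducting on $\lceil\alpha\rceil$ is more convenient than inducting on the numerator $p$ — and must choose the slack $c$ uniformly in $k$ so that the single inequality $R>1-\pi(\alpha)$, via monotonicity of $\pi$, suppresses every exponential factor simultaneously. Beyond this I do not anticipate any genuine obstacle.
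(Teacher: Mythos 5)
Your proposal is correct and follows essentially the same route as the paper: induction on the integer ceiling of $\alpha$, the recursion of Lemma~\ref{lemma: recursive}, vanishing of the exponential factors for $k\ge 1$ via $R>1-\pi(1+k/q)$, and control of the remaining $k=0$ term by the induction hypothesis applied to $p/q=\alpha-1$. Your treatment is in fact slightly more careful than the paper's on two points it leaves implicit — the uniform choice of the slack $c$ over $k$, and the use of the monotonicity of $\pi$ to verify the hypothesis $R>1-\pi\bigl(\tfrac{p-k}{q}\bigr)$ for the smaller parameters — but these are presentational, not substantive, differences.
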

\begin{proof}
This statement is true for all $0\le\alpha<1,$ so also for all rational $\alpha$ in [0,1)
Assume that it holds for all rational $\alpha$ in $[0,m)$ where $m \in \mathbbm{Z}^{+}$.
    Let  $\alpha \in [m,m+1)$ and choose $p,q \in \mathbbm{Z}_{0}^{+}$ such that $\alpha = 1 + \frac{p}{q}$.  By Lemma \ref{lemma: recursive},
    \begin{align*}
        \limsup_nQ_n(\alpha) &\leq \limsup_n\sum_{k=0}^p \binom{p}{k} 2^{\frac{nk}{q}\big(1-R_n-\frac{H_{1+k/q}(r_n)}{n}\big)}Q_n\Big(\frac{p-k}{q}\Big)\\
        & \leq \sum_{k=0}^p \binom{p}{k} \limsup_n2^{\frac{nk}{q}\big(1-R_n-\frac{H_{1+k/q}(r_n)}{n}\big)}\limsup_n Q_n\Big(\frac{p-k}{q}\Big).
    \end{align*}
If $R > 1-\pi(\alpha)$, then evidently, $R > 1-\pi(1+k/q)$ for all $k\le p$. Therefore,  
  $$
  \limsup_{n \to \infty} 2^{\frac{nk}{q}\big(1-R_n-\frac{H_{1+k/q}(r_n)}{n}\big)} =0 
  $$
for all $k>0$. Since $\frac{p}{q}<m,$ by the induction hypothesis we have $ \limsup_n Q_n\big(\frac{p-k}
{q}\big) \leq 1$ for $k=0, 1, \dots, p$. Therefore, all the terms except the one with $k=0$ vanish, yielding $\limsup_n Q_n(\alpha) \leq 1$. 
\end{proof}

Since $Q_n(\alpha) \geq 1$ for $\alpha > 1$, this proves Theorem \ref{thm: main}
for all rational $\alpha \in (1, \infty).$

Finally, let us extend this result to all real $\alpha > 1$. As a first step, let us show that $\pi(\alpha)$ is continuous.

\begin{lemma}
    $\pi(\alpha) $ is continuous for $1<\alpha<\infty.$ 
\end{lemma}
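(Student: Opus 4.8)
The plan is to show that $\pi(\alpha) = \liminf_n \frac1n H_\alpha(r_n)$ is continuous in $\alpha$ on $(1,\infty)$ by establishing that, for each fixed $n$, the map $\alpha \mapsto \frac1n H_\alpha(r_n)$ is Lipschitz (or at least equicontinuous in $n$) on compact subintervals of $(1,\infty)$, and then invoking the fact that a liminf of an equicontinuous family of functions is continuous. Recall that $H_\alpha(P) = \frac{1}{1-\alpha}\log\big(\sum_i P_i^\alpha\big)$. The first step is therefore to compute the derivative $\frac{\partial}{\partial\alpha} H_\alpha(P)$ for a fixed pmf $P$: a standard calculation gives that this derivative equals $-\frac{1}{(1-\alpha)^2}\big(D_\alpha(P\|U) \cdot(\text{something}) \big)$; more concretely, writing $Z_\alpha = \sum_i P_i^\alpha$, one has $\frac{\partial}{\partial\alpha}H_\alpha(P) = \frac{1}{(1-\alpha)^2}\log Z_\alpha + \frac{1}{1-\alpha}\cdot\frac{\sum_i P_i^\alpha \log P_i}{Z_\alpha}$, and the sign-definite fact that $H_\alpha$ is nonincreasing in $\alpha$ means this derivative is $\le 0$.

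The key point I would extract is a quantitative bound: on any interval $[\alpha_0,\alpha_1]\subset(1,\infty)$, the derivative $\big|\frac{\partial}{\partial\alpha}\frac1n H_\alpha(r_n)\big|$ is bounded by a constant $L = L(\alpha_0,\alpha_1)$ independent of $n$. Here I would use that $r_n$ is a pmf on $\cH_n$, so $|\supp(r_n)| \le 2^n$ and $H_\alpha(r_n) \in [0, n]$; combined with $\alpha$ bounded away from $1$, the term $\frac{1}{(1-\alpha)^2}\cdot\frac1n\log Z_\alpha$ is $O(1)$ uniformly in $n$. The remaining term involves $\frac{1}{1-\alpha}\cdot\frac1n\cdot\frac{\sum_i P_i^\alpha\log P_i}{Z_\alpha}$; since $\alpha>\alpha_0>1$ this can be controlled because $\frac1n|\log P_i|$ contributions are dominated — the worst case is $P_i$ extremely small, but then $P_i^\alpha\log P_i \to 0$, and one can show $\big|\frac{\sum_i P_i^\alpha\log P_i}{Z_\alpha}\big| \le \frac{1}{\alpha_0-1}\log\frac{1}{Z_{\alpha_0}}\cdot(\dots)$, or more simply bound it by $\frac{1}{(\alpha-1)e\ln 2}$ using that $x^{\alpha-1}|\log x|$ is bounded. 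This gives the uniform Lipschitz constant $L$.

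With the uniform Lipschitz bound in hand, continuity of $\pi$ follows from a soft argument: for $\alpha,\alpha'$ in a compact subinterval, $\big|\frac1n H_\alpha(r_n) - \frac1n H_{\alpha'}(r_n)\big| \le L|\alpha-\alpha'|$ for all $n$, hence $|\pi(\alpha)-\pi(\alpha')| \le L|\alpha-\alpha'|$ by taking liminf of both sides (using that $|\liminf a_n - \liminf b_n| \le \sup_n|a_n-b_n|$). So $\pi$ is Lipschitz, in particular continuous, on every compact subinterval of $(1,\infty)$, hence continuous on $(1,\infty)$. The main obstacle I anticipate is getting the bound on $\frac1n\big|\frac{\sum_i P_i^\alpha \log P_i}{Z_\alpha}\big|$ uniform in both $n$ and the (arbitrary) pmf $r_n$ — this requires being a little careful that the $\frac1n$ normalization genuinely tames the logarithmic factor, for which the cleanest route is the elementary inequality $t^{\alpha-1}\log(1/t) \le \frac{1}{e(\alpha-1)\ln 2}$ for $t\in(0,1]$, applied with $t = P_i$, so that $\big|\sum_i P_i^\alpha \log P_i\big| \le \frac{Z_1}{e(\alpha-1)\ln 2}$ where $Z_1 = \sum_i P_i = 1$; wait, this needs $\sum P_i^\alpha \cdot$ — in fact $\big|\sum_i P_i^\alpha\log P_i\big| = \sum_i P_i \cdot P_i^{\alpha-1}|\log P_i| \le \frac{1}{e(\alpha-1)\ln 2}$, which is exactly the uniform (even $n$-free, before dividing by $n$) bound needed. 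That makes the whole derivative bound transparent and the continuity argument routine.
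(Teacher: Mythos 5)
Your overall strategy --- a Lipschitz bound on $\alpha\mapsto\frac1nH_\alpha(r_n)$ that is uniform in $n$ on compact subintervals of $(1,\infty)$, followed by passing the bound through the $\liminf$ --- is exactly the paper's strategy (the paper phrases the first step via the mean value theorem along a subsequence realizing the $\liminf$ at $\alpha'$, plus monotonicity of $H_\alpha$ in $\alpha$ to get one-sided control). The $\liminf$ step and your bound on the first term of the derivative, $\frac{1}{(1-\alpha)^2}\log Z_\alpha=-\frac{H_\alpha(P)}{\alpha-1}$, which is at most $n/(\alpha_0-1)$ in absolute value, are both fine.

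The gap is in your bound on the second term. The quantity appearing in the derivative is the ratio $\frac{\sum_iP_i^\alpha\log P_i}{Z_\alpha}=\sum_iZ_i\log P_i$ with $Z_i=P_i^\alpha/Z_\alpha$ the tilted distribution, and your elementary inequality $t^{\alpha-1}\log(1/t)\le\frac{1}{e(\alpha-1)\ln 2}$ only controls the \emph{numerator} $\bigl|\sum_iP_i^\alpha\log P_i\bigr|$; you have silently dropped the factor $1/Z_\alpha=2^{(\alpha-1)H_\alpha(P)}$, which is exponentially large in $n$. The claimed $n$-free bound on the ratio is simply false: for $P=U_n$ one has $\sum_iZ_i\log(1/P_i)=n\to\infty$. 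What is true, and what suffices after dividing by $n$, is an $O(n)$ bound on this ratio. The cleanest route (the paper's) is to recognize the whole derivative as $\frac{dH_\alpha(P)}{d\alpha}=-\frac{1}{(1-\alpha)^2}D(Z\|P)$ and bound $D(Z\|P)\le\log|\supp(P)|\le n$, which follows for $\alpha>1$ from $\frac{Z_i}{P_i}=\frac{P_i^{\alpha-1}}{Z_\alpha}\le\frac{1}{\max_jP_j}$; equivalently, one can bound the cross-entropy $\sum_iZ_i\log(1/P_i)=H(Z)+D(Z\|P)\le 2n$. With either fix, your argument closes and yields the same uniform Lipschitz constant of order $1/(\alpha_0-1)^2$ on $[\alpha_0,\alpha_1]$ that the paper obtains.
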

\begin{proof}
From the monotonicity of R{\'e}nyi entropies, for $\alpha' > \alpha >1$, 
\begin{align*}
    0 &\leq \pi(\alpha)-\pi(\alpha')\\
    & = \liminf_{n \to \infty}\frac{1}{n}H_{\alpha}(r_n)-\liminf_{n \to \infty}\frac{1}{n}H_{\alpha'}(r_n).
\end{align*}
Now let us choose a subsequence $(r_{n_k})_k$ such that 
\begin{align*}
    \lim_{k\to \infty}\frac{1}{n_k}H_{\alpha'}(r_{n_k}) = \liminf_{n \to \infty}\frac{1}{n}H_{\alpha'}(r_n).
\end{align*}
Therefore, 
\begin{align*}
    \pi(\alpha)-\pi(\alpha') &= \liminf_{n \to \infty}\frac{1}{n}H_{\alpha}(r_n)-\lim_{k \to \infty}\frac{1}{n_k}H_{\alpha'}(r_{n_k})\\
    &\leq \liminf_{k \to \infty}\frac{1}{n_k}(H_{\alpha}(r_{n_k})-H_{\alpha'}(r_{n_k})).
\end{align*}
Note that $H_\alpha$ is a continuous function of the order $\alpha$ for $\alpha>1$. 
We use the mean value theorem to claim that there is a value $\gamma_k\in(\alpha,\alpha'$ such that
$H_{\alpha'}(r_{n_k})-H_{\alpha}(r_{n_k})=(\alpha'-\alpha)\frac{d}{d\alpha}H_{\gamma_k}(r_{n_k}).$
Next, for any probability vector $P$, 
   $$-\frac{dH_{\alpha}(P)}{d\alpha} = \frac{1}{(1-\alpha)^2}D(Z\|P) \leq  \frac{\log|\text{supp}(P)|}{(1-\alpha)^2} , 
   $$
where $Z_i = \frac{P_i^{\alpha}}{\sum_{j}P_j^{\alpha}}.$ Taking these remarks together, we obtain
\begin{align*}
    \pi(\alpha)-\pi(\alpha')&\le \liminf_{k \to \infty}\frac{1}{n_k}(\alpha - \alpha')H_{\gamma_k}'(r_{n_k}) \\
    &\leq \liminf_{k \to \infty}\frac{1}{n_k}(\alpha' - \alpha)\frac{n_k}{(\gamma_k-1)^2} \\
    &= \frac{\alpha' - \alpha}{(\alpha-1)^2},
\end{align*}
Therefore, $\pi(\alpha)$ is continuous on $(1,\infty)$.
\end{proof}
   
Now let $\alpha\in (1,\infty)$ and assume $R > 1-\pi(\alpha)$. Choose $\alpha' > \alpha$ such that $\alpha'$ is rational and $R > 1-\pi(\alpha')$. This is possible from the continuity of $\pi$. Therefore, 
\begin{align*}
    1 \leq \limsup_{n} Q_n(\alpha) \leq\limsup_{n}  Q_n(\alpha') =1,
\end{align*}
which proves that \eqref{eq: main} and Theorem~\ref{thm: General smoothing capacities} hold for all $\alpha \in [1,\infty)$.

It remains to address the case $\alpha =\infty$. We obtain the following upper bound, whose proof is inspired by \cite{yu2018renyi}.

\begin{lemma}\label{lemma: D_inf(T_r|U) ach.}
Let $\epsilon >0$. We have
    \begin{align}\label{eq:inf}
        \mathbbm{E}_{\cC} \|2^nT_rf_\cC\|_\infty \leq 1+\epsilon+ 2^{2n-H_{\infty}(r)}e^{-2\epsilon^2 2^{-[n(1-R)-H_{\infty}(r)]}}.
    \end{align}
\end{lemma}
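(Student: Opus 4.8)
The plan is to follow the union-bound/Chernoff route (this is essentially the argument of \cite{yu2018renyi}): first control the upper tail of the random density $2^nT_rf_\cC$ at a single point, then union-bound over the $2^n$ points of $\cH_n$, and finally split the expectation according to the resulting good/bad event. To set it up, fix $x\in\cH_n$ and write
\[
  2^nT_rf_\cC(x)=\frac{2^n}{M}\sum_{i=0}^{M-1}r(x-c_i)=:\sum_{i=0}^{M-1}Z_i^{(x)},
\]
where $M=2^{nR}$ and the $c_i$ are i.i.d.\ uniform on $\cH_n$. Then the $Z_i^{(x)}$ are i.i.d., nonnegative, bounded above by $b:=\tfrac{2^n}{M}\|r\|_\infty=\tfrac1M 2^{\,n-H_\infty(r)}$, with common mean $\mathbb{E}Z_i^{(x)}=\tfrac{2^n}{M}\cdot 2^{-n}=\tfrac1M$ since $r$ is a pmf and $x-c_i$ is uniform; hence $\mathbb{E}[2^nT_rf_\cC(x)]=1$, and deterministically $2^nT_rf_\cC(x)\le\tfrac{2^n}{M}\cdot M\|r\|_\infty=2^{\,n-H_\infty(r)}$.

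For the pointwise step I would apply a Chernoff estimate to $\sum_iZ_i^{(x)}$ around its mean $1$: starting from the elementary bound $\mathbb{E}e^{\lambda Z_i}\le\exp\!\big(\tfrac{\mathbb{E}Z_i}{b}(e^{\lambda b}-1)\big)$ for a $[0,b]$-valued variable and optimizing over $\lambda>0$ (a Bernstein/Bennett-type inequality), the exponent scales like $M\,\mathbb{E}Z_i^{(x)}/b=1/b=2^{-[n(1-R)-H_\infty(r)]}$, giving $\Pr[\,2^nT_rf_\cC(x)\ge 1+\epsilon\,]\le\exp\!\big(-2\epsilon^2\,2^{-[n(1-R)-H_\infty(r)]}\big)$. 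A union bound over the $2^n$ choices of $x$ then yields $\Pr[\|2^nT_rf_\cC\|_\infty\ge 1+\epsilon]\le 2^n\exp\!\big(-2\epsilon^2\,2^{-[n(1-R)-H_\infty(r)]}\big)$, and bounding $\|2^nT_rf_\cC\|_\infty$ by $1+\epsilon$ on the complementary event and by the trivial value $2^{\,n-H_\infty(r)}$ on the bad event gives
\[
  \mathbb{E}\|2^nT_rf_\cC\|_\infty\le(1+\epsilon)+2^{\,n-H_\infty(r)}\cdot 2^n\exp\!\big(-2\epsilon^2\,2^{-[n(1-R)-H_\infty(r)]}\big),
\]
which is \eqref{eq:inf} after combining the two powers of two into $2^{2n-H_\infty(r)}$.

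The only delicate point is the concentration inequality used in the pointwise step. A plain Hoeffding bound (range-squared in the denominator of the exponent) would instead produce the weaker base $2^{-[(n-H_\infty(r))+(n(1-R)-H_\infty(r))]}$, which is not strong enough for the second term of \eqref{eq:inf} to vanish in the regime $R>1-\pi(\infty)$ that matters for the achievability half of Theorem~\ref{thm: General smoothing capacities}. The key observation that fixes this is that each $Z_i^{(x)}$ is nonnegative, so its variance is at most (range)$\times$(mean)$=b/M$, and hence a Bernstein/Bennett-type estimate applies and produces the correct exponent base $2^{-[n(1-R)-H_\infty(r)]}$; once that is in place, verifying that the $\epsilon$-dependence and the absolute constant come out as stated, and that the union bound and the crude worst-case bound $2^{\,n-H_\infty(r)}$ on the bad event are valid, is routine.
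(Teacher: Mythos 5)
Your proof follows the same route as the paper's: a pointwise Chernoff-type tail bound on $2^nT_rf_\cC(x)$ around its mean $1$, a union bound over the $2^n$ points of $\cH_n$, and a split of the expectation using the deterministic bound $\|2^nT_rf_\cC\|_\infty\le\|2^nr\|_\infty=2^{\,n-H_\infty(r)}$ on the bad event. Your observation that plain Hoeffding is too weak here and that one needs a Bernstein/Bennett-type estimate (exploiting variance $\le$ range $\times$ mean for the nonnegative summands) to obtain the exponent base $2^{-[n(1-R)-H_\infty(r)]}$ is a correct and worthwhile sharpening of the paper's terse ``by the Chernoff bound''; the only caveat is that the absolute constant $2$ in the exponent is not actually delivered by Bennett's inequality (which yields roughly $\epsilon^2/2$ for small $\epsilon$ rather than $2\epsilon^2$), but this is a blemish in the lemma's stated constant rather than in your argument, and it is immaterial for the application to Theorem~\ref{thm: General smoothing capacities}.
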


\begin{proof} 
Let $\epsilon >0$, then
\begin{align}
   \mathbbm{E}_{\cC} \|2^nT_rf_\cC\|_{\infty}
    &= \mathbbm{E}_{\cC} \left[\|2^nT_rf_\cC\|_{\infty}\mathbbm{1}_{\{\|2^nT_rf_\cC\|_{\infty} \geq 1+ \epsilon\}}\right]\notag \\
    & \quad \quad + \mathbbm{E}_{\cC} \left[\|2^nT_rf_\cC\|_{\infty}\mathbbm{1}_{\{\|2^nT_rf_\cC\|_{\infty} < 1+ \epsilon\}}\right]\notag \\
    & \leq \mathbbm{E}_{\cC} \left[\|2^nT_rf_\cC\|_{\infty}\mathbbm{1}_{\{\|2^nT_rf_\cC\|_{\infty} \geq 1+ \epsilon\}}\right]+ 1+\epsilon\notag \\
    & \leq \mathbbm{E}_{\cC} \left[\|2^nr\|_{\infty}\mathbbm{1}_{\{\|2^nT_rf_\cC\|_{\infty} \geq 1+ \epsilon\}}\right]+ 1+\epsilon\notag \\
    & = \|2^nr\|_{\infty}\Pr_{\cC} \Big\{\max_{y \in \cH_n}2^nT_rf_\cC(y) \geq 1+ \epsilon\Big\}+ 1+\epsilon\notag \\
    & \leq\|2^nr\|_{\infty}2^n\max_{y \in \cH_n}\Pr_{\cC} \Big\{2^nT_rf_\cC(y) \geq 1+ \epsilon\Big\}+ 1+\epsilon. \label{eq:EC}
\end{align}

For any $y \in \cH_n$,
\begin{align*}
    \Pr_{\cC} \{2^nT_rf_\cC(y) \geq 1+ \epsilon\} &= \Pr_{\cC} \Big\{\frac{2^n}{M}\sum_{z \in \cC}r(y-z) \geq 1+ \epsilon\Big\}  \\
    &= \Pr_{c_i \sim U_n, \text{ iid }} \Big\{\frac{2^n}{M}\sum_{i =1}^Mr(y-c_i) \geq 1+ \epsilon\Big\} \\
    &= \Pr \Big\{\frac{2^n}{M}\sum_{i =1}^Mr(c_i) \geq 1+ \epsilon\Big\}\\
    &= \Pr \Big\{\sum_{i =1}^M(2^nr(c_i)-1) \geq M{\epsilon}\Big\} \\
    & \leq \exp\Big\{-\frac{2M^2\epsilon^2}{ M\|2^nr\|_{\infty}}\Big\}
\end{align*}
by the Chernoff bound. Now substitute this estimate into \eqref{eq:EC} and
simplify using \eqref{eq: Renyi div}, \eqref{eq: Renyi smoothness-inf}, and the equality $D_\infty(r\|U_n)=n-H_\infty(r).$ This yields the claimed estimate \eqref{eq:inf}.
\end{proof}
Now let us consider a sequence of (ensembles of) random codes of increasing length $n$ and rate $R_n \to R$. Recalling the definition of $\pi(\cdot)$ in \eqref{eq:pi}, for $n\to\infty$ we obtain 
\begin{align}\label{eq:inf2}
    \limsup_n\mathbbm{E}_{\cC_n} \|2^nT_{r_n}f_{\cC_n}\|_\infty \leq 1+\epsilon
\end{align}
once $R > 1-\pi(\infty).$ Since $\epsilon$ is arbitrarily small, the left-hand side of \eqref{eq:inf2} approaches one, and together with \eqref{eq:smooth-p} this completes the proof of Theorem~\ref{thm: General smoothing capacities}.

\section{Samorodnitsky's inequalities and their implications}\label{Appendix: 
 Sam}
Samorodnitsky \cite{samorodnitsky2016entropy,samorodnitsky2020improved} recently proved certain 
powerful inequalities for $\alpha$-norms of noisy functions, which permit us to estimate proximity to uniformity upon action of Bernoulli noise kernels. 
We state some of them in this appendix after introducing a few more elements of notation. These results are used in Theorem \ref{thm: Wiretap channels achievable rates for BEC capacity-achieving codes} and in Appendix \ref{appendix: proof: lemma: smoothing and erasure}, where we prove Lemma \ref{lemma: smoothing and erasure}.

For a subset $\Gamma\subset [n],$ write $x|_\Gamma$ to denote the coordinate projection of a vector $x\in\cH_n$ on $\Gamma.$ If the subset $\Gamma$ is formed by random choice with $\Pr(i\in\Gamma)=\lambda$ independently for all $i\in[n],$ we write $\Gamma\sim\lambda$. For a function $f$ on 
$\cH_n$ let 
 \begin{equation} \label{eq:E}
   \mathbbm{E}(f|\Gamma)(x) = \frac{1}{2^{n-|\Gamma|}}\sum_{y:y|_\Gamma=x|_\Gamma}f(y).
\end{equation}
Observe that $\mathbbm{E}(f|\Gamma) = f  \ast f_{\cH_{[n]\setminus \Gamma}}, $ where $\cH_S = \{x \in \cH_n: x|_{[n]\setminus S} = 0\}$. Therefore, $\mathbbm{E}(f|\Gamma)(x)$ is the noisy function of $f$ with respect to the pmf given by the indicator function of the subcube $\cH_{[n]\setminus \Gamma}$.  

The {\em entropy} of a function $f:\cH_n\to\reals$ is defined as 
   \begin{equation}\label{eq:Ent-d}
   \Ent[f] = \|f\log f\|_1 - \|f\|_1 \log(\|f\|_1) = \big\|f\log\frac{f}{\|f\|_1}\big\|_1.
   \end{equation}
This quantity can be thought of as the KL divergence between the distribution induced by $f$ on $\cH_n$ and the uniform distribution:
  \begin{equation}\label{eq:E-D}
   \Ent[f] = \|f\|_1D\Big(\frac{f}{\sum f}\|U_n\Big).
  \end{equation}
If $f$ itself is a pmf, then $D(f\|U_n) = 2^n\Ent(f) = \Ent(2^nf).$ 

\begin{theorem}[\cite{samorodnitsky2016entropy}, Corollary~9] \label{thm: Samorodnitsky's inequality for entropy}
    Let $f$ be a nonnegative function on $\cH_n$. then 
    \begin{align*}
        \Ent[T_{\delta}f] \leq \mathbbm{E}_{\Gamma \sim \lambda} \Ent[\mathbbm{E}(f|\Gamma)]. 
    \end{align*}
    where $\lambda  = (1-2\delta)^2.$   
\end{theorem}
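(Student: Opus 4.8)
This is quoted as \cite[Corollary~9]{samorodnitsky2016entropy}, so one option is simply to cite it; but here is the route I would take to prove it. A quick \emph{weaker} version is essentially free: the two-stage description of $\mathrm{BSC}(\delta)$-noise (``freeze coordinate $i$ with probability $1-2\delta$, otherwise re-randomize it'') gives the identity $T_\delta f=\mathbbm{E}_{\Gamma\sim 1-2\delta}\,\mathbbm{E}(f|\Gamma)$, and since the functional $\Ent$ is convex, Jensen yields $\Ent[T_\delta f]\le\mathbbm{E}_{\Gamma\sim 1-2\delta}\Ent[\mathbbm{E}(f|\Gamma)]$. The substance of the theorem is the \emph{smaller} parameter $(1-2\delta)^2$, and that is where the work lies.

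The plan is to pass to an information-theoretic reformulation. Using \eqref{eq:E-D} and the fact that $T_\delta$ and $\mathbbm{E}(\cdot|\Gamma)$ preserve $\|\cdot\|_1$, normalize $f$ and view it as the density of a random vector $X\in\cH_n$; then $T_\delta f$ is the density of $Y=X\oplus N$ with $N$ i.i.d.\ $\mathrm{Ber}(\delta)$, while $\mathbbm{E}(f|\Gamma)$ is the density of $X$ with its coordinates outside $\Gamma$ re-randomized uniformly. Hence $\Ent[T_\delta f]=\|f\|_1\,D(P_Y\|U_n)$ and $\Ent[\mathbbm{E}(f|\Gamma)]=\|f\|_1\,D(P_{X|_\Gamma}\|U_{|\Gamma|})$, so the claim becomes
\[
 D(P_Y\|U_n)\ \le\ \mathbbm{E}_{\Gamma\sim\lambda}\big[D(P_{X|_\Gamma}\|U_{|\Gamma|})\big],\qquad \lambda=(1-2\delta)^2 .
\]
Fix the coordinate order $1,\dots,n$ and expand the left side by the chain rule: $D(P_Y\|U_n)=\sum_i \mathbbm{E}_{Y_{<i}}[D(P_{Y_i|Y_{<i}}\|U_1)]$. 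Since $Y_i=X_i\oplus N_i$ with $N_i$ independent of $(X_i,Y_{<i})$, the conditional law of $Y_i$ given $Y_{<i}$ is $\mathrm{Ber}(q_i\ast\delta)$ with $q_i=\Pr(X_i=1|Y_{<i})$ and $q\ast\delta:=q(1-\delta)+(1-q)\delta$, so the $i$-th term equals $\mathbbm{E}_{Y_{<i}}[1-h(q_i\ast\delta)]$. The one-dimensional input is the sharp estimate $1-h(q\ast\delta)\le(1-2\delta)^2(1-h(q))$ — equivalently, the KL contraction coefficient of $\mathrm{BSC}(\delta)$ equals $(1-2\delta)^2$ — which is tight at $q\in\{0,1\}$ (the extreme realized by indicators of single points); averaging gives $i$-th term $\le\lambda\,(1-H(X_i|Y_{<i}))$.

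The main obstacle is the remaining estimate $H(X_i|Y_{<i})\ge\mathbbm{E}_{\Gamma'\subseteq[i-1],\,\Gamma'\sim\lambda}[H(X_i|X_{\Gamma'})]$. I would prove it by ``peeling'' the noisy coordinates $Y_1,\dots,Y_{i-1}$ one at a time: having replaced $Y_1,\dots,Y_{j-1}$ by gated reveals $B_1X_1,\dots,B_{j-1}X_{j-1}$ (with $B_k\sim\mathrm{Ber}(\lambda)$ independent, $B_kX_k$ meaning ``$X_k$ if $B_k=1$, nothing otherwise''), put $Z=(B_1X_1,\dots,B_{j-1}X_{j-1},Y_{j+1},\dots,Y_{i-1})$ and invoke the conditional strong data-processing inequality $I(X_i;Y_j|Z)\le(1-2\delta)^2\,I(X_i;X_j|Z)$, valid because $Y_j$ is conditionally independent of $(X_i,Z)$ given $X_j$; this rearranges to $H(X_i|Z,Y_j)\ge\mathbbm{E}_{B_j}[H(X_i|Z,B_jX_j)]$, and iterating over $j$ yields the bound. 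Finally, re-summing, $\sum_i\lambda\,\mathbbm{E}_{\Gamma'}[1-H(X_i|X_{\Gamma'})]$ is exactly $\mathbbm{E}_{\Gamma\sim\lambda}[D(P_{X|_\Gamma}\|U_{|\Gamma|})]$ — the chain rule run inside a random restriction $\Gamma$, using that conditioning on $\{i\in\Gamma\}$ leaves $\Gamma\cap[i-1]$ distributed as $\Gamma'$ — and multiplying by $\|f\|_1$ would complete the argument. The delicate points are the conditional SDPI with constant exactly $(1-2\delta)^2$ and the bookkeeping that identifies the peeled average with a single random set $\Gamma$.
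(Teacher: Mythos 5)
The paper offers no proof of this statement---it is imported verbatim as Corollary~9 of \cite{samorodnitsky2016entropy}---so there is no internal argument to compare against; what can be judged is whether your sketch would actually prove the cited result, and in its essentials it would. The reduction to the equivalent form $n-H(Y)\le\mathbbm{E}_{\Gamma\sim\lambda}\bigl[|\Gamma|-H(X_\Gamma)\bigr]$ for $Y$ the ${\text{\rm BSC}}(\delta)$ output on input $X$ is correct (both $T_\delta$ and $\mathbbm{E}(\cdot|\Gamma)$ preserve $\|\cdot\|_1$, and the re-randomized coordinates contribute nothing to the divergence), the chain-rule expansion with $H(Y_i\mid Y_{<i}=y_{<i})=h(q_i\ast\delta)$ is valid because $N_i$ is independent of $(X_i,Y_{<i})$, and the peeling step is sound: the Markov chain $(X_i,Z)-X_j-Y_j$ holds since $N_j$ is independent of everything in $Z$, the conditional SDPI $I(X_i;Y_j\mid Z)\le(1-2\delta)^2 I(X_i;X_j\mid Z)$ rearranges exactly to $H(X_i\mid Z,Y_j)\ge(1-\lambda)H(X_i\mid Z)+\lambda H(X_i\mid Z,X_j)$, and the final bookkeeping $\sum_i\lambda\,\mathbbm{E}_{\Gamma'\subseteq[i-1]}[1-H(X_i\mid X_{\Gamma'})]=\mathbbm{E}_{\Gamma\sim\lambda}[|\Gamma|-H(X_\Gamma)]$ is just the chain rule for $H(X_\Gamma)$ together with the independence of the coordinate memberships. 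This coordinate-by-coordinate replacement of noisy observations by $\lambda$-erased ones is also the spirit of Samorodnitsky's own proof, so you are not far from the original route.

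Two caveats. First, your tightness remark is wrong: the one-dimensional inequality $1-h(q\ast\delta)\le(1-2\delta)^2(1-h(q))$ is \emph{not} tight at $q\in\{0,1\}$, where it reads $1-h(\delta)\le(1-2\delta)^2$ and is strict for $\delta\in(0,1/2)$; equality holds only in the limit $q\to 1/2$ (and correspondingly the theorem is not tight for indicators of single points). Second, the two analytic inputs you lean on---that sharp one-dimensional bound and the fact that the KL contraction coefficient of ${\text{\rm BSC}}(\delta)$ equals $(1-2\delta)^2$ (Ahlswede--G\'acs), which is what licenses the conditional SDPI with that exact constant---are classical but genuinely nontrivial and are used without proof. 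As written, your argument is a correct and complete reduction of the theorem to these two standard facts rather than a self-contained proof; given that the statement itself is a citation, that is an acceptable level of rigor, but those two facts should be explicitly cited.
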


\begin{theorem}[\cite{samorodnitsky2020improved}, Theorem~1.1]\label{thm: Samorodnitsky's inequality for norms}
    Let $f$ be a nonnegative function on $\cH_n$ and $\alpha \geq 2$ be an integer. Then
    \begin{equation}\label{eq:tdq}
        \log\|T_{\delta}f\|_{\alpha}  \leq \mathbbm{E}_{\Gamma \sim \lambda}\log\|\mathbbm{E}(f|\Gamma)\|_{\alpha} . 
    \end{equation}
    where $\lambda = \lambda(\alpha,\delta) = 1+\frac{1}{\alpha -1}\log(\delta^{\alpha}  +(1-\delta)^{\alpha} ) = 1-h_{\alpha} (\delta)$.
    Furthermore,
    \begin{equation}\label{eq:tdinf}
        \log\|T_{\delta}f\|_{\infty} \leq \mathbbm{E}_{\Gamma \sim \lambda}\log\|\mathbbm{E}(f|\Gamma)\|_{\infty}. 
    \end{equation}
    where $\lambda = \lambda(\infty,\delta) = 1+\log(1-\delta)= 1-h_{\infty}(\delta) $
\end{theorem}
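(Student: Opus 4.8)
The plan is to prove \eqref{eq:tdq} for integer $q\ge 2$ by induction on $n$, and then deduce \eqref{eq:tdinf} by letting $q\to\infty$: for fixed nonnegative $f\not\equiv 0$ one has $\|T_\delta f\|_q\uparrow\|T_\delta f\|_\infty$ and $\|\mathbbm{E}(f|\Gamma)\|_q\to\|\mathbbm{E}(f|\Gamma)\|_\infty$ for each of the finitely many $\Gamma\subseteq[n]$, while the parameter $\lambda(q,\delta)=1-h_q(\delta)$ and hence the law of $\Gamma\sim\lambda(q,\delta)$ converge to their $q=\infty$ counterparts, so \eqref{eq:tdinf} follows by passing to the limit (the cases $f\equiv 0$ and $\delta\in\{0,\tfrac12\}$ being trivial). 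So take $q\ge 2$ an integer.

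For the induction, single out the last coordinate and identify a function on $\cH_n$ with the pair $(f_0,f_1)$ of its restrictions to $\{x_n=0\}$ and $\{x_n=1\}$. A short computation gives $T_\delta f=(T'_\delta g_0,\,T'_\delta g_1)$, where $T'_\delta$ is the Bernoulli operator on $\cH_{n-1}$, $g_0=(1-\delta)f_0+\delta f_1$, $g_1=\delta f_0+(1-\delta)f_1$, so $\|T_\delta f\|_q^q=\tfrac12\big(\|T'_\delta g_0\|_q^q+\|T'_\delta g_1\|_q^q\big)$. Applying the induction hypothesis to $g_0,g_1$ on $\cH_{n-1}$ and then using convexity of the log-sum-exp function together with Jensen's inequality to pull the expectation over $\Gamma'\subseteq[n-1]$ outside the $\tfrac12(e^{(\cdot)}+e^{(\cdot)})$, the inductive step reduces to the following \emph{two-function inequality}: for nonnegative $u_0,u_1$ on $\cH_{n-1}$ and $\lambda=1-h_q(\delta)$,
\begin{equation}\label{eq:2funcineq}
\tfrac12\Big(\|(1-\delta)u_0+\delta u_1\|_q^q+\|\delta u_0+(1-\delta)u_1\|_q^q\Big)\ \le\ \Big(\tfrac12\big(\|u_0\|_q^q+\|u_1\|_q^q\big)\Big)^{\lambda}\Big\|\tfrac{u_0+u_1}{2}\Big\|_q^{q(1-\lambda)};
\end{equation}
one takes $u_b=\mathbbm{E}(f_b|\Gamma')$, so the two factors on the right become $\|\mathbbm{E}(f|\Gamma'\cup\{n\})\|_q^q$ and $\|\mathbbm{E}(f|\Gamma')\|_q^q$, matching the two ways $\Gamma\sim\lambda$ can treat the last coordinate.

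To prove \eqref{eq:2funcineq} I would reduce to scalars. Applying the scalar inequality stated below pointwise to the values $u_0(x),u_1(x)$ and integrating in $x$ bounds the left side of \eqref{eq:2funcineq} by $\mathbbm{E}_x\!\big[A(x)^{\lambda}B(x)^{1-\lambda}\big]$ with $A(x)=\tfrac12\big(u_0(x)^q+u_1(x)^q\big)$ and $B(x)=\big(\tfrac{u_0(x)+u_1(x)}{2}\big)^q$; a single H\"older step with exponents $1/\lambda,\,1/(1-\lambda)$ then turns this into $(\mathbbm{E}_xA)^{\lambda}(\mathbbm{E}_xB)^{1-\lambda}$, which is exactly the right side of \eqref{eq:2funcineq}. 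The remaining scalar statement is: for all $a,b\ge 0$,
\begin{equation*}
\tfrac12\big(((1-\delta)a+\delta b)^q+(\delta a+(1-\delta)b)^q\big)\ \le\ \Big(\tfrac{a^q+b^q}{2}\Big)^{\lambda}\Big(\tfrac{a+b}{2}\Big)^{q(1-\lambda)}.
\end{equation*}

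This scalar inequality is the heart of the matter, and I expect it to be the main obstacle. Both sides being $q$-homogeneous, set $\tfrac{a+b}{2}=1$, write $a=1+s,\ b=1-s$ with $s\in[0,1]$, and put $G(s)=\tfrac12\big((1+s)^q+(1-s)^q\big)$; the inequality becomes $G((1-2\delta)s)\le G(s)^{\lambda}$, an equality at $s=0$ and at $s=1$ (the latter precisely because $\lambda=1-h_q(\delta)$). I would prove it by showing that the ``exponent ratio'' $s\mapsto \log G((1-2\delta)s)/\log G(s)$ is nondecreasing on $(0,1]$, so that it stays below its endpoint value $\lambda$ at $s=1$; in the variable $t=\log s$ this monotonicity is equivalent to the log-concavity of $t\mapsto\log G(e^{t})$ on $(-\infty,0]$ (equivalently, concavity of $t\mapsto\log\log G(e^{t})$), which for $q=2$ is just the estimate $\ln(1+x)\le x$ and in general should follow from the analogous elementary inequality for the power series $G(e^t)=\sum_i\binom{q}{2i}e^{2it}$. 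Granting the scalar inequality, \eqref{eq:2funcineq} and the inductive step close, the base case $n=0$ is trivial, and the theorem follows.
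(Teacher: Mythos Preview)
The paper does not prove this theorem; it is quoted verbatim from Samorodnitsky \cite{samorodnitsky2020improved} and used as a black box. So there is no in-paper proof to compare against. That said, your overall architecture---induction on $n$, splitting off the last coordinate to write $T_\delta f=(T'_\delta g_0,T'_\delta g_1)$, using convexity of log-sum-exp to push $\mathbb{E}_{\Gamma'}$ through, and reducing to a two-function inequality that is then handled pointwise plus one H\"older step---is exactly the tensorization scheme Samorodnitsky uses, and your bookkeeping (identifying $\tfrac12(\|u_0\|_q^q+\|u_1\|_q^q)$ with $\|\mathbb{E}(f|\Gamma'\cup\{n\})\|_q^q$ and $\|\tfrac{u_0+u_1}{2}\|_q$ with $\|\mathbb{E}(f|\Gamma')\|_q$) is correct. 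The limiting argument for $\alpha=\infty$ is also fine.

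The one genuine gap is precisely where you flag it: the scalar two-point inequality $G((1-2\delta)s)\le G(s)^\lambda$ with $G(s)=\tfrac12((1+s)^q+(1-s)^q)$. Your reduction to ``$t\mapsto\log\log G(e^t)$ is concave on $(-\infty,0]$'' is a valid sufficient criterion, and you verify it cleanly for $q=2$ (indeed for $q=3$ as well, since then $G(s)=1+3s^2$ and the same $\ln(1+u)\le u$ argument applies). But for $q\ge 4$ the polynomial $G$ has more than two terms and the phrase ``should follow from the analogous elementary inequality for the power series'' is not a proof. This two-point lemma is in fact the nontrivial core of Samorodnitsky's argument, and he does not dispatch it in one line; if you want a self-contained proof you will need to either establish the claimed log-concavity for general integer $q$ (which is doable but requires real work) or supply an alternative argument for the scalar inequality. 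Everything else in your proposal closes once that lemma is in hand.
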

To interpret inequalities \eqref{eq:tdq}, \eqref{eq:tdinf}, we note that their left-hand side measures the smoothness of the noisy version of $f$ with respect to the noise $\beta_{\delta}$. At the same time, the right-hand side is the average smoothness of the noisy versions of $f$ with respect to the sub-cube pmf's.

H{\k a}z\l{}a \emph{et al.} \cite{hkazla2021codes} used Theorem \ref{thm: Samorodnitsky's inequality for norms} to great effect, showing that if a code corrects erasures up to a certain noise level in a BEC, then with high probability it corrects errors on a BSC channel up to a certain noise level.

\begin{theorem}[\cite{hkazla2021codes}, Corollary~3.4]\label{prop: Good erasure correcting codes in in BSC}
 Let $(\cC_n)_n$ be a sequence of codes whose rate approaches $R$. Assume 
that for some $\lambda \in (0,1-R]$, $P_B({\text{\rm BEC}}(\lambda),\cC_n) = o(\frac{1}{n})$. Then $(\cC_n)_n$ decodes errors on a ${\text{\rm BSC}}(\delta)$ for 
any $\delta$ that satisfies $2\sqrt{\delta(1-\delta)}<2^{\lambda}-1.$\
\end{theorem}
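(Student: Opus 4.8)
The plan is to route through smoothing of the \emph{dual} code. Since the argument (and the notion of dual code) needs it, take $\cC_n$ linear. Write $\theta := 2\sqrt{\delta(1-\delta)}$ and fix $\delta'\in[0,1/2]$ by $(1-2\delta')^2 = \theta$, i.e. $\delta' = \tfrac12(1-\sqrt\theta)$. First I would apply Lemma~\ref{lemma: smoothing and erasure} with $\alpha = 2$ to the code $\cC_n^\bot$ (so that $(\cC_n^\bot)^\bot = \cC_n$): it yields $D_2(T_{\delta'}f_{\cC_n^\bot}\|U_n)\le H\big(X_{\cC_n}\mid Y_{X_{\cC_n},\mathrm{BEC}(\lambda_0)}\big)$ with $\lambda_0 = 1-h_2(\delta')$. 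A one-line computation, $(\delta')^2+(1-\delta')^2 = 1-2\delta'(1-\delta') = \tfrac{1+\theta}{2}$, gives $\lambda_0 = \log(1+2\sqrt{\delta(1-\delta)})$ --- this is exactly where the threshold in the statement is born: the hypothesis $2\sqrt{\delta(1-\delta)} < 2^\lambda - 1$ together with $\lambda\le 1-R$ says precisely that $\lambda_0 < \lambda\le 1-R$. Since the BEC is monotone in the erasure probability, $P_B(\mathrm{BEC}(\lambda_0),\cC_n)\le P_B(\mathrm{BEC}(\lambda),\cC_n) = o(1/n)$, and Fano's inequality together with $\log|\cC_n|\le n$ (this is precisely why the hypothesis must be $o(1/n)$ and not merely $o(1)$) forces $H(X_{\cC_n}\mid Y_{X_{\cC_n},\mathrm{BEC}(\lambda_0)})\to 0$. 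Hence $D_2(T_{\delta'}f_{\cC_n^\bot}\|U_n)\to 0$. The substance of Lemma~\ref{lemma: smoothing and erasure} here is Samorodnitsky's norm inequality (Theorem~\ref{thm: Samorodnitsky's inequality for norms}): it bounds $\log\|T_{\delta'}\1_{\cC_n^\bot}\|_2$ by the average over random coordinate sets $\Gamma\sim\lambda_0$ of $\log\|\mathbbm{E}(\1_{\cC_n^\bot}\mid\Gamma)\|_2$, and each sub-cube average is the indicator of a puncturing of $\cC_n^\bot$, dual to a shortening of $\cC_n$, whose $L_2$ norm is controlled by whether $\cC_n$ recovers from the erasure pattern $[n]\setminus\Gamma$.

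Next I would unfold this $L_2$-smoothness via MacWilliams/Parseval. Applying Proposition~\ref{prop: L2-smth} to $\cC_n^\bot$ with kernel $\beta_{\delta'}$, and using $\widehat{\beta_{\delta'}}(k) = 2^{-n}(1-2\delta')^k$ together with the fact that the dual distance distribution of $\cC_n^\bot$ is the weight distribution of $\cC_n$, one obtains $\|2^nT_{\delta'}f_{\cC_n^\bot}\|_2^2 = \sum_{k=0}^n(1-2\delta')^{2k}A_k(\cC_n) = \sum_{c\in\cC_n}\theta^{|c|}$. This equals $1+\sum_{c\in\cC_n\setminus\{0\}}\theta^{|c|}$, and by the previous step together with \eqref{eq: Renyi smoothness} it tends to $1$; therefore $\sum_{c\in\cC_n\setminus\{0\}}(2\sqrt{\delta(1-\delta)})^{|c|}\to 0$. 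The union (Bhattacharyya) bound $P_B(\mathrm{BSC}(\delta),\cC_n)\le \sum_{c\in\cC_n\setminus\{0\}}(2\sqrt{\delta(1-\delta)})^{|c|}$, valid for $\delta<1/2$ (which the hypothesis entails), then finishes the proof.

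Everything outside the first paragraph is essentially bookkeeping --- the identity $1-h_2(\delta') = \log(1+2\sqrt{\delta(1-\delta)})$, monotonicity of the BEC in the erasure probability, and the MacWilliams computation --- so the conceptual weight sits in Lemma~\ref{lemma: smoothing and erasure}. There the main obstacles are: (i) recognizing that $\alpha = 2$ is the right moment to use --- a larger $\alpha$ only raises the required erasure level $1-h_\alpha(\delta')$, while a smaller exponent loses the quadratic form that Proposition~\ref{prop: L2-smth} needs; (ii) handling the rare ``bad'' coordinate sets $\Gamma$ in Samorodnitsky's average, whose sub-cube $L_2$ norms can be exponentially large, which is exactly where the $o(1/n)$ decay of the erasure error is spent; and (iii) passing cleanly from the resulting norm bound back to a statement about $P_B$. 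The strict inequality in the hypothesis buys the margin $\lambda_0 < \lambda$ needed to invoke the hypothesis and to turn the $o(1/n)$ bound into an honest limit.
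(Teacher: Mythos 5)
Your proposal is correct and is essentially the argument of the cited source: the paper itself does not reprove this theorem (it imports it from \cite{hkazla2021codes}), but it describes exactly your route in Section~\ref{sec: ball noise} (``representing the $D_2$-smoothness \dots as a potential energy form and comparing it to the Bhattacharya bound for the dual codes''), and your chain --- Fano plus BEC monotonicity, Lemma~\ref{lemma: smoothing and erasure} with $\alpha=2$ applied to $\cC_n^\bot$, Proposition~\ref{prop: L2-smth} to identify $\|2^nT_{\delta'}f_{\cC_n^\bot}\|_2^2$ with $\sum_{c\in\cC_n}(2\sqrt{\delta(1-\delta)})^{|c|}$, then the Bhattacharyya union bound --- is that proof, with the threshold computation $1-h_2(\delta')=\log(1+2\sqrt{\delta(1-\delta)})$ done correctly. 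The only (standard, and correctly flagged) caveat is the implicit restriction to linear codes, which the original Corollary~3.4 also assumes.
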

This theorem implies the following corollary.

\begin{corollary}[\cite{hkazla2021codes}]\label{cor: BEC capacity achieving codes in BSC} 
Let $(\cC_n)_n$ be a sequence of codes with rate $R_n\uparrow R$ that recover transmitted messages with high probability  on a ${\text{\rm BEC}}(1-R)$ (i.e., $(\cC_n)_n$ is a capacity achieving sequence for ${\text{\rm BEC}}(1-R)$). Furthermore, assume that $d(\cC_n) = \omega(\log n)$. If $2\sqrt{\delta(1-\delta)}<2^{1-R}-1$, then with high probability, the codes $\cC_n$ correct errors when used on a ${\text{\rm BSC}}(\delta)$ channel. 
\end{corollary}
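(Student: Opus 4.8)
The plan is to deduce this from Theorem~\ref{prop: Good erasure correcting codes in in BSC}. That theorem asks for a \emph{polynomially fast} decay of the erasure block-error probability, $P_B({\text{\rm BEC}}(\lambda),\cC_n)=o(1/n)$ for some $\lambda\in(0,1-R]$, whereas the hypothesis of the corollary only furnishes $P_B({\text{\rm BEC}}(1-R),\cC_n)\to 0$. Closing this gap is the crux of the argument, and it is exactly where the minimum-distance hypothesis $d(\cC_n)=\omega(\log n)$ is used; once it is closed, the rest is a direct substitution into Theorem~\ref{prop: Good erasure correcting codes in in BSC} together with a limit.

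First I would fix an arbitrary $\epsilon\in(0,1-R]$ and put $\lambda=1-R-\epsilon$. Since ${\text{\rm BEC}}(1-R)$ is a stochastically degraded version of ${\text{\rm BEC}}(\lambda)$ (the cleaner channel erases a subset of the symbols), one can couple the two erasure patterns so that success of erasure decoding on ${\text{\rm BEC}}(1-R)$ forces success on ${\text{\rm BEC}}(\lambda)$; hence $P_B({\text{\rm BEC}}(\lambda),\cC_n)\le P_B({\text{\rm BEC}}(1-R),\cC_n)\to 0$. Moreover $R_n\uparrow R<1-\lambda$, so the sequence operates below the capacity of ${\text{\rm BEC}}(\lambda)$ with a fixed gap $\epsilon$. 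At this point I would invoke \cite[Theorem~5.2]{tillich2000discrete}, exactly as in the proof of Theorem~\ref{thm: Bernoulli Smoothing - BEC capacity-achieving codes}: a sequence that decodes ${\text{\rm BEC}}(\lambda)$ with vanishing error and has $d(\cC_n)=\omega(\log n)$ in fact satisfies $P_B({\text{\rm BEC}}(\lambda),\cC_n)=o(1/n)$.

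Now Theorem~\ref{prop: Good erasure correcting codes in in BSC} applies with this value of $\lambda$, so $\cC_n$ decodes errors on ${\text{\rm BSC}}(\delta)$ for every $\delta$ with $2\sqrt{\delta(1-\delta)}<2^{\lambda}-1=2^{1-R-\epsilon}-1$. Finally I would let $\epsilon\downarrow 0$: the threshold $2^{1-R-\epsilon}-1$ increases to $2^{1-R}-1$, so any $\delta$ meeting the strict inequality $2\sqrt{\delta(1-\delta)}<2^{1-R}-1$ of the statement is handled by a sufficiently small $\epsilon$, which completes the proof.

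The main obstacle is the middle step, i.e., upgrading ``error probability $\to 0$'' to ``error probability $=o(1/n)$''. This is not automatic for an arbitrary capacity-achieving sequence and genuinely requires both the growing-distance hypothesis and the external input of \cite{tillich2000discrete} (the same tool the paper already uses for dual codes); by contrast, the degradation/coupling comparison and the $\epsilon\downarrow 0$ limit are entirely routine.
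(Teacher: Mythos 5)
Your proposal is correct and follows essentially the route the paper intends: pass to $\lambda=1-R-\epsilon$, use the minimum-distance hypothesis together with \cite[Theorem~5.2]{tillich2000discrete} to upgrade the vanishing erasure error probability to $o(1/n)$ strictly below the threshold, feed this into Theorem~\ref{prop: Good erasure correcting codes in in BSC}, and let $\epsilon\downarrow 0$ — exactly the same chain the paper runs for the dual codes in the proof of Theorem~\ref{thm: Bernoulli Smoothing - BEC capacity-achieving codes}. The only cosmetic caveat is that the Tillich--Z\'emor step should be phrased as using the gap $\epsilon$ between $\lambda$ and the erasure rate $1-R$ at which the error vanishes (the sharp-threshold width is controlled by $d(\cC_n)$), rather than as a self-improvement at $\lambda$ itself; your setup already supplies this gap, so the argument is sound.
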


The authors of \cite{hkazla2021codes} then used this result to show that RM codes of a constant rate correct a nonvanishing proportion of errors on the BSC.

\vspace*{.1in}
\section{Proof of Lemma~\ref{lemma: smoothing and erasure}} \label{appendix: proof: lemma: smoothing and erasure}

We present the proof as a sequence of lemmas.

Let  $\Gamma \subset [n]$ be a subset of coordinates and for $z\in\{0,1\}^n$ let $\cC(\Gamma,z):=\{c\in\cC: c|_\Gamma=z|_\Gamma\}$ be the set of codewords that fit $z$ in the positions of $\Gamma.$ 
In particular, $\cC^{\Gamma^c}=\cC(\Gamma,0)|_{\Gamma^c}$ is the shortened code $\cC$, i.e., the subcode
with zeros in the positions of $\Gamma,$ projected on $\Gamma^c.$
Let $F^{(\cC)}(\Gamma,z):=|\cC(\Gamma,z)|$. 

Let us obtain expressions for the norms and the entropy of $F^{(\cC)}(\Gamma,z).$

\begin{lemma}\label{lemma: Collision function- norms}
 Let $\cC$ be a linear code and let $\Gamma \in [n]$. Then
    \begin{align*} 
        \|F^{(\cC)}(\Gamma,\cdot)\|_\alpha =  \left[  \frac{|\cC|}{2^{|\Gamma|}} F^{(\cC)}(\Gamma,0)^{\alpha -1}\right]^{1/{\alpha} }.
    \end{align*}
\end{lemma}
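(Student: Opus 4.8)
Since $\cC$ is linear, the set $\cC(\Gamma,0) = \{c\in\cC : c|_\Gamma = 0\}$ is a subgroup of $\cC$, and for any $z$ the set $\cC(\Gamma,z)$ is either empty or a coset of $\cC(\Gamma,0)$; in either case $F^{(\cC)}(\Gamma,z)$ equals $F^{(\cC)}(\Gamma,0) = |\cC(\Gamma,0)|$ whenever it is nonzero. Thus $F^{(\cC)}(\Gamma,\cdot)$ is a function on $\{0,1\}^{|\Gamma|}$ (it only depends on $z|_\Gamma$) taking exactly two values: $F^{(\cC)}(\Gamma,0)$ on the set $\Pi:=\{z|_\Gamma : z\in\cH_n,\ \cC(\Gamma,z)\neq\emptyset\}$ of ``reachable'' projection patterns, and $0$ elsewhere.

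\textbf{Counting the support.} The key quantity is $|\Pi|$. The map $c\mapsto c|_\Gamma$ is a group homomorphism from $\cC$ onto $\Pi$ with kernel $\cC(\Gamma,0)$, so $|\Pi| = |\cC|/F^{(\cC)}(\Gamma,0)$. I would state this as the main substantive step.

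\textbf{Assembling the norm.} By the definition of the $\alpha$-norm in the excerpt, functions on $\cH_n$ are normalized by $2^n$; but $F^{(\cC)}(\Gamma,\cdot)$ depends only on the $|\Gamma|$ coordinates in $\Gamma$, so each value on $\{0,1\}^{|\Gamma|}$ is repeated $2^{n-|\Gamma|}$ times in $\cH_n$. Hence, for $\alpha\in(0,\infty)$,
\begin{align*}
  \|F^{(\cC)}(\Gamma,\cdot)\|_\alpha^\alpha
  &= \frac{1}{2^n}\sum_{x\in\cH_n} F^{(\cC)}(\Gamma,x)^\alpha
   = \frac{2^{n-|\Gamma|}}{2^n}\sum_{w\in\{0,1\}^{|\Gamma|}} F^{(\cC)}(\Gamma,w)^\alpha\\
  &= \frac{1}{2^{|\Gamma|}}\,|\Pi|\,F^{(\cC)}(\Gamma,0)^\alpha
   = \frac{1}{2^{|\Gamma|}}\cdot\frac{|\cC|}{F^{(\cC)}(\Gamma,0)}\cdot F^{(\cC)}(\Gamma,0)^\alpha
   = \frac{|\cC|}{2^{|\Gamma|}}\,F^{(\cC)}(\Gamma,0)^{\alpha-1}.
\end{align*}
Taking $\alpha$-th roots gives the claim; the case $\alpha=\infty$ follows by taking limits (or directly, since the max of $F^{(\cC)}(\Gamma,\cdot)$ is $F^{(\cC)}(\Gamma,0)$, matching the formula as $\alpha\to\infty$).

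\textbf{Main obstacle.} There is no real obstacle here — the statement is essentially bookkeeping. The only point requiring a little care is the normalization convention: because $\|\cdot\|_\alpha$ averages over all of $\cH_n$ while $F^{(\cC)}(\Gamma,\cdot)$ lives on the sub-cube indexed by $\Gamma$, one must not forget the factor $2^{n-|\Gamma|}/2^n = 2^{-|\Gamma|}$. I would make sure the linearity hypothesis is used exactly where it is needed, namely to guarantee that the nonzero values of $F^{(\cC)}(\Gamma,\cdot)$ are all equal (to $F^{(\cC)}(\Gamma,0)$) and that the image $\Pi$ is a subgroup of size $|\cC|/F^{(\cC)}(\Gamma,0)$.
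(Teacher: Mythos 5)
Your proof is correct and follows essentially the same route as the paper's: use linearity to show the nonzero values of $F^{(\cC)}(\Gamma,\cdot)$ all equal $F^{(\cC)}(\Gamma,0)$, count the support as $|\cC|/F^{(\cC)}(\Gamma,0)$ patterns each repeated $2^{n-|\Gamma|}$ times in $\cH_n$, and plug into the normalized norm. The only difference is cosmetic — you make the coset/homomorphism argument explicit where the paper states the count directly.
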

\begin{proof} 

From linearity of the code, 
\begin{equation*}
    F^{(\cC)}(\Gamma,z) = 
\begin{cases}
            F^{(\cC)}(\Gamma,0)  & \mbox{if } z|_{\Gamma} \mbox{ is a valid non-erasure pattern}\\
            0 & \mbox{otherwise.} 
	\end{cases}.
\end{equation*}
Furthermore, the number of distinct $z\in\cH_n$ for which $\cC(\Gamma,z)$ is nonempty equals 
  $
  2^{n-|\Gamma|}|\cC/\cC(\Gamma,0)|.
  $ 
  Hence,
    \begin{align*}
        \|F^{(\cC)}(\Gamma,\cdot)\|_{\alpha}  &= \left[ \frac{1}{2^n} \sum_{x\in \cH_n} F^{(\cC)}(\Gamma,x|_\Gamma)^{\alpha} \right]^{1/{\alpha} }\\
        & = \left[ \frac{1}{2^n} \frac{2^n |\cC|}{2^{|\Gamma|}} \frac{1}{F^{(\cC)}(\Gamma,0)} F^{(\cC)}(\Gamma,0)^{\alpha} \right]^{1/{\alpha} }\\
        & = \left[  \frac{|\cC|}{2^{|\Gamma|}} F^{(\cC)}(\Gamma,0)^{\alpha -1}\right]^{1/{\alpha} }.
    \end{align*}
\end{proof}
\begin{lemma}\label{lemma: Collision function - LHS of 3.4}
Let $\mathbbm{E}(f|\Gamma)$ be defined as in \eqref{eq:E}. Then,
    \begin{align*}
     \|\mathbbm{E}(2^nf_{\cC}|\Gamma)\|_{\alpha}  
         &=  \left[ \frac{2^{|\Gamma|}}{|\cC|} F^{(\cC)}_T(\Gamma,0) \right]^{(\alpha -1)/{\alpha} }.\\        
    \end{align*}
\end{lemma}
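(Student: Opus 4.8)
The plan is to mirror the calculation from Lemma~\ref{lemma: Collision function- norms}, but now applied to the \emph{averaging} operator $\mathbbm{E}(\cdot|\Gamma)$ rather than to the collision function itself. First I would unwind the definition: by \eqref{eq:E}, $\mathbbm{E}(2^n f_{\cC}|\Gamma)(x) = \frac{2^n}{2^{n-|\Gamma|}}\sum_{y:\,y|_\Gamma=x|_\Gamma} f_\cC(y) = \frac{2^{|\Gamma|}}{|\cC|}\,|\{c\in\cC: c|_\Gamma = x|_\Gamma\}| = \frac{2^{|\Gamma|}}{|\cC|} F^{(\cC)}(\Gamma,x|_\Gamma)$. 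So the function $\mathbbm{E}(2^n f_{\cC}|\Gamma)$ is just a scalar multiple of $F^{(\cC)}(\Gamma,\cdot)$, with scaling constant $2^{|\Gamma|}/|\cC|$.

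From there the proof is essentially a one-line substitution: $\|\mathbbm{E}(2^n f_{\cC}|\Gamma)\|_\alpha = \frac{2^{|\Gamma|}}{|\cC|}\,\|F^{(\cC)}(\Gamma,\cdot)\|_\alpha$, and plugging in Lemma~\ref{lemma: Collision function- norms} gives
\[
\frac{2^{|\Gamma|}}{|\cC|}\left[\frac{|\cC|}{2^{|\Gamma|}} F^{(\cC)}(\Gamma,0)^{\alpha-1}\right]^{1/\alpha}
= \left[\left(\frac{2^{|\Gamma|}}{|\cC|}\right)^{\alpha}\frac{|\cC|}{2^{|\Gamma|}} F^{(\cC)}(\Gamma,0)^{\alpha-1}\right]^{1/\alpha}
= \left[\frac{2^{|\Gamma|}}{|\cC|} F^{(\cC)}(\Gamma,0)\right]^{(\alpha-1)/\alpha}.
\]
I would then note that in the statement of the lemma the object is written $F^{(\cC)}_T(\Gamma,0)$; here one should identify this with $F^{(\cC)}(\Gamma,0)$ under the erasure-channel interpretation introduced just before (the subscript presumably records that $\Gamma$ plays the role of the erased/non-erased coordinate set determined by the channel), so the arithmetic above is exactly the claim. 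One should also record the harmless caveat: if $z|_\Gamma$ is not a valid (i.e.\ reachable) pattern for $\cC$, then $F^{(\cC)}(\Gamma,z)=0$ and both sides vanish, so the identity is really about the value on the support and the $\alpha$-norm computation already accounts for the correct count $2^{n-|\Gamma|}|\cC/\cC(\Gamma,0)|$ of nonzero entries, exactly as in the previous lemma.

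There is no real obstacle here — the content is entirely in Lemma~\ref{lemma: Collision function- norms}, and this lemma is a corollary obtained by tracking one normalization constant. The only point requiring a moment of care is getting the exponent bookkeeping right: the factor $2^{|\Gamma|}/|\cC|$ enters to the first power outside and to the $-1/\alpha$ power inside, so it survives with total exponent $1-1/\alpha = (\alpha-1)/\alpha$, which matches the $F^{(\cC)}(\Gamma,0)$ exponent and explains why the final expression is a clean $(\alpha-1)/\alpha$ power of $\frac{2^{|\Gamma|}}{|\cC|}F^{(\cC)}(\Gamma,0)$. I would present the proof in three short displayed lines: the identification $\mathbbm{E}(2^n f_\cC|\Gamma) = \frac{2^{|\Gamma|}}{|\cC|}F^{(\cC)}(\Gamma,\cdot)$, the pull-out of the constant, and the substitution of the previous lemma, ending with the simplification above.
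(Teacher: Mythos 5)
Your proposal is correct and follows exactly the same route as the paper's proof: identify $\mathbbm{E}(2^nf_{\cC}|\Gamma)(x)=\frac{2^{|\Gamma|}}{|\cC|}F^{(\cC)}(\Gamma,x|_\Gamma)$ from \eqref{eq:E}, pull out the constant, and substitute Lemma~\ref{lemma: Collision function- norms}, with the same exponent bookkeeping. Your reading of $F^{(\cC)}_T$ as simply $F^{(\cC)}$ (a stray subscript in the statement) is also the intended one.
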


\begin{proof}
Using $f = 2^nf_{\cC}$ in \eqref{eq:E}, we obtain
\begin{align*}
    \mathbbm{E}(2^nf_{\cC}|\Gamma)(x) &= \frac{2^n}{2^{n- |\Gamma|}|\cC|} \sum_{y\in \cC : y|_\Gamma = x|_\Gamma} 1=  \frac{2^{|\Gamma|}}{|\cC|} F^{(\cC)}(\Gamma,x|_\Gamma),
\end{align*} 
and thus from Lemma \ref{lemma: Collision function- norms},
 \begin{align*}
        \|\mathbbm{E}(2^nf_{\cC}|\Gamma)\|_{\alpha}  &= \frac{2^{|\Gamma|}}{|\cC|}\|F^{(\cC)}(\Gamma,\cdot)\|_{\alpha} \\
        &= \frac{2^{|\Gamma|}}{|\cC|} \left[ \frac{|\cC|}{2^{|\Gamma|}} F^{(\cC)}(\Gamma,0)^{\alpha -1} \right]^{1/{\alpha} } \\
         &=  \left[ \frac{2^{|\Gamma|}}{|\cC|} F^{(\cC)}_T(\Gamma,0) \right]^{(\alpha -1)/{\alpha} }.
\end{align*}   
\end{proof}

\begin{lemma}\label{lemma: Collision function - RHS of 3.4}
    Let $\cC$ be a linear code. For $X = X_{\cC^{\bot}}, Y = Y_{(X,{\text{\rm BEC}}(\lambda))}$,
    \begin{align}\label{eq:HXY}
        H(X|Y) = \mathbbm{E}_{\Gamma\sim \lambda}\Big[\log \Big(\frac{2^{|\Gamma|}}{{|\cC|}} F^{\cC}(\Gamma,0)\Big) \Big].
    \end{align}  
\end{lemma}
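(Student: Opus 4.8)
\noindent The plan is to condition on the erasure pattern of the channel, reduce the resulting conditional entropy to a dimension count, and then invoke the puncturing/shortening duality between $\cC$ and $\cC^{\bot}$ to rewrite that count in terms of the shortened code $\cC(\Gamma,0)$. Concretely, one use of ${\text{\rm BEC}}(\lambda)$ on the uniform input $X=X_{\cC^{\bot}}$ transmits each coordinate faithfully with probability $1-\lambda$ and erases it with probability $\lambda$, independently across coordinates; writing $\Gamma$ for the random set of erased positions we have $\Gamma\sim\lambda$, $\Gamma$ independent of $X$, and the output $Y$ is informationally equivalent to the pair $(\Gamma,X|_{\Gamma^{c}})$. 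Hence $H(X\mid Y)=\mathbbm{E}_{\Gamma\sim\lambda}\,H(X\mid X|_{\Gamma^{c}},\Gamma)$, so it suffices to evaluate the inner term for a fixed erasure set $\Gamma=\gamma$.

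Fix $\gamma\subseteq[n]$ and set $K_{\gamma}:=\{x\in\cC^{\bot}:x|_{\gamma^{c}}=0\}$. Since $X$ is uniform on the linear code $\cC^{\bot}$ and independent of $\Gamma$, its restriction $X|_{\gamma^{c}}$ is uniform on the projection of $\cC^{\bot}$ onto the coordinates in $\gamma^{c}$, and each nonempty fibre $\{x\in\cC^{\bot}:x|_{\gamma^{c}}=v\}$ is a coset of $K_{\gamma}$, hence has cardinality $|K_{\gamma}|$. Therefore $X$ conditioned on $X|_{\gamma^{c}}$ is uniform on a set of size $|K_{\gamma}|$ regardless of the observed value, so $H(X\mid X|_{\gamma^{c}},\Gamma=\gamma)=\log|K_{\gamma}|$. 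It remains to identify $|K_{\gamma}|$ with $2^{|\gamma|}F^{(\cC)}(\gamma,0)/|\cC|$, after which averaging over $\Gamma\sim\lambda$ gives \eqref{eq:HXY}.

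For the cardinality of $K_{\gamma}$ I would pass to the ambient space indexed by $\gamma$: a vector $x$ vanishing off $\gamma$ is orthogonal to all of $\cC$ iff $x|_{\gamma}$ is orthogonal (in $\ff_{2}^{|\gamma|}$) to $c|_{\gamma}$ for every $c\in\cC$, so the map $x\mapsto x|_{\gamma}$ is a bijection from $K_{\gamma}$ onto the dual of the punctured code $\cC(\gamma,\cdot)$ restricted to $\gamma$, giving $\log|K_{\gamma}|=|\gamma|-\dim(\cC|_{\gamma})$. Applying rank--nullity to the puncturing map $\cC\to\ff_{2}^{|\gamma|}$, whose kernel is exactly $\cC(\gamma,0)$ — the same counting already used in the proof of Lemma~\ref{lemma: Collision function- norms} — yields $\dim(\cC|_{\gamma})=\log|\cC|-\log F^{(\cC)}(\gamma,0)$. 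Combining the two identities gives $\log|K_{\gamma}|=|\gamma|-\log|\cC|+\log F^{(\cC)}(\gamma,0)=\log\!\bigl(2^{|\gamma|}F^{(\cC)}(\gamma,0)/|\cC|\bigr)$, and substituting this into $H(X\mid Y)=\mathbbm{E}_{\Gamma\sim\lambda}\log|K_{\Gamma}|$ completes the argument.

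The step I expect to be the crux is this last duality/dimension manipulation: one must check carefully that $x\mapsto x|_{\gamma}$ is injective on $K_{\gamma}$ and that the orthogonal complement is taken inside the space indexed by $\gamma$ (not $\ff_{2}^{n}$), so that the two rank computations compose correctly; the conditional-entropy reduction itself is routine. An equivalent route that avoids explicitly naming the dual code uses a generator matrix $G$ of $\cC^{\bot}$: for a fixed erasure set $\gamma$ the conditional entropy equals $\dim\cC^{\bot}-\dim\bigl((\cC^{\bot})|_{\gamma^{c}}\bigr)=\dim\{x\in\cC^{\bot}:x|_{\gamma^{c}}=0\}=\log|K_{\gamma}|$, but one still needs the same duality identity to bring $F^{(\cC)}(\gamma,0)$ into play, so the path above seems cleanest given what is already set up in this appendix.
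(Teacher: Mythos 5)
Your proof is correct and follows essentially the same route as the paper's: reduce $H(X\mid Y)$ to the expected log-size of the ambiguity set for a fixed erasure pattern, then use the shortening/puncturing duality between $\cC$ and $\cC^{\bot}$ to express that size as $2^{|\Gamma|}F^{(\cC)}(\Gamma,0)/|\cC|$. The only difference is presentational: you work with $X_{\cC^{\bot}}$ directly and prove the duality identity $\log|K_{\gamma}|=|\gamma|-\log|\cC|+\log F^{(\cC)}(\gamma,0)$ from scratch via rank--nullity, whereas the paper starts from $X_{\cC}$, cites the corresponding dual-matroid identity from Oxley, and dualizes at the end.
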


\begin{proof} Start with taking $X = X_{\cC}$ and $ Y = Y_{({\text{\rm BEC}}(\lambda), X_{\cC})}$, then
    \begin{align*}
        H(X|Y=y) = \log[F^{(\cC)}(y)].
    \end{align*}
Therefore,
    \begin{align*}
        H(X|Y) & = \mathbbm{E}_Y[\log(F^{(\cC)}(Y))]\\
        & = \mathbbm{E}_{\Gamma}\mathbbm{E}_{Z|\Gamma}[\log(F^{(\cC)}(\Gamma,Z))|\Gamma]\\
         & = \mathbbm{E}_{\Gamma\sim 1-\lambda}[\log F^{(\cC)}(\Gamma,0)].
         \end{align*}
By a standard identity about dual matroids \cite[p.72]{Oxley1992},
 $$
        \dim(\cC^{\Gamma^c}) = \dim(\cC) - |\Gamma| + \dim(({\cC^\bot})^{\Gamma}),
 $$
or
  $$
  F^{(\cC)}(\Gamma,0) = \frac{|\cC|}{2^{|\Gamma|}} F^{\cC^{\bot}}(\Gamma^{c},0),
  $$
and thus we continue as follows:  
        \begin{align*}
         H(X|Y) & =  \mathbbm{E}_{\Gamma\sim 1-\lambda}\Big[\log\Big(\frac{|\cC|}{2^{|\Gamma|}} F^{\cC^\bot}(\Gamma^c,0)\Big)\Big]\\
        & = \mathbbm{E}_{\Gamma^c\sim \lambda}\Big[\log\Big(\frac{2^{|\Gamma^c|}}{{|\cC^\bot|}} F^{\cC^\bot}(\Gamma^c,0)\Big)\Big] \\
        & = \mathbbm{E}_{\Gamma\sim \lambda}\Big[\log\Big(\frac{2^{|\Gamma|}}{{|\cC^\bot|}} F^{\cC^\bot}(\Gamma,0)\Big)\Big].
    \end{align*}
Switching to the dual code and taking $X = X_{\cC^{\bot}} $ and $ Y = Y_{({\text{\rm BEC}}(\lambda), X_{\cC^{\bot}})}$ now yields
\eqref{eq:HXY}.
\end{proof}

\begin{lemma}\label{lemma: MacW}
   \begin{align*}
    \frac{\alpha}{\alpha -1} \mathbbm{E}_{\Gamma\sim \lambda}\log\|\mathbbm{E}(2^nf_{\cC}|\Gamma)\|_\alpha  =  \mathbbm{E}_{\Gamma\sim \lambda} \Ent[\mathbbm{E}(2^nf_{\cC}|\Gamma)] 
    = H(X_{\cC^{\bot}}|Y_{({\text{\rm BEC}}(\lambda), X_{\cC^{\bot}})}). \qedhere
\end{align*}  
\end{lemma}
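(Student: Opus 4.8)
The plan is to show that, for \emph{each} fixed coordinate subset $\Gamma\subset[n]$, all three quantities in the statement (before the average over $\Gamma$) reduce to the single number $\log\!\bigl(\tfrac{2^{|\Gamma|}}{|\cC|}F^{(\cC)}(\Gamma,0)\bigr)$. The lemma then follows at once by applying $\mathbbm{E}_{\Gamma\sim\lambda}$ and invoking Lemma~\ref{lemma: Collision function - RHS of 3.4}, which asserts precisely that $H(X_{\cC^{\bot}}\mid Y_{({\text{\rm BEC}}(\lambda),X_{\cC^{\bot}})})$ equals $\mathbbm{E}_{\Gamma\sim\lambda}\log\!\bigl(\tfrac{2^{|\Gamma|}}{|\cC|}F^{(\cC)}(\Gamma,0)\bigr)$.

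For the $\alpha$-norm term this is immediate from Lemma~\ref{lemma: Collision function - LHS of 3.4}: since $\|\mathbbm{E}(2^nf_{\cC}|\Gamma)\|_{\alpha} = \bigl[\tfrac{2^{|\Gamma|}}{|\cC|}F^{(\cC)}(\Gamma,0)\bigr]^{(\alpha-1)/\alpha}$, taking logarithms and multiplying by $\tfrac{\alpha}{\alpha-1}$ cancels the exponent $(\alpha-1)/\alpha$ and leaves $\log\!\bigl(\tfrac{2^{|\Gamma|}}{|\cC|}F^{(\cC)}(\Gamma,0)\bigr)$, independently of $\alpha$. For the entropy term I would first record two elementary facts: (i) the conditional-expectation map preserves the $L_1$-norm of a nonnegative function, so $\|\mathbbm{E}(2^nf_{\cC}|\Gamma)\|_1 = \|2^nf_{\cC}\|_1 = 1$; and (ii) by linearity of $\cC$ — the same case distinction used in the proof of Lemma~\ref{lemma: Collision function- norms} — the function $h:=\mathbbm{E}(2^nf_{\cC}|\Gamma)$ is two-valued, taking only the values $0$ and $c:=\tfrac{2^{|\Gamma|}}{|\cC|}F^{(\cC)}(\Gamma,0)$. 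From (i) and (ii) one gets $|\supp(h)| = 2^n/c$, so by \eqref{eq:Ent-d}
\[
  \Ent[h] \;=\; \|h\log h\|_1 \;=\; \frac{1}{2^n}\cdot\frac{2^n}{c}\cdot c\log c \;=\; \log c .
\]
(Equivalently, $h = 2^n\,\mathbbm{E}(f_{\cC}|\Gamma)$ with $\mathbbm{E}(f_{\cC}|\Gamma)$ the uniform distribution on a set of size $2^{n-|\Gamma|}|\cC|/F^{(\cC)}(\Gamma,0)$, whence $\Ent[h] = D\bigl(\mathbbm{E}(f_{\cC}|\Gamma)\|U_n\bigr) = \log c$ by \eqref{eq:E-D}.) Thus $\Ent[\mathbbm{E}(2^nf_{\cC}|\Gamma)]$ equals the same quantity $\log c$ for every $\Gamma$.

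Averaging the two resulting identities over $\Gamma\sim\lambda$ and applying Lemma~\ref{lemma: Collision function - RHS of 3.4} completes the proof. I do not expect a genuine obstacle here: the substantive content is the bookkeeping of the $2^n$-normalizations built into the definitions of $\|\cdot\|_{\alpha}$, $\Ent$, and $\mathbbm{E}(\cdot\,|\Gamma)$, together with the observation that linearity of $\cC$ forces $h$ to be constant on its support — once those two points are checked, the chain of equalities is pure substitution.
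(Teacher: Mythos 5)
Your proposal is correct, and for the central equality (first quantity $=$ third quantity) it is exactly the paper's argument: apply Lemma~\ref{lemma: Collision function - LHS of 3.4} to cancel the exponent $(\alpha-1)/\alpha$ and then invoke Lemma~\ref{lemma: Collision function - RHS of 3.4}. The only divergence is in the entropy identity: the paper dispatches it by noting that $\Ent[\,\cdot\,]$ is the $\alpha\to 1$ limiting case of $\frac{\alpha}{\alpha-1}\log\|\cdot\|_\alpha$ and that the latter has just been shown to be independent of $\alpha$, whereas you compute $\Ent[\mathbbm{E}(2^nf_{\cC}|\Gamma)]$ directly from the observation that this function is two-valued with unit $L_1$-norm. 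Both arguments are valid and give $\log\bigl(\tfrac{2^{|\Gamma|}}{|\cC|}F^{(\cC)}(\Gamma,0)\bigr)$ pointwise in $\Gamma$; your version is marginally more self-contained, while the paper's limit argument avoids re-examining the support structure of the conditional expectation.
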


\begin{proof}
 From Lemmas \ref{lemma: Collision function - LHS of 3.4} and \ref{lemma: Collision function - RHS of 3.4}, 
\begin{align*}
    \frac{\alpha}{\alpha -1} \mathbbm{E}_{\Gamma\sim \lambda}\log\|\mathbbm{E}(2^nf_{\cC}|\Gamma)\|_\alpha
    &= \mathbbm{E}_{\Gamma\sim \lambda}\Big[\log \Big(\frac{2^{|\Gamma|}}{{|\cC|}} F^{\cC}(\Gamma,0)\Big) \Big]\\ 
    &= H(X_{\cC^{\bot}}|Y_{({\text{\rm BEC}}(\lambda), X_{\cC^{\bot}})}),
\end{align*} 
which establishes the equality between the first and the third quantities. Since the second quantity is a limiting case of the first quantity and the value of the first quantity is independent of $\alpha$, we have equality between the first and the second quantities.
\end{proof}
Now Lemma \ref{lemma: smoothing and erasure} follows by
combining Lemma \ref{lemma: MacW} with Theorems \ref{thm: Samorodnitsky's inequality for entropy} and \ref{thm: Samorodnitsky's inequality for norms}.
\bibliographystyle{abbrvurl}
\bibliography{smoothing}

\end{document}